\documentclass[11pt]{article}
\usepackage[a4paper,margin=2.6cm]{geometry}
\usepackage{amsmath,amssymb,amsthm}
\usepackage{float}
\usepackage{array}
\usepackage{algorithm,algpseudocode}
\newcommand{\algheader}[2]{\bigskip\par\noindent\textbf{#1}\ \ #2\par\nobreak\vspace{2pt}\hrule height 0.4pt\nobreak\vspace{5pt}\par\nobreak}
\newcommand{\algfooter}{\par\vspace{3pt}\hrule height 0.4pt\bigskip}
\usepackage{url}
\usepackage{hyperref}

\newtheorem{theorem}{Theorem}[section]
\newtheorem{lemma}[theorem]{Lemma}
\newtheorem{proposition}[theorem]{Proposition}
\newtheorem{corollary}[theorem]{Corollary}

\theoremstyle{definition}
\newtheorem{definition}[theorem]{Definition}
\newtheorem{example}[theorem]{Example}
\theoremstyle{remark}
\newtheorem{remark}[theorem]{Remark}

\newcommand{\Q}{\mathbb Q}
\newcommand{\Z}{\mathbb Z}

\newcommand{\cA}{\mathcal A}
\newcommand{\cS}{\mathcal S}
\newcommand{\cU}{\mathcal U}
\newcommand{\cC}{\mathcal C}
\newcommand{\cK}{\mathcal K}
\newcommand{\cE}{\mathcal E}
\newcommand{\fE}{\mathfrak E}
\newcommand{\fp}{\mathfrak p}
\newcommand{\Fbar}{\overline{F}}
\newcommand{\den}{\operatorname{den}}
\newcommand{\Const}{\operatorname{Const}}

\newcommand{\Ei}{\operatorname{Ei}}
\newcommand{\li}{\operatorname{li}}
\newcommand{\erf}{\operatorname{erf}}
\newcommand{\erfc}{\operatorname{erfc}}
\newcommand{\erfi}{\operatorname{erfi}}
\newcommand{\Si}{\operatorname{Si}}
\newcommand{\Ci}{\operatorname{Ci}}
\newcommand{\Li}{\operatorname{Li}}
\newcommand{\res}{\operatorname{res}}
\newcommand{\GQ}{\mathcal{G}_{\Q}}

\title{Parallel Integration over Simple Radical Extensions III:\\ Antiderivatives in Terms of Special Functions}
\author{Sam Blake}
\date{\today}

\begin{document}
\maketitle

\begin{abstract}
We extend the parallel integration method for mixed towers of Part~II from elementary antiderivatives to antiderivatives in special functions. The class covered is the incomplete gamma function $\Gamma(s,\cdot)$ at rational $s$, which contains $\Ei$, $\li$, $\Si$, $\Ci$, $\erf$ and the Fresnel integrals, together with the elliptic integrals $F$, $E$, $\Pi$. Each special function enters through a \emph{kernel}: a known element of the tower whose antiderivative is that function. It is either fixed by residues or added as one more column of the single linear system, and no Risch differential equation is solved. We prove where kernels can have poles; the $\erf$ kernels live in the sub-critical window of radical towers. The denominator theory, degree bounds and certificates of Part~II carry over, and new criteria decide most places that Part~II leaves to a guess. Elliptic integrals are carried by the radical, and a non-torsion residue divisor becomes a third-kind term. We prove that special functions are introduced only when necessary. Elementary answers are returned unchanged, and in strict mode a special function comes with a certificate that the integrand has no elementary integral. When no complete answer is found, a partial answer with a reduced remainder is returned. All examples are computed by a SymPy implementation and verified by differentiation.
\end{abstract}

\section{Introduction}

A parallel integrator guesses the shape of an antiderivative and solves one linear system for its coefficients \cite{NormanMoore77,Davenport82,DavenportTrager85,Bronstein07}, \cite[Ch.~10]{Bronstein05}. Parts~I and II of this series \cite{PartI,PartII} proved the structural facts behind the guess for towers containing one simple radical, and turned the method into an algorithm that returns an elementary antiderivative, certifies non-elementarity, or reports failure honestly. When the antiderivative is not elementary it is often expressible in named special functions,
\[
\int e^{x^2}\,dx=\tfrac{\sqrt\pi}2\erfi(x),\qquad
\int\sqrt{\log x}\,dx=x\sqrt{\log x}-\tfrac{\sqrt\pi}2\erfi\bigl(\sqrt{\log x}\bigr),
\]
\[
\int\frac{dx}{\sqrt{e^{x}-e^{-x}}}=-F\bigl(\varphi\,\big|\,\tfrac12\bigr),\quad \sin\varphi=\sqrt{\tfrac{2}{e^x+1}},
\]
and a user of a computer algebra system expects to see these answers. This paper extends the parallel method of Part~II to them.

Liouville's theorem has been extended to special functions in two directions. Singer, Saunders and Caviness \cite{SSC85} allowed error functions and logarithmic integrals and showed that they enter the antiderivative linearly with constant coefficients; Cherry \cite{Cherry85,Cherry86} and Knowles \cite{Knowles92,Knowles93} turned this into recursive decision procedures for transcendental towers; Hebisch \cite{Hebisch15,Hebisch18} extended the principle to exponential integrals and incomplete gamma functions at rational parameter, with an implementation in FriCAS, and to polylogarithms \cite{Hebisch18b}; Baddoura \cite{Baddoura06} and Kaur and Srinivasan \cite{KaurSrinivasan19} treated dilogarithms. For elliptic integrals, Hebisch \cite{Hebisch21} proved the Liouville principle in the form we use, and Kumbhakar and Srinivasan \cite{KumbhakarSrinivasan23} showed that adjoining non-Liouvillian functions such as complete elliptic integrals of a variable modulus or Weierstrass functions never helps. On the parallel side, Bronstein's \texttt{pmint} \cite{pmint} and Norman's completion method \cite{Norman90} admit special functions \emph{in the integrand} as generators of the tower, and Raab \cite{Raab12,Raab13} extended the recursive algorithm to parametric problems. None of these works combines special functions in the answer with the parallel method, and none treats algebraic extensions at arbitrary positions of the tower.

The functions supported, and their precise definitions, are listed in Tables~\ref{tab:answers} and \ref{tab:integrands} (\S\ref{sec:functions}). The contributions are as follows.
\begin{enumerate}
\item \emph{Kernels} (Definition~\ref{def:kernel}): one first-order condition $D\omega/\omega=Dv+(s-1)Dv/v$ describes every term of the Liouville principle for $\Gamma(s,\cdot)$, including those whose factors $(-v)^{s-1}$ and $e^{v}$ do not lie in the tower while their product does (\S\ref{ex:hebisch22}).
\item \emph{Kernels at normal primes} (Proposition~\ref{prop:kernelnormal}): the $s=0$ kernels are critical with residue $v_P(v)\,\bar\xi$; the $s>0$ kernels are never critical and have poles only where $\delta_P\ge2$; the argument's zero multiplicities constrain $s$, and ramification multiplies them (Corollary~\ref{cor:parameter}).
\item \emph{Structure} (Theorem~\ref{thm:structureG}): Part~II's denominator bound and degree bounds hold with the integrand's valuation replaced by a minimum over the kernels, and the canonical residue decomposes into a logarithmic and an exponential part; a residue outside $\Fbar+\sum\Fbar\bar\xi$ certifies that no antiderivative exists in the class (Corollary~\ref{cor:rescert}).
\item \emph{Bounds} (Section~\ref{sec:bounds}): Part~II's cancellation theorem extended to residue fields with extra constants (Theorem~\ref{thm:extra}), three new criteria and the branch places over special primes (Proposition~\ref{prop:K567}), and the identification of the remaining heuristic places with the zeros of exponential elements (Proposition~\ref{prop:kerexp}); on the examples, every bound is proved in $14$ of $18$ cases instead of $8$.
\item \emph{Candidates} (Section~\ref{sec:candidates}): the $\Ei$ terms are determined by residues or by the values of logarithms of exponential elements at special places; the $\Gamma(s,\cdot)$ terms with $s>0$ by a finite Kummer search.
\item \emph{Elliptic integrals} (Section~\ref{sec:elliptic}): points of constant elliptic curves over the tower live on the radical (Proposition~\ref{prop:points}); $2g$ columns suffice for the first and second kinds (Proposition~\ref{prop:columns}); residues pull back as constants (Lemma~\ref{lem:respull}); the odd residues at a simple pole are removed by an explicit third-kind kernel (Proposition~\ref{prop:third}); and Legendre forms are derived for the presentation (Lemma~\ref{lem:legendre}).
\item \emph{Partial answers} (Section~\ref{sec:partial}): when no antiderivative is found, $\int f=I+\int r$ with $I$ in the class and a reduced remainder $r$ --- a normal form modulo the ansatz, with the Hermite part and every integrable residue removed --- and a certificate for $f$ is one for $\int r$ (Theorem~\ref{thm:partial}).
\item \emph{The algorithm} (Section~\ref{sec:algorithm}) and its SymPy implementation, which computes every example of Section~\ref{sec:examples} and verifies it by differentiation. It returns Part~II's answer unchanged whenever Part~II finds one, and in strict mode it returns a special function only when the integrand is certified to have no elementary integral (Theorem~\ref{thm:necessity}); an independence test proves every special term forced (Theorem~\ref{thm:independence}).
\end{enumerate}

The paper keeps the lines drawn in the series: one ring, one ansatz, one linear system; no recursion on the tower, no Risch differential equation, no integration by parts. A special function is \emph{parametric}: its kernel is a known element of the tower, and its coefficient is an unknown constant of the same linear system as the numerator of the rational part.

\section{Setting and Liouville principles}\label{sec:setting}

Throughout, $(L,D)$ is a tower satisfying (T1)--(T4) of \cite[\S2]{PartII}: $K_0=F(t_1,\dots,t_j)$ closed under $D$, a radical $y^m=q$ with $q\in R_0=F[t_1,\dots,t_j]$ normalised as in Part~I, upper generators $t_{j+1},\dots,t_n$ whose derivatives may involve $y$, and $\Const_D(L)=F$. We use the ring $\cA$, the valuations $v_P$ at height-one primes, the pole orders $\eta_P$ and shifts $\delta_P=1+\eta_P$, the normal/special dichotomy, the valuation lemma \cite[Thm.~4.5]{PartII}, the logarithmic-derivative lemma \cite[Lemma~5.3]{PartII}, the residues $\tau_P$ \cite[Thm.~7.4]{PartII} and the canonical residue $\hat\tau_P$ \cite[Prop.~7.6]{PartII}. $\Fbar$ is an algebraic closure of $F$.

\subsection{Kernels}

\begin{definition}[Kernel data]\label{def:kernel}
Let $s\in\Q\cap[0,1)$ and let $M\supseteq L$ be a differential field. A pair $(v,\omega)\in M\times M^*$ with $v\notin\Const(M)$ is an \emph{$s$-kernel datum} over $M$ if
\begin{equation}\label{eq:kernel}
\frac{D\omega}{\omega}\;=\;Dv+(s-1)\frac{Dv}{v}.
\end{equation}
Its \emph{kernel} is $K_s(v,\omega):=\omega\,Dv$. For $s=0$ we write $\xi:=v\omega$, so that $D\xi/\xi=Dv$ and $K_0=\xi\,Dv/v$.
\end{definition}

\begin{lemma}[Analytic meaning]\label{lem:analytic}
If $(v,\omega)$ is an $s$-kernel datum then $\omega=C\,(-v)^{s-1}e^{v}$ for a constant $C$ of a differential extension, and $\int K_s(v,\omega)=C\,\Gamma(s,-v)$. For $s=0$, $\int\xi\,Dv/v=\kappa\,\Ei(v)$ with the constant $\kappa=\xi e^{-v}$. Both hold up to additive constants that depend on the branch.
\end{lemma}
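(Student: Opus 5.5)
The plan is to compare $\omega$ with the analytic solution of the same first-order equation and then differentiate the special function. Fix a branch of $\log(-v)$ and work in a differential extension $M'\supseteq M$ containing $u:=(-v)^{s-1}e^{v}$, defined as $\exp\bigl((s-1)\log(-v)+v\bigr)$. Then
\[
\frac{Du}{u}=(s-1)\frac{D(-v)}{-v}+Dv=Dv+(s-1)\frac{Dv}{v}.
\]
This is exactly the right-hand side of \eqref{eq:kernel}. Hence $C:=\omega/u$ satisfies $DC/C=D\omega/\omega-Du/u=0$, so $DC=0$ and $\omega=C(-v)^{s-1}e^{v}$ with $C$ a constant of $M'$. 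The point to stress is that $C$ need not lie in $F$, or even in $M$: only the product $\omega$ is required to lie in the tower, as in \S\ref{ex:hebisch22}. That is why the statement says "a constant of a differential extension".

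Next comes the differentiation step. Take $\Gamma(s,z)=\int_z^\infty t^{s-1}e^{-t}\,dt$, so that $\frac{d}{dz}\Gamma(s,z)=-z^{s-1}e^{-z}$. The chain rule with $z=-v$ gives
\[
D\,\Gamma(s,-v)=-(-v)^{s-1}e^{v}\cdot D(-v)=(-v)^{s-1}e^{v}\,Dv,
\]
using the same branch of $(-v)^{s-1}$ as above. Multiplying by the constant $C$ yields $D\bigl(C\,\Gamma(s,-v)\bigr)=\omega\,Dv=K_s(v,\omega)$. To make this rigorous for an abstract derivation, I would adjoin $\Gamma(s,-v)$ to $M'$ as an element $g$ with $Dg:=uDv$; that is simply the definition of the primitive. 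Its identification with the analytic function is then the chain-rule computation above, valid in any analytic realization.

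For $s=0$, put $\xi=v\omega$. Then $D\xi/\xi=D\omega/\omega+Dv/v=Dv$, so $\kappa:=\xi e^{-v}$ is constant by the same ratio argument, and $K_0=\kappa e^{v}Dv/v$. Since $\frac{d}{dz}\Ei(z)=e^{z}/z$, we get $D\Ei(v)=e^{v}Dv/v$ and hence $\int K_0=\kappa\Ei(v)$. This also agrees with the general formula via $\Gamma(0,-v)=-\Ei(v)$ plus a branch constant ($\pm i\pi$) and $C=-\kappa$: indeed $\omega=\xi/v=\kappa e^{v}/v=-\kappa(-v)^{-1}e^{v}$.

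The main obstacle is the branch bookkeeping rather than any algebra. Changing the branch of $\log(-v)$ multiplies $u$ by $e^{2\pi i k(s-1)}$, a root of unity since $s\in\Q$, and this is absorbed into $C$. The identities $\Gamma(0,-v)=-\Ei(v)$ and the continuation of $\Gamma(s,\cdot)$ hold only up to additive constants depending on the sheet. Since all statements are about antiderivatives, that is, elements determined modulo constants, it suffices to note that on any simply connected region avoiding the zeros of $v$ each chosen branch gives an exact primitive. This is precisely the "up to additive constants" clause of the lemma.
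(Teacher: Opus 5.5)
Your proposal is correct and follows the paper's proof: \eqref{eq:kernel} makes $\omega(-v)^{1-s}e^{-v}$ constant, and the chain rule with $\partial_z\Gamma(s,z)=-z^{s-1}e^{-z}$ gives $D\,\Gamma(s,-v)=(-v)^{s-1}e^{v}Dv$. The only difference is at $s=0$: you differentiate $\Ei(v)$ directly, while the paper uses $\Gamma(0,-v)=-\Ei(v)$ modulo a branch constant, and you check that the two routes agree with $C=-\kappa$.
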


\begin{proof}
By \eqref{eq:kernel}, $D\bigl(\omega(-v)^{1-s}e^{-v}\bigr)=0$. Since $\partial_z\Gamma(s,z)=-z^{s-1}e^{-z}$, $D\,\Gamma(s,-v)=(-v)^{s-1}e^{v}Dv$. For $s=0$, $\Gamma(0,-v)=E_1(-v)=-\Ei(v)$ modulo a branch constant.
\end{proof}

The rational gamma extensions of Hebisch \cite[\S4]{Hebisch18} adjoin exponentials, logarithms, algebraic elements and integrals of the form $\theta'=v'u/v$ with $u'=v'u$ (exponential integrals) and $\theta'=wu'$ with $u'=v'u$, $w^k=v^{m}$, $-k<m<0$ (incomplete gamma functions). The first is $K_0$ with $\xi=u$, the second is $K_s$ with $s=(k+m)/k$ and $\omega=wu$ up to a constant. We write $\GQ$ for the class of functions in rational gamma extensions of $L$. The reformulation matters for radical towers: \eqref{eq:kernel} involves $\omega$ only through its logarithmic derivative, so the condition ``$wu\in L$'' of \cite[Thm.~18]{Hebisch18} becomes $\omega\in L$ --- one element, even when neither $w$ nor $u$ lies in $L$.

\begin{theorem}[{Liouville principle for $\GQ$; \cite[Thm.~1.1]{SSC85}, \cite[Thm.~13]{Hebisch18}}]\label{thm:liouville}
If $f\in L$ has an antiderivative in a rational gamma extension of $L$ with constant field $\Fbar$, then there are a finite algebraic extension $L'$ of $\Fbar L$, $v_0\in L'$, $c_1,\dots,c_r\in\Fbar$, $u_1,\dots,u_r\in L'^*$ and $s_l$-kernel data $(v_l,\omega_l)$ over $L'$ with $\gamma_l\in\Fbar$ such that
\begin{equation}\label{eq:liouG}
f\;=\;Dv_0+\sum_{i=1}^rc_i\frac{Du_i}{u_i}+\sum_{l=1}^k\gamma_l\,\omega_l\,Dv_l .
\end{equation}
\end{theorem}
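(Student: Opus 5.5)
The plan is to derive the statement from Hebisch's Liouville principle for rational gamma extensions \cite[Thm.~13]{Hebisch18} (which extends \cite[Thm.~1.1]{SSC85}) and then rewrite each special term in kernel form. Hebisch's theorem gives a finite algebraic extension $L'$ of $\Fbar L$ such that
\[
f=Dv_0+\sum_i c_i\frac{Du_i}{u_i}+\sum_l \gamma_l\,w_lu_l\,Dv_l+\sum_l\beta_l\,u_l\frac{Dv_l}{v_l},
\]
with $Du_l/u_l=Dv_l$, $w_l^{k_l}=v_l^{m_l}$ and $-k_l<m_l<0$. The exponential-integral terms are already $K_0$ terms once we set $\xi=u_l$ and $\omega_l=u_l/v_l$. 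A direct check gives $D\omega_l/\omega_l=Dv_l-Dv_l/v_l$, which is \eqref{eq:kernel} with $s=0$.

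For an incomplete gamma term, set $\omega_l=w_lu_l$ and $s_l=(k_l+m_l)/k_l\in(0,1)\cap\Q$. Differentiating $w_l^{k_l}=v_l^{m_l}$ logarithmically gives $Dw_l/w_l=(m_l/k_l)Dv_l/v_l=(s_l-1)Dv_l/v_l$. Adding $Du_l/u_l=Dv_l$ yields \eqref{eq:kernel}. We also need $v_l\notin\Const(L')$. If $v_l$ were constant, the term $\omega_lDv_l$ would vanish, so such terms can simply be dropped.

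The main obstacle is to make sure all the data lie in one finite algebraic extension $L'$ of $\Fbar L$, rather than in the gamma extension, with constants in $\Fbar$. The point is that $w_l$ and $u_l$ need not lie in $L'$, while their product must. Here I would invoke the precise form of \cite[Thm.~13, Thm.~18]{Hebisch18}, which asserts that the products $w_lu_l$ (and $u_l$ in the $\Ei$ case) lie in the algebraic extension. If the cited version only places $u_l$ and $w_l$ in a larger extension, I would redo the final step of the standard argument instead. Adjoin $w_l$ algebraically. Then argue, via the linear independence of the non-associated exponentials $u_l$ over $L'$ (the usual Risch structure argument), that the terms with non-algebraic $u_l$ group by the class of $u_l$ modulo $L'^*$. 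Within one class the product $\omega_l$ is determined up to a constant factor, which can be absorbed into $\gamma_l$.

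Finally, collecting constants gives $\gamma_l\in\Fbar$, because the constant field of the gamma extension is $\Fbar$. This yields \eqref{eq:liouG}.
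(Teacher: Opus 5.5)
Your proposal is correct and follows the paper's own justification: the statement is quoted from Hebisch's Theorem~13 (with Singer--Saunders--Caviness), and the only work is the dictionary you check, namely $\xi=u$, $\omega=u/v$ for the exponential-integral terms and $\omega=wu$, $s=(k+m)/k$ for the incomplete-gamma terms. The ``main obstacle'' you raise does not arise for this statement, because $L'$ need only be a finite algebraic extension of $\Fbar L$, and $w_l$ and $u_l=\omega_l/w_l$ are algebraic over any field containing $v_l,\omega_l$; whether the factors or only their product lie in the field matters only for the descent to $\Fbar L$, which the paper isolates as hypothesis (H1). Your fallback would not have worked anyway: a $u_l$ transcendental over $L'$ makes $\omega_l$ transcendental, no rescaling by a constant repairs that, and such terms would instead have to be shown to cancel.
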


For rational gamma extensions no log-explicitness hypothesis is needed \cite[Thm.~13]{Hebisch18}. The special functions enter linearly with constant coefficients, as in \cite{SSC85}: this is what makes them columns of a linear system. For the logarithmic part the data may be taken in $\Fbar L$ \cite[eq.~(3)]{PartII}; for the kernels this is a hypothesis in the present paper.

\begin{enumerate}
\item[(H1)] \emph{Descent.} In \eqref{eq:liouG} one may take $v_0,u_i,v_l,\omega_l\in\Fbar L$.
\end{enumerate}

Hebisch proves the descent through transcendental monomial layers \cite[Lemmas~15, 17, Thm.~18]{Hebisch18}; the radical layer is the open case, and his Example~22 (a variant of \cite[Ex.~1.2]{SSC85}) shows the pitfall: there the integrand lies in a radical extension and its trace to the field below vanishes, so the descent cannot pass through a trace of $f$. (H1) is used only for the certificate of Corollary~\ref{cor:rescert}; every antiderivative the algorithm returns is verified, and nothing else depends on it.

\subsection{Elliptic integrals}

\begin{theorem}[{Liouville principle for elliptic integrals; Hebisch \cite[Thm.~4]{Hebisch21}}]\label{thm:ellliouville}
Let $f\in L$ have a primitive in an extension of $L$ built from elementary extensions, elliptic functions, elliptic integrals and Lambert functions. Then, after an extension of $L$ by algebraic constants,
\[
f=Dv_0+\sum_ic_i\frac{Dv_i}{v_i}+\sum_ja_j\frac{Dp_j}{q_j}+\sum_jb_j\frac{p_jDp_j}{q_j}+\sum_jc'_j\frac{Dp_j}{(p_j-\gamma_j)q_j}
\]
with $v_0,v_i,p_j,q_j$ in $L$ extended by algebraic constants, constants $c_i,a_j,b_j,c'_j,\gamma_j$, and $q_j^2=p_j^3-\alpha_jp_j-\beta_j$ with $\alpha_j,\beta_j$ constant.
\end{theorem}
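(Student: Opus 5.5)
The statement is Hebisch's theorem, and the paper cites it rather than proving it. A proof sketch therefore follows the standard Liouville-principle strategy that Hebisch adapts, specialised to our tower $L$.

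Let $E=L(\theta_1,\dots,\theta_N)$ contain a primitive of $f$, where each $\theta_k$ is elementary over the previous field, or an elliptic function, an elliptic integral or a Lambert function of an element already present. We argue by induction on $N$, pushing the identity down one generator at a time. The inductive statement is that an identity of the displayed shape holds over $L_{k}=L(\theta_1,\dots,\theta_k)$ with all data in $L_k$ extended by algebraic constants. At $k=N$ it holds trivially: take $v_0=\int f$ and all other coefficients zero.

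\emph{Elementary layers.} For a logarithm, exponential or algebraic generator, the descent is the classical Liouville--Rosenlicht argument, applied term by term. In each case we expand in $\theta_k$, or take traces for algebraic $\theta_k$, and compare coefficients, as in \cite[eq.~(3)]{PartII}. Two facts make this work. First, traces of logarithmic derivatives are logarithmic derivatives of norms. Second, the elliptic terms are built from points $(p_j,q_j)$ on constant curves, so a trace of such a term is again a sum of such terms; this is the point-addition law applied to the conjugates, via the sum of a divisor on the curve.

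\emph{Elliptic-integral and elliptic-function layers.} If $\theta_k$ is an elliptic integral of the first or second kind, say $D\theta_k=Dp/q$ or $pDp/q$ with $(p,q)$ on a constant curve, then $\theta_k$ is a primitive monomial. The coefficient of the top power of $\theta_k$ must be constant, so the dependence on $\theta_k$ contributes only a constant multiple of $D\theta_k$. This is precisely an $a_j$ or $b_j$ term. Third-kind integrals are handled the same way and give the $c'_j$ terms, with $\gamma_j$ the $p$-coordinate of the pole point.

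For an elliptic function $\theta_k=\wp(u)$ of an element $u$, the field is generated by a point on a constant curve over the previous field extended by $u$. Here we use the addition theorem and the Abel--Jacobi relation. Rational functions on the curve integrate, modulo logarithmic derivatives, to combinations of $u$, $\zeta(u)$ and third-kind integrals. These pull back to terms of the allowed shape.

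\emph{Lambert layers and the main obstacle.} For Lambert layers, $W$ satisfies $W e^{W}=a$, so $DW=\frac{W}{1+W}\frac{Da}{a}$. We treat $W$ as algebraic over $L_{k-1}(e^{W})$, and the exponential and algebraic cases then apply.

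The hard step is the non-Liouvillian layers (elliptic functions), where the naive coefficient comparison fails. There I would first invoke Kumbhakar--Srinivasan \cite{KumbhakarSrinivasan23}, which says that adjoining such functions never helps, to eliminate these layers outright. I would fall back on Hebisch's divisor-on-the-curve argument \cite[Thm.~4]{Hebisch21} only if that reduction does not cover the case. Collecting the terms at $k=0$, after adjoining the algebraic constants introduced along the way, gives the displayed identity.
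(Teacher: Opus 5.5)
Your treatment of algebraic layers (traces, with the addition law on the constant curves keeping the summed conjugate points in the base field) is the right idea, and it is how the paper describes Hebisch's argument. Two layers, however, are not handled.

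First, the Lambert reduction is invalid. Since $e^W=a/W$, one has $L_{k-1}(e^W)=L_{k-1}(W)$. So $W$ is rational, not merely algebraic, over $L_{k-1}(e^W)$, and the first step of your chain is already the entire Lambert step. That step is not an exponential extension in the Liouville sense, which requires $D\theta/\theta=D\eta$ with $\eta\in L_{k-1}$. Here $\eta=W\notin L_{k-1}$, and $\theta=e^W$ satisfies $D\theta=\theta\,Da/(\theta+a)$, so it is not even a monomial over $L_{k-1}$. The layer needs a direct descent through the single transcendental generator $W$, with $DW=W\,Da/(a(1+W))$. The elliptic terms do drop there: a point of a constant genus-one curve over $M(\theta)$, with $\theta$ transcendental over $M$, is an $M$-point (the argument of Proposition~\ref{prop:points}). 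You should state this for your log, exp and elliptic-integral layers as well, since it is what makes the coefficient comparison legitimate for the elliptic terms. The elementary part, however, needs its own analysis at $W=0$, $W=-1$ and $W=\infty$, using relations such as $D(W+\log W)=Da/a$ to push the $W$-dependence down.

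Second, Theorem~\ref{thm:negative} cannot eliminate the elliptic-function layers. It requires every step of the tower to be elementary, $\mathrm{SL}_2$ or $\mathrm D_\infty$ Picard--Vessiot, or Weierstrassian, and it concludes only about elementary integrals. An elliptic integral is a primitive (Picard--Vessiot with group $\mathbb G_a$) that is none of these, so once one occurs in the tower the theorem says nothing. The paper draws exactly this line after Theorem~\ref{thm:negative}, and attributes the case with elliptic integrals in the tower to Hebisch. Falling back on \cite[Thm.~4]{Hebisch21} is circular, since that is the statement being proved.

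Your remark that the integrals ``pull back to terms of the allowed shape'' holds only with the generic point $(X,Y)=(\wp(u),\wp'(u))\in\mathcal E(L_k)$, for example $X\,Du=(X,Y)^*(X\,dX/Y)$. The theorem puts the points over $L$, and removing such points when $f\in L_{k-1}$ is the whole content of this step. When $\wp(u)$ is transcendental, $L_k$ is the function field of $\mathcal E$ over $L_{k-1}$. A point of a constant curve $\mathcal E_j$ over it is $\phi(X,Y)+R$, with $\phi$ a homomorphism over $\Fbar$ and $R\in\mathcal E_j(L_{k-1})$. The addition law moves the $R$-part to the base and turns the first-kind part of $\phi$ into $c\,Du$, which is exact. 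But two kinds of terms remain: $b\,X\,Du$, and the third-kind terms with poles at $\Fbar$-points of $\mathcal E$. These must be shown to cancel modulo $D(\Fbar L_k)$ and logarithmic derivatives, or to be re-expressible by points over $L_{k-1}$. That requires a Rosenlicht--Risch argument on this function field with $DX=Y\,Du$, which your sketch does not supply. Only once the points are known to lie over $L_{k-1}$ can Theorem~\ref{thm:negative} finish the layer, applied to $f$ minus the elliptic terms.
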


Each elliptic term is the pull-back $P_j^*\omega$ of a differential of the first, second or third kind on a \emph{constant} elliptic curve $\cE_j$ by a point $P_j=(p_j,q_j)\in\cE_j(\Fbar L)$. Here the descent is part of the statement: Hebisch's trace argument uses Abel's addition theorem to keep the points in the field.

\begin{theorem}[{Kumbhakar--Srinivasan \cite[Thm.~1.1]{KumbhakarSrinivasan23}; Hebisch \cite[\S5]{Hebisch21}}]\label{thm:negative}
Let $E\supseteq L$ be a tower with the same constants in which each step is elementary, or a Picard--Vessiot extension with differential Galois group $\mathrm{SL}_2$ or $\mathrm D_\infty$, or the adjunction of a Weierstrassian element. If $f\in L$ has an elementary integral over $E$, it has one over $L$ \textup(constants extended to $\Fbar$\textup).
\end{theorem}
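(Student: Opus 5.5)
This statement is quoted from Kumbhakar--Srinivasan \cite[Thm.~1.1]{KumbhakarSrinivasan23} and Hebisch \cite[\S5]{Hebisch21}, so in the paper I would cite it rather than reprove it. To check that it holds in the present setting, I would argue by induction on the length of the tower $E=E_N\supseteq\dots\supseteq E_0=L$. The inductive step has this form: if $f\in L$ has an elementary integral over $E_{i+1}$, then it has one over $E_i$. The base field $L$ is fixed throughout. Constants are extended to $\Fbar$ at the outset, which is allowed because every step has the same constant field. By Liouville's theorem, an elementary integral over $E_{i+1}$ means that
\[
f=Dv_0+\sum_k c_k\frac{Du_k}{u_k}
\]
with $v_0,u_k\in E_{i+1}$ (after an algebraic constant extension), so each step reduces to a descent problem for this identity.

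\emph{Elementary steps.} A logarithmic, exponential or algebraic step is the classical Risch--Rosenlicht descent. For a transcendental monomial $t$, I would expand $v_0$ in partial fractions and $u_k$ into irreducible factors, then compare polar parts and leading coefficients. In the logarithmic case, the leading coefficient of $v_0$ is constant plus a multiple of $t$, which can be absorbed into the log terms. In the exponential case, $t$ itself contributes a derivative term. For an algebraic step I would take traces: $\operatorname{Tr}$ commutes with $D$, and $\operatorname{Tr}(Du/u)=DN(u)/N(u)$, so dividing by the degree gives a Liouville form one level down.

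\emph{Non-Liouvillian steps.} For a Picard--Vessiot step with group $\mathrm{SL}_2$ or $\mathrm D_\infty$, I would apply the Galois group $G$ to the Liouville identity. Since $G$ fixes $f$, each $\sigma\in G$ gives another representation. I would then average over a reductive quotient, or use that $\mathrm{SL}_2$ has no nontrivial characters, so that every $G$-semi-invariant $u_k$ becomes invariant modulo constants. The log terms then descend to $E_i$ up to modifications that can be absorbed. The remaining problem is the residual $Dv_0$ term: $v_0$ is determined up to a constant, so $\sigma\mapsto\sigma v_0-v_0$ is a cocycle $G\to\Fbar$. It is trivial for $\mathrm{SL}_2$, which is perfect, and it is killed on the finite part of $\mathrm D_\infty$ by averaging. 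For a Weierstrassian element $\wp$ with $(D\wp)^2=(4\wp^3-g_2\wp-g_3)(Dh)^2$, the extension is the function field of a constant elliptic curve over $E_i$. I would use the constant automorphisms given by translation by points, or alternatively the order of pole at the origin together with residues, to show that the polar parts of $v_0$ and of the log terms in $\wp$ must cancel.

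\emph{Main obstacle.} I expect the main obstacle to be the $\mathrm D_\infty$ case. There the Galois group has a nontrivial character of order two, so a $u_k$ can be a genuine semi-invariant, for example the product of two exponential solutions lying in a quadratic extension. In that case one has to pass through the quadratic subfield and then apply the algebraic trace argument. This is why I would rely on the cited proofs rather than reproduce them.
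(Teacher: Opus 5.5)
Relying on the citation is what the paper does: it states Theorem~\ref{thm:negative} with the two references and gives no proof of its own. The sketch you add as a check, however, has a genuine gap. At an elementary step there is nothing to descend, since an elementary extension of $E_{i+1}$ is already an elementary extension of $E_i$. All the content is therefore in the $\mathrm{SL}_2$, $\mathrm D_\infty$ and Weierstrass steps, and there your Galois argument assumes what is to be shown. Nothing in $f=Dv_0+\sum_kc_kDu_k/u_k$ over $E_{i+1}$ makes the $u_k$ semi-invariants, i.e.\ puts $Du_k/u_k$ in $E_i$. Applying $\sigma\in G$ gives only
\[
D(\sigma v_0-v_0)=-\sum_kc_k\,\frac{D(\sigma u_k/u_k)}{\sigma u_k/u_k},
\]
so $\sigma v_0-v_0$ need not be constant, and your cocycle argument never starts. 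Averaging cannot repair this. The logarithmic part is multiplicative in $u_k$, so the finite-group device behind the trace argument is the norm, and there is no norm for $\mathrm{SL}_2$ or for an elliptic curve of translations. A Reynolds operator acts additively on rational $G$-modules, not on the field $E_{i+1}$.

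What does work is the valuation descent of Part~II, because these steps have no special places. For $\mathrm{SL}_2$, write $E_{i+1}=E_i(r)(y_1)(u)$. Here $r=Dy_1/y_1$ satisfies a Riccati equation $Dr=-r^2+\cdots$, $y_1$ is hyperexponential over $E_i(r)$, and $u=y_2/y_1$ is primitive with $Du=w/y_1^2$, where $w\in E_i$ is the Wronskian. The primitive and hyperexponential layers reduce the identity to $f=Db_0+c\,r+\sum_kc_kDh_k/h_k$ with $b_0,h_k\in E_i(r)$. A term $c'u$ in $v_0$ is excluded, because it would force $u\in E_i(r,y_1)$. Since the Riccati equation has no solution algebraic over $E_i$, every place of $E_i(r)$ is normal, including $r=\infty$, where $r$ has residue $1$. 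This forces $b_0,h_k$ to be algebraic over $E_i$ and $c=0$, and a trace finishes the step. For a Weierstrassian $\wp$ the same holds directly on the elliptic function field: no $\overline{E_i}$-point solves its equation, so every place is normal. $\mathrm D_\infty$ is the easy, Liouvillian case rather than the main obstacle. There $E_{i+1}=E_i(\alpha)(y_+)$, where $\alpha=y_+y_-$ is your semi-invariant and $\alpha^2\in E_i$. The hyperexponential layer does not descend by itself: it leaves a term $c\,Dy_+/y_+$. The trace to $E_i$ then turns that term into $\tfrac c2\,D(\alpha^2)/\alpha^2$, a logarithmic derivative over $E_i$.
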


Complete elliptic integrals $K(m)$, $E(m)$ of a non-constant modulus $m\in L$ generate a Picard--Vessiot extension for Legendre's equation, whose group is $\mathrm{SL}_2$ when they are not Liouvillian over $L$; theta constants of a variable nome are expressed through them; Weierstrass functions of a new argument are Weierstrassian. By Theorem~\ref{thm:negative} none of them occurs non-trivially in an otherwise elementary antiderivative of an element of $L$, and Hebisch sketches the same conclusion with elliptic integrals in the tower \cite[\S5]{Hebisch21}. They appear in our answers only in two roles: because they are in the integrand, where they are generators of a Part~II tower (\S\S\ref{ex:ellK}, \ref{ex:sn2}); or, with constant modulus, as constants of the Legendre and Jacobi presentations (Section~\ref{sec:presentation}).

\subsection{The functions supported, and their definitions}\label{sec:functions}

Special functions are normalised differently by different computer algebra systems and references: modulus versus parameter, the sign in $\Pi$, $\li$ versus the offset integral $\li(z)-\li(2)$ (which some references call $\Li$ --- while Maple's \texttt{Li} is $\li$ itself), and the branch of $\Ei$. We therefore fix the definitions. Table~\ref{tab:answers} lists every special function that the algorithm returns in an answer, and Table~\ref{tab:integrands} every function that it accepts in an integrand. Nothing outside these tables is produced.

In Table~\ref{tab:answers}, $\log$, $\sqrt{\cdot}$ and $\arcsin$ are principal branches, $\gamma$ is Euler's constant, and every function is continued analytically from the stated real interval. The names are those of SymPy, used by the implementation, of Mathematica and of Maple, in that order.

\begin{table}[ht]
\centering\small
\renewcommand{\arraystretch}{1.35}
\begin{tabular}{|p{1.9cm}|>{\raggedright\arraybackslash}p{5.2cm}|>{\raggedright\arraybackslash}p{4.3cm}|>{\raggedright\arraybackslash}p{2.2cm}|}
\hline
Function & Definition & SymPy\newline Mathematica\newline Maple & Produced by \\
\hline
$\Gamma(s,z)$, $s\in\Q$, $0<s<1$ & $\displaystyle\int_z^\infty t^{s-1}e^{-t}\,dt$ for $z>0$; cut along $(-\infty,0]$ & \texttt{uppergamma(s,z)}\newline \texttt{Gamma[s,z]}\newline \texttt{GAMMA(s,z)} & $K_s$, $s>0$ (Lemmas~\ref{lem:analytic}, \ref{lem:kummer}) \\
\hline
$\Ei(z)$ & $\displaystyle\gamma+\log z+\sum_{k\ge1}\frac{z^k}{k\,k!}$; for $x>0$, $\mathrm{PV}\!\int_{-\infty}^x\frac{e^t}{t}\,dt$ & \texttt{Ei(z)}\newline \texttt{ExpIntegralEi[z]}\newline \texttt{Ei(z)} & $K_0$ (Lemma~\ref{lem:analytic}) \\
\hline
$\li(z)$ & $\mathrm{PV}\!\displaystyle\int_0^z\frac{dt}{\log t}=\Ei(\log z)$, $z>0$, $z\ne1$; not the offset integral $\li(z)-\li(2)$ & \texttt{li(z)}\newline \texttt{LogIntegral[z]}\newline \texttt{Li(z)} & $K_0$ with $v=\log z$ \\
\hline
$\Si(z)$ & $\displaystyle\int_0^z\frac{\sin t}{t}\,dt$ (entire) & \texttt{Si(z)}\newline \texttt{SinIntegral[z]}\newline \texttt{Si(z)} & pairs of $K_0$ (\S\ref{sec:presentation}) \\
\hline
$\Ci(z)$ & $\displaystyle\gamma+\log z+\int_0^z\frac{\cos t-1}{t}\,dt$ & \texttt{Ci(z)}\newline \texttt{CosIntegral[z]}\newline \texttt{Ci(z)} & pairs of $K_0$ \\
\hline
$\erfc(z)$ & $\displaystyle\frac2{\sqrt\pi}\int_z^\infty e^{-t^2}\,dt=1-\erf(z)$, $\ \erf(z)=\frac2{\sqrt\pi}\int_0^ze^{-t^2}dt$ & \texttt{erfc(z)}\newline \texttt{Erfc[z]}\newline \texttt{erfc(z)} & $K_{1/2}$ \\
\hline
$\erfi(z)$ & $-\sqrt{-1}\,\erf(\sqrt{-1}\,z)=\displaystyle\frac2{\sqrt\pi}\int_0^ze^{t^2}\,dt$ & \texttt{erfi(z)}\newline \texttt{Erfi[z]}\newline \texttt{erfi(z)} & $K_{1/2}$ \\
\hline
$F(\varphi\,|\,m)$ & $\displaystyle\int_0^\varphi\frac{d\theta}{\sqrt{1-m\sin^2\theta}}$ & \texttt{elliptic\_f(phi,m)}\newline \texttt{EllipticF[phi,m]}\newline \texttt{EllipticF}$(\sin\varphi,\sqrt m)$ & $K^{(0)}$, $K^{(1)}$, $K_p$ \\
\hline
$E(\varphi\,|\,m)$ & $\displaystyle\int_0^\varphi\sqrt{1-m\sin^2\theta}\,d\theta$ & \texttt{elliptic\_e(phi,m)}\newline \texttt{EllipticE[phi,m]}\newline \texttt{EllipticE}$(\sin\varphi,\sqrt m)$ & $K^{(1)}$; Legendre $K^{(2)}$ \\
\hline
$\Pi(n;\varphi\,|\,m)$ & $\displaystyle\int_0^\varphi\frac{d\theta}{(1-n\sin^2\theta)\sqrt{1-m\sin^2\theta}}$ & \texttt{elliptic\_pi(n,phi,m)}\newline \texttt{EllipticPi[n,phi,m]}\newline \texttt{EllipticPi}$(\sin\varphi,n,\sqrt m)$ & $K_p$ (Prop.~\ref{prop:third}) \\
\hline
\end{tabular}
\caption{The special functions returned in answers.}\label{tab:answers}
\end{table}

\emph{Conventions.} Maple's names need two warnings. First, Maple writes the logarithmic integral as \texttt{Li}: its \texttt{Li(z)} is our $\li(z)$, not the offset integral. Second, Maple's elliptic integrals take $\sin\varphi$ and the modulus $k=\sqrt m$ as arguments: $F(\varphi\,|\,m)=\texttt{EllipticF}(\sin\varphi,\sqrt m)$, and likewise for $E$ and $\Pi$, with the characteristic $n$ as the middle argument of \texttt{EllipticPi}. These conversions hold for $|\operatorname{Re}\varphi|\le\pi/2$, where every amplitude produced by Lemma~\ref{lem:legendre} lies. The complete integrals of Table~\ref{tab:integrands} are $K(m)=\texttt{EllipticK}(\sqrt m)$ and $E(m)=\texttt{EllipticE}(\sqrt m)$ in Maple, and \texttt{EllipticK[m]}, \texttt{EllipticE[m]} in Mathematica. The elliptic integrals use the \emph{parameter} $m$. References that use the modulus $k$ write $F(\varphi,k)=F(\varphi\,|\,k^2)$, and some write $\Pi$ with $1+n\sin^2\theta$, which is $\Pi(-n;\varphi\,|\,m)$ here; these are the conventions of \cite[Ch.~19]{DLMF} in the $(\varphi\,|\,m)$ notation. The amplitude produced by Lemma~\ref{lem:legendre} is $\varphi=\arcsin\bigl(\lambda/\sqrt{U-e_3}\bigr)$, principal branch.

\emph{Identities used by the presentation.} The following identities, checked numerically to $25$ digits, are the only rewritings the implementation performs:
\begin{itemize}
\item $\Gamma(0,z)=E_1(z)=-\Ei(-z)$ for $z>0$;
\item $\Gamma(\tfrac12,z^2)=\sqrt\pi\,\erfc(z)$ for $\operatorname{Re}z>0$;
\item $\erfc(\sqrt{-1}\,z)=1-\sqrt{-1}\,\erfi(z)$;
\item $\li(z)=\Ei(\log z)$;
\item $\Ei(\pm\sqrt{-1}\,x)=\Ci(x)\pm\sqrt{-1}\bigl(\Si(x)+\tfrac\pi2\bigr)$ for $x>0$.
\end{itemize}
Since each special term of an antiderivative has a constant coefficient, these are used up to additive constants. The constants may differ on different components of the domain, which is why every answer is verified by differentiation rather than by comparing values.

\emph{Described but not produced.} Several functions appear in the theory but are never returned by the implementation:
\begin{itemize}
\item $\operatorname{Shi}$ and $\operatorname{Chi}$: $\int\sinh(x)/x\,dx$ is returned as $\tfrac12(\Ei(x)-\Ei(-x))$;
\item the Fresnel integrals $S$ and $C$: $\int\sin(x^2)\,dx$ is returned through $\erfc$ at the arguments $\sqrt{\pm\sqrt{-1}}\,x$;
\item the Weierstrass functions $\wp^{-1}$, $\zeta$, $\sigma$, and Jacobi's $Z$ and $\Theta$ with the constants $K(m)$, $E(m)$ of \S\ref{sec:presentation};
\item $\Gamma(s,\cdot)$ at irrational $s$;
\item integrals on pencils of genus $\ge2$, and elliptic integrals on pencils other than cubics and Legendre quartics, which are returned as unevaluated integrals in the pencil variable.
\end{itemize}

\begin{table}[ht]
\centering\small
\renewcommand{\arraystretch}{1.3}
\begin{tabular}{|p{4.6cm}|p{9.6cm}|}
\hline
Integrand function & How it enters the tower \\
\hline
$\exp$, $\log$, $a^b=\exp(b\log a)$ & hyperexponential and primitive generators; a new $\exp(a)$ or $\log(a)$ is first tested against the existing ones by the structure theorem (below) \\
\hline
$y=q^{1/m}$, rational powers & one simple radical, with the flattening and Euler steps of Part~II \\
\hline
$\sin,\cos,\tan,\cot,\sec,\csc$ and the hyperbolic six & the hypertangent (resp.\ hyperbolic) half-angle generator $\tan(a/2)$ (resp.\ $\tanh(a/2)$) \\
\hline
$\arctan$, $\operatorname{arccot}$, $\operatorname{artanh}$, $\operatorname{arcoth}$; $\arcsin$, $\arccos$, $\operatorname{arsinh}$, $\operatorname{arcosh}$, $\operatorname{arcsec}$, $\operatorname{arccsc}$, $\operatorname{arsech}$, $\operatorname{arcsch}$ & primitive generators, with a radical for the second group \\
\hline
$W(z)$ (Lambert, principal branch $W_0$, $We^W=z$) & non-monomial generator, $DW=W\,Dz/(z(1+W))$; the source $(z/W,W)$ \\
\hline
\emph{tower level only:} $K(m)$, $E(m)$; $\operatorname{sn},\operatorname{cn},\operatorname{dn}$; $\wp$, $\wp'$; any generator satisfying (T3) & supplied by the user as a \texttt{Tower}; $DK=(E-(1-m)K)/(2m(1-m))$, $DE=(E-K)/(2m)$; $D\operatorname{sn}=\operatorname{cn}\operatorname{dn}$ with the rescaling of Part~II Lemma~3.4 (\S\ref{ex:ellK}) \\
\hline
\end{tabular}
\caption{The functions accepted in integrands.}\label{tab:integrands}
\end{table}

\emph{Dependent exponentials and logarithms.}\label{par:structure} The surface builder must keep the generators algebraically independent, hypothesis (T1) of Part~II. Otherwise $e^{2x}$ becomes a generator independent of $e^x$, and false certificates follow, e.g.\ for $\int e^{2x}/(e^x+1)\,dx=e^x-\log(1+e^x)$. Before adjoining $e^a$ or $\log a$ it therefore applies the Risch structure theorem \cite{RothsteinCaviness79}. It solves $Da=\sum r_kD\theta_k/\theta_k+\sum r_kDt_k$, respectively $Da/a$ equal to the same sum, for rational $r_k$ over the exponential generators $\theta_k=e^{\eta_k}$ and the logarithmic generators $t_k=\log u_k$ already in the tower. If such $r_k$ exist:
\begin{itemize}
\item a new $\log a$ is replaced by $\sum r_k(\eta_k\text{ or }t_k)+\log c$;
\item a new $e^a$ with integer exponents is replaced by $e^c\prod(\theta_k\text{ or }u_k)^{r_k}$;
\item a fractional exponent $r_k=p/d$ refines the generator, $\theta_k=\theta'^d$ with $\theta'=e^{\eta_k/d}$.
\end{itemize}
This is a correction to Part~II's builder. On the integrands of our tests it changes Part~II's answer in fourteen cases, all with dependent exponentials or logarithms, and seven of the old answers were false certificates of non-elementarity.

\section{Kernels at normal primes}\label{sec:normal}

\begin{proposition}[Kernels at a normal prime]\label{prop:kernelnormal}
Let $P$ be a normal height-one prime of $\Fbar\cA$ \textup(or the hypertangent place at infinity \cite[Lemma~8.1(c)]{PartII}\textup), and let $(v,\omega)$ be an $s$-kernel datum over $\Fbar L$. Then:
\begin{enumerate}
\item[(a)] $v_P(v)\ge0$;
\item[(b)] with $n:=v_P(v)$, one has $v_P(\omega)=(s-1)\,n$; in particular $(1-s)\,n\in\Z$;
\item[(c)] $v_P\bigl(K_s(v,\omega)\bigr)=s\,n-\delta_P$ if $n>0$, and $v_P\bigl(K_s(v,\omega)\bigr)\ge1-\delta_P$ if $n=0$;
\item[(d)] if $s=0$, then $\xi=v\omega$ is a $P$-unit and $\hat\tau_P(K_0)=n\,\bar\xi$, where $\bar\xi$ is the reduction of $\xi$ in $\kappa(P)$;
\item[(e)] if $s>0$, then $v_P(K_s)\ge1-\delta_P$ and $\hat\tau_P(K_s)=0$: a kernel with $s>0$ has a pole at $P$ only if $\delta_P\ge2$, and then of order at most $\delta_P-1$.
\end{enumerate}
\end{proposition}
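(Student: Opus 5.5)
The plan is to reduce everything to the valuation lemma \cite[Thm.~4.5]{PartII} and the logarithmic-derivative lemma \cite[Lemma~5.3]{PartII} at a normal prime. At such $P$, for $g\in\Fbar L^*$ with $v_P(g)=n\ne0$, the valuation lemma gives $v_P(Dg)=n-\delta_P$. If $v_P(g)\ge0$ it gives $v_P(Dg)\ge v_P(g)+1-\delta_P$ in general, and $v_P(Dg)\ge-\eta_P$ for $P$-integral $g$. The logarithmic-derivative lemma gives $v_P(Dg/g)\ge-\delta_P$, with equality exactly when $v_P(g)\ne0$, and leading coefficient (canonical residue) $v_P(g)$ in the normalisation of \cite[Prop.~7.6]{PartII}. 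For the hypertangent place at infinity I will use the same two facts in the form of \cite[Lemma~8.1(c)]{PartII}.

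For (a), suppose $n=v_P(v)<0$. Then $v_P(Dv)=n-\delta_P<-\delta_P$, while $v_P((s-1)Dv/v)\ge-\delta_P$. So the right side of \eqref{eq:kernel} has valuation $n-\delta_P<-\delta_P$. The left side $D\omega/\omega$ has valuation $\ge-\delta_P$, which is a contradiction; hence $n\ge0$.

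For (b), I compare the parts of order $-\delta_P$ in \eqref{eq:kernel}. If $n>0$, then $Dv$ has valuation $n-\delta_P>-\delta_P$, and $Dv/v$ has valuation exactly $-\delta_P$ with residue $n$. The residue of $D\omega/\omega$ at order $-\delta_P$ is therefore $(s-1)n$, and the lemma identifies it with $v_P(\omega)$. If $n=0$, then $v$ is a $P$-unit, so $Dv/v$ and $Dv$ have valuation $\ge1-\delta_P$. Hence $v_P(D\omega/\omega)>-\delta_P$, which forces $v_P(\omega)=0=(s-1)n$. Integrality of $(1-s)n$ follows because $v_P(\omega)\in\Z$.

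For (c), I use $v_P(K_s)=v_P(\omega)+v_P(Dv)$. For $n>0$ this is $(s-1)n+n-\delta_P=sn-\delta_P$. For $n=0$ it is $0+v_P(Dv)\ge1-\delta_P$. Here I must check that the unit case really gives $v_P(Dv)\ge1-\delta_P$, i.e.\ $-\eta_P$, which is the integral clause of the valuation lemma.

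For (d), take $s=0$. Then $v_P(\xi)=n+(-n)=0$, so $\xi$ is a $P$-unit. Write $K_0=\xi\cdot Dv/v$.
\begin{itemize}
\item If $n>0$, the factor $Dv/v$ has a pole of order exactly $\delta_P$ with canonical residue $n$. Multiplying by the unit $\xi$ multiplies the leading coefficient by $\bar\xi$, which gives $n\bar\xi$.
\item If $n=0$, then $v_P(K_0)\ge1-\delta_P$, so the canonical residue is $0=n\bar\xi$.
\end{itemize}
The main obstacle is making sure that $\hat\tau_P$ of \cite[Prop.~7.6]{PartII} is defined as the coefficient at order exactly $-\delta_P$ and is $\kappa(P)$-linear under multiplication by units. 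If it involves a choice of local parameter or twisting by $\eta_P$, I would instead verify the relation through $\tau_P$ \cite[Thm.~7.4]{PartII} applied to $Dv/v$, and then use multiplicativity of reduction.

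For (e), take $s>0$. If $n>0$, then $v_P(K_s)=sn-\delta_P$, and by (b) $sn=n-(1-s)n$ is a positive rational with $sn\in\Z$, so $sn\ge1$. If $n=0$, part (c) already gives the bound $v_P(K_s)\ge1-\delta_P$. In both cases $v_P(K_s)\ge1-\delta_P>-\delta_P$, so there is no critical term and $\hat\tau_P(K_s)=0$. A pole therefore requires $1-\delta_P<0$, i.e.\ $\delta_P\ge2$, and its order is at most $\delta_P-1$.
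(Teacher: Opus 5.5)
Your proposal is correct and follows the paper's proof essentially step by step. Parts (a), (c) and (e) are identical to the paper's. In (b), your comparison of the coefficients at order $-\delta_P$ is the paper's expansion of both logarithmic derivatives along $D\pi/\pi$, with remainders of valuation $\ge1-\delta_P$.

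In (d), your fallback is exactly the paper's route. The paper applies Theorem~7.4(i) of Part~II to $K_0$ with $h=v$, $e=n$, which gives $\tau_P(K_0)=n\,(K_0v/Dv)|_P=n\,\bar\xi$, and then uses $\hat\tau_P=\tau_P$ at valuation $\ge-\delta_P$. Because that formula is visibly compatible with multiplication by $P$-integral elements, your primary argument through the unit $\xi$ is also sound.
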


\begin{proof}
(a) Rewrite \eqref{eq:kernel} as $Dv=D\omega/\omega-(s-1)Dv/v$. By \cite[Lemma~5.3]{PartII} both logarithmic derivatives on the right have valuation $\ge-\delta_P$. If $v_P(v)<0$, the valuation lemma gives $v_P(Dv)=v_P(v)-\delta_P<-\delta_P$, a contradiction.

(b) Let $\pi$ be a uniformiser. By \cite[Lemma~5.3]{PartII},
\[
\frac{D\omega}{\omega}=v_P(\omega)\frac{D\pi}{\pi}+\varepsilon_1,\qquad \frac{Dv}{v}=n\,\frac{D\pi}{\pi}+\varepsilon_2,\qquad v_P(\varepsilon_i)\ge1-\delta_P,
\]
and $v_P(Dv)\ge1-\delta_P$ by (a) and the valuation lemma (for $n>0$ directly, for $n=0$ by its last clause). Substituting into \eqref{eq:kernel},
\[
\bigl(v_P(\omega)-(s-1)n\bigr)\frac{D\pi}{\pi}\;\in\;\pi^{1-\delta_P}\cA_P .
\]
Since $v_P(D\pi/\pi)=-\delta_P$ exactly and the coefficient is a rational number, it vanishes.

(c) $v_P(K_s)=v_P(\omega)+v_P(Dv)$, which is $(s-1)n+(n-\delta_P)=sn-\delta_P$ for $n>0$ and $\ge0+(1-\delta_P)$ for $n=0$.

(d) For $s=0$, $v_P(\xi)=n-n=0$. If $n=0$ then $v_P(K_0)\ge1-\delta_P$ and $\hat\tau_P(K_0)=0$. If $n>0$, $v_P(K_0)=-\delta_P$, and \cite[Thm.~7.4(i)]{PartII} with $h=v$, $e=n$ gives $\tau_P(K_0)=n\,(K_0\,v/Dv)|_P=n\,\bar\xi$; the canonical residue equals $\tau_P$ at valuation $\ge-\delta_P$ \cite[Prop.~7.6]{PartII}.

(e) For $n>0$, $sn=n-(1-s)n$ is an integer by (b) and is positive, so $sn\ge1$ and (c) gives $v_P(K_s)\ge1-\delta_P$; for $n=0$ this is (c). A function of valuation $>-\delta_P$ has canonical residue zero.
\end{proof}

\begin{corollary}[Parameter constraint]\label{cor:parameter}
Let $s=a/k$ in lowest terms, $s>0$, and let $(v,\omega)$ be an $s$-kernel datum over $\Fbar L$. Then $k$ divides $v_P(v)$ at every normal prime $P$ at which $v$ vanishes. For $v\in\Fbar K_0$ and $P$ over the irreducible $p\in R_0$ with ramification index $e_P$, $v_P(v)=e_P\operatorname{ord}_p(v)$. Hence in a transcendental tower every normal zero of the argument of $\Gamma(a/k,\cdot)$ has multiplicity divisible by $k$, while at a branch prime of the radical a simple zero of $v$ in $K_0$ admits every $k\mid e_P$.
\end{corollary}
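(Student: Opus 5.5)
The corollary follows almost directly from Proposition~\ref{prop:kernelnormal}(b). Write $s=a/k$ in lowest terms with $0<a<k$. Then $1-s=(k-a)/k$, and $\gcd(k-a,k)=\gcd(a,k)=1$. Let $P$ be a normal prime with $n:=v_P(v)>0$. Part~(b) gives $(1-s)n=(k-a)n/k\in\Z$. Since $k-a$ is coprime to $k$, this forces $k\mid n$, which is the first claim. No further case analysis is needed, because part~(a) already excludes poles of $v$ at normal primes, so ``vanishes'' is the only nontrivial case.

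For the second claim, take $v\in\Fbar K_0$ and a height-one prime $P$ of $\Fbar\cA$ lying over the irreducible $p\in R_0$. I would use the standard fact that valuations extend with ramification index: $v_P$ restricted to $\Fbar K_0$ equals $e_P\operatorname{ord}_p$. This is immediate from the definition of $e_P$ as $v_P(p)$ and multiplicativity, since elements of $K_0$ are products of powers of irreducibles of $R_0$ times units of $R_0$. Those units are $P$-units, because $P\cap R_0=(p)$. The one point to check is that this convention for $e_P$ matches Part~I/II's normalisation of $q$ and of $\cA$. I expect this bookkeeping to be the only real obstacle. Above the radical $y^m=q$, primes over $p\nmid q$ are unramified ($e_P=1$). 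Over $p\mid q$, after the normalisation of Part~I, $e_P=m/\gcd(m,\operatorname{ord}_p q)$. Upper generators do not ramify over $R_0$-primes because they are transcendental.

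The two consequences follow by combining these facts. In a transcendental tower (no radical, or $e_P=1$ everywhere), $v_P(v)=\operatorname{ord}_p(v)$ at every normal $P$, so $k$ divides the multiplicity of every normal zero. At a branch prime with $\operatorname{ord}_p(v)=1$, we get $v_P(v)=e_P$. The constraint $k\mid v_P(v)$ then reads $k\mid e_P$, so every such $k$ is admissible as far as this valuation condition goes. The statement asks only for this necessary-condition reading, not for existence of a kernel. If the branch prime is special rather than normal, part~(b) does not apply and nothing is claimed there. I would remark on this, with the hypertangent place at infinity included as in Proposition~\ref{prop:kernelnormal}.
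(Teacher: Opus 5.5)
Your proposal is correct and follows the paper's proof. Proposition~\ref{prop:kernelnormal}(b) gives $(k-a)n/k\in\Z$, and since $\gcd(k,k-a)=1$ this forces $k\mid n$. The restriction formula $v_P=e_P\operatorname{ord}_p$ on $K_0$ then yields the two consequences; the paper simply cites this formula as Lemma~4.7 of Part~II, while you rederive it from $P\cap R_0=(p)$, and your remarks on $e_P=m/\gcd(m,\operatorname{ord}_pq)$ and on special branch primes are not needed for the argument.
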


\begin{proof}
$(1-s)n=(k-a)n/k\in\Z$ with $\gcd(k,k-a)=1$ gives $k\mid n$. The formula for $v_P$ on $K_0$ is \cite[Lemma~4.7]{PartII}.
\end{proof}

\begin{remark}[The window is where $\erf$ lives]\label{rem:window}
Part~II isolated, among the phenomena without transcendental analogue, a \emph{sub-critical window}: at a normal prime with $\delta_P\ge2$, poles of order $<\delta_P$ demand neither a Hermite denominator nor a residue. Proposition~\ref{prop:kernelnormal}(e) identifies its non-elementary content: the kernels of $\Gamma(s,\cdot)$, $s>0$, have poles at normal primes only there. In the monomial transcendental towers of \cite[Ch.~10]{Bronstein05} every normal prime has $\delta_P=1$ \cite[Rem.~4.11]{PartII}, so $\erf$-kernels have no normal poles at all and are visible only at special places --- the classical reason that the error function is an RDE phenomenon. In the towers of Part~II the window is supplied by the branch primes of the radical, where $\delta_P=e_P$, and by the primes of $\den_0$ of non-monomial generators, such as $(t)$ for $t=\sqrt{\log x}$, where $\delta=2$; the kernels of \S\S\ref{ex:sqrtlog}, \ref{ex:conic}, \ref{ex:hebisch22} sit there.
\end{remark}

\begin{remark}[Locality]
A kernel with $s>0$ is \emph{locally exact} at every normal prime: a function of valuation $a-\delta_P$ with $a\ge1$ is congruent to $D(c\pi^{a})$ for a suitable $P$-integral $c$ \cite[proof of Prop.~7.6]{PartII}. Its non-elementarity is therefore global, and its candidates cannot be read off residues; they are generated structurally in Section~\ref{sec:kummer}. For $s=0$ the opposite holds: the kernel is critical, and its residue is the handle by which the $\Ei$ terms are found.
\end{remark}

\section{Structure of antiderivatives in $\GQ$}\label{sec:structure}

\begin{definition}[Exponential elements]\label{def:expgroup}
An element $\xi\in(\Fbar L)^*$ is \emph{exponential} if $D\xi/\xi\in D(\Fbar L)$. The exponential elements form a group $\fE\supseteq\Fbar^*$, and $\lambda:\fE\to\Fbar L/\Fbar$, $D\lambda(\xi)=D\xi/\xi$, is the \emph{logarithm map}. For a normal prime $P$, $\fE_P$ denotes the subgroup of $\xi$ such that $\lambda(\xi)$ has a $P$-integral representative whose reduction modulo $P$ is a constant $c_P(\xi)\in\Fbar$.
\end{definition}

Exponential elements are $P$-units at normal primes: by the argument of Proposition~\ref{prop:kernelnormal}(a), $\lambda(\xi)$ is $P$-integral, so $v_P(D\xi/\xi)\ge1-\delta_P$ and $v_P(\xi)=0$ by \cite[Lemma~5.3]{PartII}. In an $s=0$ datum, $\xi=v\omega\in\fE$ and $v=\lambda(\xi)$ up to a constant: the arguments of the $\Ei$ terms are logarithms of exponential elements, shifted by constants.

\begin{theorem}[Structure with kernels]\label{thm:structureG}
Let $f\in L$ satisfy \eqref{eq:liouG} with $v_0,u_i,v_l,\omega_l\in\Fbar L$ and $c_1,\dots,c_r$ linearly independent over $\Q$ \cite[Lemma~5.2]{PartII}. Then:
\begin{enumerate}
\item[(i)] at every normal prime $P$, $v_P(v_0)\ge\min\bigl(0,\ \min(v_P(f),\min_lv_P(K_l))+\delta_P\bigr)$; in particular the kernels with $s=0$ and the kernels with $s>0$ impose no Hermite denominator, and \cite[Cor.~5.6]{PartII} holds verbatim: $\den(v_0)\mid s\prod_{l\ge2}d_l^{\,l-1}$ with $s$ supported on primes carrying a special prime;
\item[(ii)] at every normal prime $P$,
\begin{equation}\label{eq:resdecomp}
\hat\tau_P(f)\;=\;\sum_{i=1}^rc_i\,v_P(u_i)\;+\;\sum_{l:\,s_l=0}\gamma_l\,v_P(v_l)\,\bar\xi_l\ \in\ \Fbar+\sum_{\xi\in\fE_P}\Fbar\,\bar\xi\ \subseteq\ \kappa(P).
\end{equation}
\end{enumerate}
\end{theorem}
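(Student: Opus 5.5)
We will prove both parts by local analysis at a fixed normal prime $P$, treating the kernel sum $\sum_l\gamma_l\,\omega_l\,Dv_l$ as a known element of prescribed valuation and reducing to Part~II's argument. Set $g:=f-\sum_l\gamma_lK_l=Dv_0+\sum_ic_iDu_i/u_i$. This is an elementary Liouville identity for $g\in\Fbar L$ with the same $v_0,u_i,c_i$. By Proposition~\ref{prop:kernelnormal}(c),(e) each kernel has valuation $\ge-\delta_P$: for $s_l=0$ the valuation is exactly $-\delta_P$ or at least $1-\delta_P$, and for $s_l>0$ it is at least $1-\delta_P$. Hence $v_P(g)\ge\min(v_P(f),\min_lv_P(K_l))$.

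For (i), we apply Part~II's local denominator argument (the proof behind \cite[Cor.~5.6]{PartII}) to $g$. Suppose $v_P(v_0)=-a<0$. The valuation lemma \cite[Thm.~4.5]{PartII} gives $v_P(Dv_0)=-a-\delta_P$. The logarithmic derivatives $Du_i/u_i$ have valuation $\ge-\delta_P$ by \cite[Lemma~5.3]{PartII}. Therefore $-a-\delta_P=v_P(g-\sum c_iDu_i/u_i)\ge\min(v_P(g),-\delta_P)$, which forces $-a\ge v_P(g)+\delta_P$. This gives the stated bound. Since every kernel has $v_P(K_l)+\delta_P\ge0$, the kernels do not lower the bound below what $f$ alone imposes. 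The verbatim transfer of \cite[Cor.~5.6]{PartII} then follows: at a normal prime where $f$ has a pole of order $l\ge2$ the same exponent $l-1$ arises, and no new normal primes enter. The remaining factor $s$ is untouched, because special primes are excluded from this analysis.

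For (ii), we take canonical residues of $f=g+\sum\gamma_lK_l$. The map $\hat\tau_P$ is $\Fbar$-linear on elements of valuation $\ge-\delta_P$ \cite[Prop.~7.6]{PartII}, and every term here has such valuation. The first step is $\hat\tau_P(Dv_0)=0$. When $v_P(v_0)\ge0$ this holds because $v_P(Dv_0)\ge1-\delta_P$. When $v_P(v_0)<0$ we use the exactness part of Prop.~7.6, namely that derivatives have zero canonical residue; this is the one point where we need the canonical rather than the naive residue. The second step is $\hat\tau_P(Du_i/u_i)=v_P(u_i)$, by \cite[Thm.~7.4(i)]{PartII} with $h=u_i$. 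The third step is that $\hat\tau_P(K_l)$ equals $v_P(v_l)\bar\xi_l$ for $s_l=0$ and equals $0$ for $s_l>0$, by Proposition~\ref{prop:kernelnormal}(d),(e). Summing these gives \eqref{eq:resdecomp}.

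It remains to check the membership claim, that $\bar\xi_l$ lies in $\sum_{\xi\in\fE_P}\Fbar\bar\xi$ whenever $v_P(v_l)\ne0$. The datum gives $\xi_l\in\fE$ with $\lambda(\xi_l)=v_l+\text{const}$, and Proposition~\ref{prop:kernelnormal}(a) shows $v_l$ is $P$-integral. The reduction of $v_l$, however, need not be constant. The condition that $\lambda(\xi_l)$ reduces to a constant in $\kappa(P)$ is the main obstacle. We would try to show that $v_P(v_l)=n>0$ forces $\bar v_l=0$, which makes $\bar v_l$ constant and hence $\xi_l\in\fE_P$ with $c_P=0$. If $\bar v_l$ were instead a nonconstant element of $\kappa(P)$, the term would have $n=0$ and contribute no residue, so it is irrelevant to the sum. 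This dichotomy, obtained from (b) and (d) of Proposition~\ref{prop:kernelnormal}, settles the inclusion. Finally, $\kappa(P)$ contains both $\Fbar$ and every $\bar\xi$, which gives the last containment.
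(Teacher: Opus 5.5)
Your proof is correct and follows the paper's own argument. For (i) you subtract the kernels, whose valuations are $\ge-\delta_P$ by Proposition~\ref{prop:kernelnormal}(c), and run the valuation-lemma argument. For (ii) you use that $\hat\tau_P$ is linear, vanishes on derivatives, satisfies $\hat\tau_P(Du/u)=v_P(u)$, and is given on the kernels by Proposition~\ref{prop:kernelnormal}(d),(e).

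Two small corrections. First, the step you call the main obstacle is immediate. $v_P(v_l)>0$ means $\bar v_l=0$, so $v_l$ is itself a $P$-integral representative of $\lambda(\xi_l)$ with constant reduction, hence $\xi_l\in\fE_P$. This is exactly the paper's one-line argument, with no appeal to (b). Second, $Dv_0$ need not have valuation $\ge-\delta_P$ when $v_0$ has a pole at $P$, so "every term here has such valuation" is false. You need $\hat\tau_P$ to be linear on all of $\Fbar L$, not only on elements of valuation $\ge-\delta_P$, and \cite[Prop.~7.6]{PartII} supplies this.
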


\begin{proof}
(i) If $v_P(v_0)<0$, the valuation lemma gives $v_P(Dv_0)=v_P(v_0)-\delta_P<-\delta_P$, while the logarithmic derivatives have valuation $\ge-\delta_P$ and the kernels valuation $\ge-\delta_P$ (Proposition~\ref{prop:kernelnormal}(c)); so $v_P(f-\sum_l\gamma_lK_l)=v_P(v_0)-\delta_P$. The corollary is the proof of \cite[Cor.~5.6]{PartII} unchanged.

(ii) The canonical residue is $\Fbar$-linear, vanishes on $D(\Fbar L)$ and satisfies $\hat\tau_P(Du/u)=v_P(u)$ \cite[Prop.~7.6]{PartII}; on the kernels it is given by Proposition~\ref{prop:kernelnormal}(d),(e). An $s=0$ kernel has $v_l=\lambda(\xi_l)+c$ with $c$ constant; if $v_P(v_l)>0$ then $\lambda(\xi_l)\equiv-c$ modulo $P$, so $\xi_l\in\fE_P$.
\end{proof}

\begin{corollary}[Refined residue certificate]\label{cor:rescert}
Assume \textup{(H1)}. If at some normal prime $P$ the canonical residue $\hat\tau_P(f)$ does not lie in $\Fbar+\sum_{\xi\in\fE_P}\Fbar\,\bar\xi$, then $f$ has no antiderivative in $\GQ$. If it lies there with a nonzero component along some $\bar\xi$ that is independent of $1$ and of the other reductions, every antiderivative of $f$ in $\GQ$ contains an $s=0$ kernel whose argument $\lambda(\xi)-c_P(\xi)$ vanishes at $P$.
\end{corollary}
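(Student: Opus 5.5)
The plan is to argue by contraposition from the Liouville principle, Theorem~\ref{thm:liouville}, together with the descent hypothesis (H1), and then to read off the residue with Theorem~\ref{thm:structureG}(ii). Suppose $f$ has an antiderivative in $\GQ$. Theorem~\ref{thm:liouville} gives a representation \eqref{eq:liouG}, and (H1) lets us take $v_0,u_i,v_l,\omega_l\in\Fbar L$. The logarithmic part is then normalised as in \cite[Lemma~5.2]{PartII}: pass to a $\Q$-basis of the $\Q$-span of the $c_i$ and recombine the $u_i$ into products of rational powers. Clearing a common denominator by raising each $u_i$ to an integer power and rescaling the coefficients keeps the $u_i$ in $\Fbar L$ and the $c_i$ linearly independent over $\Q$. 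This normalisation does not touch $v_0$ or the kernels.

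With the hypotheses of Theorem~\ref{thm:structureG} now satisfied, part~(ii) applies at the given normal prime $P$. It gives $\hat\tau_P(f)=\sum_ic_iv_P(u_i)+\sum_{s_l=0}\gamma_lv_P(v_l)\bar\xi_l$. The first sum lies in $\Fbar$, since the $v_P(u_i)$ are integers. In the second sum only the terms with $v_P(v_l)>0$ contribute, and by (a) of Proposition~\ref{prop:kernelnormal} these are the only terms where $v_P(v_l)\neq0$. The proof of (ii) shows that each such $\xi_l$ lies in $\fE_P$. Hence $\hat\tau_P(f)\in\Fbar+\sum_{\xi\in\fE_P}\Fbar\bar\xi$, which contradicts the hypothesis. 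This proves the first claim.

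For the second claim, take any antiderivative in $\GQ$ and write the same decomposition. Suppose no $s=0$ kernel had an argument vanishing at $P$. Then every coefficient $v_P(v_l)$ in the second sum is zero, and $\hat\tau_P(f)\in\Fbar$. This contradicts the nonzero component along $\bar\xi$, which is assumed independent of $1$ and of the other reductions.

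The delicate step is the bookkeeping of what ``component along $\bar\xi$'' means. The reductions $\bar\xi_l$ appearing in the decomposition may differ from the $\bar\xi$ in the hypothesis. I would fix a basis of the $\Fbar$-span of $1$ and the reductions $\{\bar\xi:\xi\in\fE_P\}$ that contains $\bar\xi$, and read the hypothesis as saying that $\bar\xi$ is a basis element with nonzero coordinate. Expanding each contributing $\bar\xi_l$ in this basis, the nonzero coordinate must come from some $l$ with $s_l=0$, $v_P(v_l)>0$ and $\bar\xi_l$ having a nonzero $\bar\xi$-coordinate.

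For that kernel, $v_l=\lambda(\xi_l)+c$. Positivity of $v_P(v_l)$ forces $c=-c_P(\xi_l)$, so the argument is $\lambda(\xi_l)-c_P(\xi_l)$ and it vanishes at $P$. To conclude it is literally $\lambda(\xi)-c_P(\xi)$, I would use independence: when $\bar\xi_l$ is proportional to $\bar\xi$, one has $\xi_l=a\xi$ times an element $\equiv1$ modulo $P$, so the arguments agree up to the normalisation by $c_P$. This identification is the main obstacle, and if it fails in full generality I would state the conclusion as an $s=0$ kernel whose $\bar\xi_l$ involves $\bar\xi$ and whose argument vanishes at $P$.
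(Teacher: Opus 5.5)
Your proposal is correct and follows the paper's route. The paper gives no separate proof: the corollary is read off from Theorem~\ref{thm:structureG}(ii), together with the remark in its proof that a kernel with $v_P(v_l)>0$ has $\xi_l\in\fE_P$, once (H1) puts the Liouville data in $\Fbar L$ and \cite[Lemma~5.2]{PartII} makes the $c_i$ independent over $\Q$. For the second part, your fallback reading is the intended one: independence of $\bar\xi$ from $1$ and the other reductions forces some contributing $\bar\xi_l$ to be a nonzero multiple of $\bar\xi$, and that kernel's argument is $\lambda(\xi_l)-c_P(\xi_l)$ for its own $\xi_l$. Do not rely on the attempted literal identification via $\xi_l=a\xi\epsilon$ with $\bar\epsilon=1$, since it only gives the arguments up to $\lambda(\epsilon)-c_P(\epsilon)$, which need not be constant.
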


This refines \cite[Prop.~9.2(a)]{PartII}: there a non-constant residue certified non-elementarity; here it certifies non-membership in $\GQ$ only when it escapes the exponential span, and otherwise it \emph{produces} the $\Ei$ term. In practice $\fE_P$ is replaced by the group generated by the exponential sources of the tower (Section~\ref{sec:sources}); the certificate is then relative to that group, and is absolute under the further hypothesis
\begin{enumerate}
\item[(H2)] \emph{Sources.} $\fE/\Fbar^*$ is generated, up to finite index, by the exponential sources of Section~\ref{sec:sources}.
\end{enumerate}
(H2) is the Risch structure theorem \cite{RothsteinCaviness79} in the form of Hebisch's $l$-closed fields \cite[Lemmas~4, 5, 12]{Hebisch18} extended by the Kummer closure of the radical; the implementation reports the sources relative to which a certificate is issued.

\begin{corollary}[Degree bounds and special exponents]\label{cor:bounds}
Let $\cK=\{K_1,\dots,K_k\}$ be a finite set of kernels over $\Fbar L$ and suppose $f=Dg+\sum_ic_iDu_i/u_i+\sum_l\gamma_lK_l$. Then \cite[Thm.~8.7 and Cor.~8.9]{PartII} hold with $v(f)$ replaced by $\min\bigl(v(f),\min_lv(K_l)\bigr)$, under the same hypotheses on the place. In particular Algorithm~6 of Part~II returns a proved bound for the rational part of an antiderivative in $\GQ$ exactly where it returns one for elementary antiderivatives.
\end{corollary}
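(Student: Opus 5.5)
The plan is to reduce Corollary~\ref{cor:bounds} to Part~II by treating the kernel sum as part of the integrand. Set $\tilde f:=f-\sum_l\gamma_lK_l=Dg+\sum_ic_iDu_i/u_i$. This is an elementary integrand over $\Fbar L$, with elementary antiderivative $g+\sum_ic_i\log u_i$, and $g$ is its rational part. Apply \cite[Thm.~8.7 and Cor.~8.9]{PartII} to $\tilde f$ in place of $f$. Those results bound $v(g)$ at the place in terms of $v(\tilde f)$ and the data of the place. Since $v(\tilde f)\ge\min\bigl(v(f),\min_lv(K_l)\bigr)$ by the ultrametric inequality, and the bounds are monotone in $v(\tilde f)$, the bound stated with the minimum follows. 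The coefficients $\gamma_l$ are constants and do not lower the valuation.

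The substance lies in checking that the hypotheses on the place in Part~II refer only to the place and the tower, not to the specific integrand. The one exception is the value $v(f)$ entering the bound, together with whatever cancellation or residue data Theorem~8.7 uses. First I would re-read Theorem~8.7 to confirm that its proof uses only the identity $f=Dg+\sum c_iDu_i/u_i$ and the valuation of $f$. If it uses finer information, such as the leading coefficient of $f$ at the place in a cancellation criterion, I would check that the criterion is phrased so that replacing $f$ by $\tilde f$ is harmless. At a place where Part~II proves its bound, the proof rests on the structure of the place: normal versus special, exponential or primitive degree, and the residue-field constants. It does not rest on $f$ beyond $v(f)$, so the same proof goes through verbatim. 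If instead a cancellation criterion depends on the leading coefficient of $f$, the conclusion should be read with the leading term of $\tilde f$. Because the corollary only claims a bound with $v(f)$ replaced by the minimum, the worst case is covered by taking that minimum.

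For the final sentence I would argue that Algorithm~6 of Part~II decides, at each place, whether it can return a proved bound, and that this decision depends only on the place. The minimum over the kernels is computable because the $K_l$ are known elements of the tower. Hence the algorithm returns a proved bound for $g$ in the $\GQ$ setting exactly at the places where it does so for elementary antiderivatives. At normal primes, Theorem~\ref{thm:structureG}(i) already gives the denominator part, which is consistent with this.

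The main obstacle is the second step: verifying that no hypothesis of Theorem~8.7 implicitly uses $f\in L$, as opposed to $\tilde f\in\Fbar L$ with kernels in an algebraic extension. If the Part~II proof needs $f$ in $L$ rather than $\Fbar L$, I would first extend constants to $\Fbar$. I would justify this by noting that valuations and places of $\Fbar\cA$ are compatible with those of $\cA$, since the extension is by constants and therefore unramified.
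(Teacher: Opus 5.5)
Your proposal is correct and is the paper's own argument: apply \cite[Thm.~8.7]{PartII} to $f-\sum_l\gamma_lK_l=Dg+\Sigma$, whose valuation is at least $\min\bigl(v(f),\min_lv(K_l)\bigr)$, and use that the bound is monotone in that valuation. The paper records that the proof of \cite[Thm.~8.7]{PartII} uses the integrand only through its valuation and the decomposition with $v(\Sigma)\ge s_v$, so your contingency about a leading-coefficient criterion never arises. That contingency would not have been rescued by taking the minimum in any case.
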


\begin{proof}
The proof of \cite[Thm.~8.7]{PartII} uses the integrand only through $v(f)$ and the decomposition $f=Dg+\Sigma$ with $v(\Sigma)\ge s_v$. Apply it to $f-\sum_l\gamma_lK_l$.
\end{proof}

Theorem~\ref{thm:structureG} and Corollary~\ref{cor:bounds} are why the extension is small: the denominator theory, the logand theory, the special-prime classification and the degree bounds of Part~II are reused without change, and the new component is the generation of kernel candidates.

\section{Bounds at the places}\label{sec:bounds}

Corollary~\ref{cor:bounds} carries the degree bounds of Part~II over to the kernels, and with them their dichotomy: at a place where the hypotheses of \cite[Thm.~8.7]{PartII} are decided by the criteria (K1)--(K4) of \cite[Prop.~8.11]{PartII}, the bound is proved; elsewhere Algorithm~6 of Part~II takes the classical guess of \cite[\S10.3]{Bronstein05} and raises it on retry. \cite[Rem.~8.14]{PartII} lists what remains heuristic: the kernel degrees at a place where $\rho_v$ or some $\lambda_k$ is not a $\partial_v$-constant; the chains at a lower primitive on which a later generator depends; the second form of the cancellation argument at a place whose residue field carries constants beyond $\Fbar$; and, in the implementation, the special primes over the curve variable that are branch primes. Towers carrying special functions meet these places constantly --- an exponential over a logarithm, a curve carried by an exponential, a non-monomial generator --- and eight of the eighteen examples of Section~\ref{sec:examples} that reach the extended system have a guessed bound. This section decides most of them.

We use the notation of \cite[\S8.2]{PartII}: a discrete valuation $v$ of $\Fbar L$ with uniformiser $\pi$, residue field $\kappa_v$ and a coefficient field $C\subset\Fbar L$ mapping isomorphically onto $\kappa_v$, generated by $c_1,\dots,c_l$; the shift $s=s_v=\min_h\bigl(v(Dh)-v(h)\bigr)$; the leading part $\partial_v\bar c=\overline{\pi^{-s}Dc}$, a derivation of $\kappa_v$; $r=\sigma_v(\pi)-s$ and $\rho_v^{(r)}=\overline{\pi^{-s-1-r}D\pi}$; the kernel degrees $N_v=\{a:\partial_v\bar c=-a\rho_v\bar c\text{ for some }\bar c\ne0\}$; and a generator $c_k$ \emph{attains} the shift if $v(Dc_k)=s$. The four kinds of place (P$_\infty$), (P$_\sigma$), ($\infty_i$), (M) are those of \cite[\S8.2]{PartII}, and \cite[Lemma~8.10]{PartII} reads $s$, $\partial_v$ and $\rho_v^{(r)}$ off the generators. Throughout, $\mu:=\min\bigl(v(f),\min_lv(K_l)\bigr)$ is the valuation of Corollary~\ref{cor:bounds}.

\subsection{Extra constants}

In \cite[Thm.~8.7(ii)]{PartII} the leading form of the rational part is a $\partial_v$-constant, lifted to a constant of $L$; this needs $\Const_{\partial_v}(\kappa_v)=\Fbar$. When a non-attaining generator survives in the residue field its reduction is a new $\partial_v$-constant, and its lift is not a constant of $L$: its derivative enters the cancellation argument, at an order controlled by how far it misses the shift.

\begin{theorem}[Bounds with extra constants]\label{thm:extra}
Let $D$ be $v$-bounded with shift $s$, $r\ge1$, and let the generators of $C$ be split into the attaining $c_k$, $k\in A$, and the non-attaining $c_k$, $k\in B$. Suppose $\Const_{\partial_v}(\kappa_v)=C'$, where $C'$ is the reduction of $C_B:=\Fbar(c_k:k\in B)$. Put
\[
r^*:=\min\Bigl(r,\ \min_{k\in B}\bigl(v(Dc_k)-s\bigr)\Bigr)\ \ge1,\qquad
\rho^*:=\overline{\pi^{-s-1-r^*}D\pi},\qquad
\partial^*\bar c:=\overline{\pi^{-s-r^*}Dc}\quad(c\in C_B),
\]
so that $\rho^*=\rho_v^{(r)}$ if $r^*=r$ and $\rho^*=0$ otherwise, and $\partial^*:C'\to\kappa_v$ is a derivation. Suppose
\begin{enumerate}
\item[(ii*)] for every $a<0$ and every $\bar c\in C'^*$, $\ a\rho^*\bar c+\partial^*\bar c\notin\partial_v(\kappa_v)$.
\end{enumerate}
If $f=Dg+\sum_ic_iDu_i/u_i+\sum_l\gamma_lK_l$, then $v(g)\ge\min\bigl(-r^*,\ \mu-s-r^*\bigr)$. With $B=\emptyset$ this is \cite[Thm.~8.7(ii)]{PartII}: then $C'=\Fbar$, $\partial^*=0$, $r^*=r$, and \textup{(ii*)} reads $\rho_v^{(r)}\notin\partial_v(\kappa_v)$.
\end{theorem}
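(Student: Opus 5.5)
We argue by contradiction, following the cancellation argument of \cite[Thm.~8.7(ii)]{PartII}, applied (as in Corollary~\ref{cor:bounds}) to $\tilde f:=f-\sum_l\gamma_lK_l$, so that $v(\tilde f)\ge\mu$ and $\tilde f=Dg+\Sigma$ with $\Sigma=\sum_ic_iDu_i/u_i$ of valuation $\ge s$ by \cite[Lemma~5.3]{PartII}. Suppose $v(g)=-N$ with $N>\max\bigl(r^*,\ r^*+s-\mu\bigr)$. We write $g=\pi^{-N}(c+\pi h)$, where $c\in C$ is nonzero with reduction $\bar c$ and $h$ is $v$-integral. Then
\[
Dg=\pi^{-N}\bigl(Dc-Nc\,D\pi/\pi\bigr)+D(\pi^{1-N}h).
\]
If $v(Dg)=s-N$, the leading term is $\pi^{s-N}(\partial_v\bar c-N\rho^{(0)}\bar c)$, and it must cancel against nothing: $\Sigma$ has valuation $\ge s>s-N$, and $\tilde f$ has valuation $\ge\mu>s-N$ because $N>s-\mu$. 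So $v(Dg)>s-N$ is forced. At order $s-N$ this says that $\bar c$ is a $\partial_v$-constant, using $r\ge1$, so that $D\pi/\pi$ has valuation $>s$. By the hypothesis $\Const_{\partial_v}(\kappa_v)=C'$, we may choose the lift $c\in C_B$.

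The next step is to expand $Dg$ to order $s+r^*-N$. Three contributions appear. The first is $\pi^{-N}Dc$, whose leading part is $\pi^{s+r^*-N}\partial^*\bar c$; it is nonzero only when $r^*$ is attained by some $k\in B$, since $c\in C_B$ and $v(Dc_k)\ge s+r^*$. The second is $-Nc\pi^{-N-1}D\pi$, with leading part $-N\rho^*\bar c\,\pi^{s+r^*-N}$; here $\rho^*=\rho_v^{(r)}$ when $r^*=r$, and it vanishes when $r^*<r$. The third is $D(\pi^{1-N}h)$, which contributes $\pi^{1-N}Dh$ plus a $D\pi$-term. The plan is to absorb this third term by induction on the coefficients of $h$. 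At each order below $s+r^*-N$, matching forces the relevant coefficient of $h$ to be a $\partial_v$-constant up to a correction. At order $s+r^*-N$ the total coefficient is
\[
a\rho^*\bar c+\partial^*\bar c+\partial_v(\bar e),\qquad a=-N<0,
\]
for some $\bar e\in\kappa_v$, namely the reduction of the appropriate coefficient of $h$. Since $N>r^*+s-\mu$ gives $s+r^*-N<\mu\le v(\tilde f)$, and $\Sigma$ has valuation $\ge s>s+r^*-N$ because $N>r^*$, this coefficient must vanish. That means $a\rho^*\bar c+\partial^*\bar c\in\partial_v(\kappa_v)$, which contradicts (ii*). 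Hence $N\le\max(r^*,r^*+s-\mu)$, which is the claimed bound $v(g)\ge\min(-r^*,\mu-s-r^*)$.

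The main obstacle is the middle bookkeeping: controlling the intermediate orders $s-N<\ell<s+r^*-N$. There, derivatives of the lower coefficients of $h$ produce terms, and derivatives of $B$-generators may appear early. We need to know that nothing nonzero from $Dc$ or $D\pi$ appears before order $s+r^*-N$, which is exactly the role of the minimum defining $r^*$. We also need the coefficients of $h$ at these orders to be $\partial_v$-constants, hence liftable into $C_B$. Those lifts then contribute to order $s+r^*-N$ only through $\partial^*$ applied to them, which is again of the form $\partial^*(\cdot)$ or lands in $\partial_v(\kappa_v)$. We would handle this by choosing all lifts of $\partial_v$-constants in $C_B$ at each step and collecting the resulting contributions into a single $\partial^*$-term on a modified $\bar c$ plus an element of $\partial_v(\kappa_v)$; the condition (ii*), quantified over all $\bar c\in C'^*$, is designed to absorb exactly this. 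Finally, for the case $B=\emptyset$ we check that $C'=\Fbar$, $\partial^*=0$ and $r^*=r$, so (ii*) reduces to $\rho_v^{(r)}\notin\partial_v(\kappa_v)$, recovering \cite[Thm.~8.7(ii)]{PartII}.
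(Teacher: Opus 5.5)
Your overall route is the paper's. You run Part~II's cancellation argument on $f-\sum_l\gamma_lK_l$, show that the leading coefficient $\bar c$ of $g$ is a $\partial_v$-constant (since $r\ge1$ forces $\rho_v=0$), lift it into $C_B$, and read off $\beta=a\rho^*\bar c+\partial^*\bar c$ as the coefficient of $D(c\pi^a)$ at order $a+s+r^*$, with $a=-N$. You fold the comparison with $\Sigma$ and $\mu$ into the contradiction hypothesis; the paper instead first proves $v(Dg)\le a+s+r^*$ for every $a=v(g)<0$. The two organisations are equivalent.

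The gap is the intermediate orders. You leave them as a plan, and the plan as sketched would not work. You expect the lifts of the intermediate coefficients to contribute $\partial^*$-terms at order $s+r^*-N$, and you propose to collect them into ``a single $\partial^*$-term on a modified $\bar c$''. That step is not legitimate: the term $a\rho^*\bar c$ still carries the original $\bar c$, so the result is not of the form $a\rho^*\bar c'+\partial^*\bar c'$ to which (ii*) applies. It is also unnecessary, because there are no such contributions.

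The correct bookkeeping, which is the heart of the paper's proof, is a valuation count.
\begin{itemize}
\item Put $g_1=g-c\pi^a$. For $1\le i<r^*$, suppose $v(g_i)\ge a+i$, and that $Dg_i$ has valuation $\ge a+s+r^*$ with coefficient $-\beta$ at that order.
\item At order $a+i+s<a+s+r^*$, Lemma~8.6 of Part~II gives the coefficient of $Dg_i$ as $\partial_v\bar c_i$, again because $\rho_v=0$. That coefficient must vanish. So $\bar c_i$ is \emph{exactly} a $\partial_v$-constant, with no correction, and it lifts to $c_i\in C_B$.
\item Since $v(Dc_i)\ge s+r^*$ and $v(D\pi)\ge s+r+1$, you get $v\bigl(D(c_i\pi^{a+i})\bigr)\ge a+i+s+r^*>a+s+r^*$. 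The point is that $\partial^*$ raises the order by $s+r^*$, and these terms start at order $a+i$ with $i\ge1$. So the lifted terms contribute nothing at the critical order, and $g_{i+1}:=g_i-c_i\pi^{a+i}$ keeps the coefficient $-\beta$.
\item The final remainder $g_{r^*}$ has valuation $\ge a+r^*$. It contributes only $\partial_v\bar c_{r^*}$ (or $0$) at order $a+s+r^*$.
\end{itemize}
Hence $\beta\in\partial_v(\kappa_v)$ with the original $\bar c$, contradicting (ii*). With this step supplied, your argument is complete.
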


\begin{proof}
$\partial^*$ is well defined and a derivation: for $c\in C_B$, $Dc=\sum_k(\partial c/\partial c_k)Dc_k$ with $\partial c/\partial c_k\in C_B$ of valuation $\ge0$ and $v(Dc_k)\ge s+r^*$, so $v(Dc)\ge s+r^*$, and the Leibniz rule passes to the coefficient of $\pi^{s+r^*}$.

Write $f':=f-\sum_l\gamma_lK_l=Dg+\Sigma$, $\Sigma:=\sum_ic_iDu_i/u_i$, so $v(f')\ge\mu$ and $v(\Sigma)\ge s$. Let $a:=v(g)<0$. We claim $v(Dg)\le a+s+r^*$. Suppose not. Since $v(Dg)>a+s$, the leading form $\bar c\pi^a$ of $g$ lies in the kernel of the leading part; as $r\ge1$, $\rho_v=0$, so $\partial_v\bar c=0$ and $\bar c\in C'$. Lift $\bar c$ to $c\in C_B$ and write $g=c\pi^a+g_1$ with $v(g_1)\ge a+1$. Then $D(c\pi^a)=\pi^aDc+ac\pi^{a-1}D\pi$ has valuation $\ge a+s+r^*$ and coefficient $\beta:=a\rho^*\bar c+\partial^*\bar c$ at order $a+s+r^*$. As in the proof of \cite[Thm.~8.7(ii)]{PartII}, for $1\le i<r^*$ the leading coefficient $\bar c_i$ of $g_i$ at order $a+i$ must satisfy $\partial_v\bar c_i=0$, so $\bar c_i\in C'$, and $g_{i+1}:=g_i-c_i\pi^{a+i}$ with a lift $c_i\in C_B$; now $D(c_i\pi^{a+i})$ has valuation $\ge a+i+s+r^*>a+s+r^*$, so the coefficient of $Dg_i$ at order $a+s+r^*$ stays $-\beta$. For $g_{r^*}$, $v(g_{r^*})\ge a+r^*$: if $v(g_{r^*})>a+r^*$ that coefficient is $0$, and otherwise it is $\partial_v\bar c_{r^*}$ by \cite[Lemma~8.6]{PartII}. In both cases $\beta\in\partial_v(\kappa_v)$, contradicting \textup{(ii*)}. So $v(Dg)\le a+s+r^*$. If $a<-r^*$ then $v(Dg)<s\le v(\Sigma)$ and $v(f')=v(Dg)$, whence $a\ge v(f')-s-r^*\ge\mu-s-r^*$; if $-r^*\le a<0$ the bound $a\ge-r^*$ holds.
\end{proof}

\subsection{Deciding the hypotheses}

\begin{proposition}[Criteria (K5)--(K8) and branch places]\label{prop:K567}
Let $v$ be a place of one of the four kinds, $s=s_v$, $r=\sigma_v(\pi)-s$, and suppose that $y\notin C$ or that $y$ is non-attaining with $v(Dy)>s+r$.
\begin{enumerate}
\item[(K5)] \textup{(Exponentials over extra constants.)} Suppose $r\ge1$; every attaining generator is a hyperexponential upper generator with $\lambda_k=\mu_k\bar c_k$, $\mu_k\in\Fbar^*$, the $\mu_k$ linearly independent over $\Q$; every non-attaining $c_k$ has $v(Dc_k)\ge s+r$, with coefficient $\beta_k\in\Fbar$ at order $s+r$, and the $\bar c_k$, $k\in B$, are algebraically independent over $\Fbar$; and $\rho_v^{(r)}\in\Fbar^*$. Then $\Const_{\partial_v}(\kappa_v)=C'$, $r^*=r$, and \textup{(ii*)} holds.
\item[(K6)] \textup{(A rational residue field.)} Suppose $r=0$ and exactly one generator $c$ attains the shift, with $\lambda:=\partial_v\bar c\in\Fbar(\bar c)^*$ and $\rho_v\in\Fbar(\bar c)$. Then $N_v\cap\Z_{<0}=\emptyset$ --- and \cite[Thm.~8.7(i)]{PartII} applies --- if and only if $R:=\rho_v/\lambda$ has a nonzero polynomial part, or a multiple pole, or a residue that is not a rational number.
\item[(K7)] \textup{(Two primitive layers.)} Suppose $r=0$ and exactly two generators $c_1,c_2$ attain the shift, both primitive, with $\lambda_1\in\Fbar^*$, $\lambda_2\in\Fbar(\bar c_1)$ having a nonzero residue, and $\rho_v\in\Fbar^*$. Then $N_v=\{0\}$.
\item[(K8)] \textup{(Exponentials over primitives.)} Suppose $r=0$, the attaining generators are primitives $c_k$ with $\partial_v\bar c_k\in\Fbar^*$ and hyperexponentials $\theta_j$ with $\partial_v\bar\theta_j=\lambda_j\bar\theta_j$, $\lambda_j$ a polynomial in the $\bar c_k$, $\rho_v\in\Fbar^*$, and no rational combination of the $\lambda_j$ is a nonzero constant. Then $N_v=\{0\}$.
\item[(B)] \textup{(Branch places over special primes.)} Let $m=2$ and let $v$ be the ramified place over a special prime $\sigma$ with $\sigma\mid q$, $q=\sigma q_1$. Then $v(a+by)=\min(2v_\sigma(a),2v_\sigma(b)+1)$ exactly, $y$ is a uniformiser, and the class of a scalar $e$ modulo the place at order $2k$ is the class of $e/(\sigma q_1)^k$; with these data \cite[Prop.~8.11]{PartII} and \textup{(K5)--(K7)} apply as stated.
\end{enumerate}
\end{proposition}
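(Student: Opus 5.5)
The plan is to treat the five items separately. Each one reduces to computing the residue field $\kappa_v$ with its derivation $\partial_v$, in the presentation of \cite[Lemma~8.10]{PartII}. The hypothesis on $y$ (either $y\notin C$, or $y$ non-attaining with $v(Dy)>s+r$) is used only to ensure that $y$ contributes nothing to $\partial_v$, $\rho_v^{(r)}$ or $\partial^*$. This is what lets us work in a purely transcendental coefficient field.

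\emph{(K5).} Write $\kappa_v=\Fbar(\bar c_k:k\in B)(\bar c_k:k\in A)$. The derivation $\partial_v$ kills the $\bar c_k$ with $k\in B$, and $\partial_v\bar c_k=\mu_k\bar c_k$ for $k\in A$. Since the $\mu_k$ are $\Q$-independent, the standard monomial argument applies: $\partial_v$ acts diagonally on Laurent monomials $\prod\bar c_k^{n_k}$ with eigenvalue $\sum n_k\mu_k$, which is nonzero unless all $n_k=0$. Hence $\Const_{\partial_v}=C'$. Next, $r^*=r$ because $v(Dc_k)\ge s+r$ for $k\in B$. The map $\partial^*$ sends $\bar c_k\mapsto\beta_k\in\Fbar$, so it is $\Fbar$-valued on generators and maps $C'$ into $C'$.

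For (ii*), suppose $a\rho^*\bar c+\partial^*\bar c=\partial_v h$ with $\bar c\in C'^*$. Expand $h$ in monomials in the $\bar c_k$, $k\in A$. The image of $\partial_v$ has no component on the trivial monomial, while the left side lies entirely in $C'$, i.e.\ in that component. So $a\rho\bar c+\partial^*\bar c=0$ with $\rho\in\Fbar^*$ and $a\neq 0$. Now regard $\partial^*$ as a derivation on the purely transcendental field $C'$ with constant images of the generators, and take total degree in the algebraically independent $\bar c_k$. Applying $\partial^*$ to the leading form lowers its degree, so the degree-top parts cannot cancel. This gives a contradiction.

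\emph{(K6).} Here $\kappa_v=\Fbar(\bar c)$ with $\partial_v=\lambda\,d/d\bar c$. The condition $a\in N_v$ reads $\bar c'/\bar c=-aR$. This is solvable in $\Fbar(\bar c)^*$ exactly when $-aR$ is a logarithmic derivative, that is, when it has no polynomial part, only simple poles, and integer residues. For some negative integer $a$ this holds if and only if $R$ has no polynomial part, only simple poles and rational residues. The "if" direction takes $a$ to be a common denominator with the appropriate sign, and the "only if" direction is immediate. Then the contrapositive gives the stated criterion.

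\emph{(K7).} Here $\kappa_v=\Fbar(\bar c_1,\bar c_2)$ with $\partial\bar c_1=\lambda_1$ and $\partial\bar c_2=\lambda_2$. Suppose $\partial\bar c=-a\rho\bar c$ with $a\ne0$. Expand $\bar c$ in $\bar c_2$ over $\Fbar(\bar c_1)$. Because $\lambda_2$ has a nonzero residue, it is not a derivative in $\Fbar(\bar c_1)$, so $\bar c_2$ is a genuine primitive monomial. By the Risch-type lemma for primitive monomials, a solution of $\partial\bar c=\text{const}\cdot\bar c$ must lie in $\Fbar(\bar c_1)$. There $\partial=\lambda_1\,d/d\bar c_1$ with $\lambda_1$ constant, so we would need $e^{-a\rho\bar c_1/\lambda_1}$ to be rational, which is impossible.

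\emph{(K8)} goes the same way, using the Risch structure theorem for the residue differential field. Adjoin the primitives first, which the argument above handles. The hyperexponentials $\theta_j$ are monomials, since no rational combination of the $\lambda_j$ is a derivative; the hypothesis excludes constants, and polynomials in primitives with $\Fbar$-derivatives are derivatives only if they are constants. A semi-invariant must then be a monomial $\prod\bar\theta_j^{n_j}$ with $\sum n_j\lambda_j=-a\rho$, a nonzero constant. This is excluded unless $a=0$.

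\emph{(B).} This is a direct computation. From $y^2=\sigma q_1$ with $\sigma\nmid q_1$, we get $v(y)=1$ and $v(\sigma)=2$. The two summands $a$ and $by$ have valuations of different parity, so their minimum is attained. Reducing $e\pi^{-2k}$ with $\pi=y$ gives $e/(\sigma q_1)^k$. The cited criteria then apply verbatim.

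I expect (K5) to be the main obstacle. The degree argument for $\partial^*$ has to handle $\beta_k=0$ for some $k$ in $B$. In that case I would restrict to the variables with $\beta_k\ne0$ and argue that $\rho\ne0$ forces $\bar c$ to lie in the kernel of $\partial^*$ only if it is zero.
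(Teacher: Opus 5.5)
Your arguments for (K5), (K6), (K7) and (B) follow the paper's proof. The ingredients match: the presentation of $(\kappa_v,\partial_v)$ from Lemma~8.10 of Part~II; in (K5), the vanishing degree-zero component of $\partial_v(\kappa_v)$ and a total-degree argument for the constant-coefficient derivation $\partial^*$; in (K6), partial fractions for logarithmic derivatives; in (K7), normality of polynomials in the primitive monomial $\bar c_2$ followed by (K6); and the parity count in (B).

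Two small corrections to those parts.
\begin{itemize}
\item In (K6) and (K7) the residue field is $C'(\bar c)$, resp.\ $C'(\bar c_1,\bar c_2)$, not $\Fbar(\bar c)$: the non-attaining generators survive as extra $\partial_v$-constants. Your arguments still go through, because $R\in\Fbar(\bar c)$ and $\lambda_2\in\Fbar(\bar c_1)$ have the same polynomial part, poles and residues over $C'$. The paper makes exactly this remark.
\item Your worry about $\beta_k=0$ in (K5) is unfounded. Write $\bar c=g_1/g_2$ with polynomials $g_1,g_2$. Then $g_2\,\partial^*g_1-g_1\,\partial^*g_2=-a\rho^*g_1g_2$. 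The left side has total degree $<\deg g_1+\deg g_2$ for \emph{every} derivation with constant coefficients, including $\partial^*=0$. The right side has exactly that degree.
\end{itemize}

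The genuine gap is in (K8). Your sentence ``polynomials in primitives with $\Fbar$-derivatives are derivatives only if they are constants'' is false: $\bar c_1=\partial_v\bigl(\bar c_1^2/(2\,\partial_v\bar c_1)\bigr)$. It is also applied to the wrong object.

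A solution of $\partial_v\bar e=-a\rho_v\bar e$ is not a priori a pure monomial $\prod_j\bar\theta_j^{n_j}$. What the structure theorem gives is $\bar r$ in the field $M$ of the primitives and integers $n_j$ with $\partial_v\bar r/\bar r=-a\rho_v-\sum_jn_j\lambda_j$ (the paper writes $\bar e=\bar r\prod_j\bar\theta_j^{n_j}$). The right side is a polynomial in the $\bar c_k$, not a constant. So the fact you use in (K7), that a nonzero constant is not a logarithmic derivative, does not dispose of $\bar r$.

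The missing lemma is the logarithmic-derivative statement in its strong form. Since $\partial_v$ has constant coefficients on $M$, a logarithmic derivative $\partial_v\bar r/\bar r$ that equals a polynomial $P$ must be zero. The proof is the same degree comparison as above: $\deg(Pg_1g_2)\ge\deg g_1+\deg g_2$ unless $P=0$. It follows that $\sum_jn_j\lambda_j=-a\rho_v$ is a nonzero constant, which the hypothesis excludes. This is what the paper means by ``as in (K5)''.

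Note also what the hypothesis actually says. It rules out rational combinations of the $\lambda_j$ that are \emph{nonzero constants}. It does not rule out combinations that are derivatives, nor combinations that vanish. A vanishing combination makes $\prod_j\bar\theta_j^{n_j}$ a new $\partial_v$-constant. So your claim that the $\bar\theta_j$ are hyperexponential monomials without new constants over $M$ does not follow from the hypotheses. The argument should rely only on the identity above.
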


\begin{proof}
(K5) By \cite[Lemma~8.10]{PartII}, $\kappa_v=C'(\bar c_k:k\in A)$ with $\partial_v\bar c_k=\mu_k\bar c_k$ and $\partial_v=0$ on $C'$ (the $c_k$, $k\in B$, miss the shift). This is a tower of hyperexponential monomials over the field $C'$ with $\Q$-independent $\mu_k\in\Fbar$, so no new constants arise \cite[Thm.~5.1.2]{Bronstein05}: $\Const_{\partial_v}(\kappa_v)=C'$. Every non-attaining generator has $v(Dc_k)\ge s+r$, so $r^*=r$ and $\rho^*=\rho_v^{(r)}\in\Fbar^*$. The derivation $\partial^*=\sum_{k\in B}\beta_k\,\partial/\partial\bar c_k$ has constant coefficients and maps $C'$ into itself, so $\beta=a\rho^*\bar c+\partial^*\bar c\in C'$. An element of $\partial_v(\kappa_v)$ has zero component of degree $0$ in the Laurent expansion in the $\bar c_k$, $k\in A$ (the argument of (K4) in \cite[Prop.~8.11]{PartII}), so $\partial_v(\kappa_v)\cap C'=\{0\}$ and \textup{(ii*)} fails only if $\partial^*\bar c=-a\rho^*\bar c$ for some $\bar c\in C'^*$. But a nonzero constant is not a logarithmic derivative for a derivation with constant coefficients on $\Fbar(\bar c_k:k\in B)$: writing $\bar c$ as a quotient of products of irreducible polynomials $p$, one needs $p\mid\partial^*p$ for each, and $\partial^*$ lowers the total degree, so $\partial^*p=0$ and $\partial^*\bar c/\bar c=0$.

(K6) The other generators do not attain the shift, so they reduce to $\partial_v$-constants, $C'$, and $\kappa_v=C'(\bar c)$ with $\partial_v=\lambda\,d/d\bar c$. Then $a\in N_v$ if and only if $-aR$ is a logarithmic derivative for $d/d\bar c$ in $C'(\bar c)$. By partial fractions, the logarithmic derivatives are exactly the rational functions with zero polynomial part, only simple poles and integer residues. Since $R\in\Fbar(\bar c)$, its polynomial part, poles and residues are the same over $C'$; and $-aR$ has integer residues for some $a<0$ if and only if the residues of $R$ are rational (take $-a$ a common denominator).

(K7) As in (K6), $\kappa_v=C'(\bar c_1,\bar c_2)$ with $\partial_v\bar c_1=\lambda_1$, $\partial_v\bar c_2=\lambda_2$. Since $\lambda_2$ has a nonzero residue it is not a derivative in $C'(\bar c_1)$, so $\bar c_2$ is a primitive monomial over $C'(\bar c_1)$. If $\partial_v\bar e/\bar e=-a\rho_v\in\Fbar^*$ with $\bar e\in\kappa_v^*$, then $\bar e\in C'(\bar c_1)$, because every irreducible polynomial in a primitive monomial is normal and a normal factor would put a pole into the logarithmic derivative \cite[Thm.~5.1.1]{Bronstein05}; and then $d\bar e/d\bar c_1=-(a\rho_v/\lambda_1)\,\bar e$ with a nonzero constant coefficient, which is impossible by (K6).

(K8) If $\partial_v\bar e/\bar e=-a\rho_v$ with $\bar e\in\kappa_v^*$, write $\bar e=\bar r\prod\bar\theta_j^{n_j}$ with $\bar r$ in the field of the primitives. This is possible because the $\bar\theta_j$ are hyperexponential over it and a logarithmic derivative in $\kappa_v$ that lies in that field has this form, by the Risch structure theorem \cite{RothsteinCaviness79}. Then $-a\rho_v=\partial_v\bar r/\bar r+\sum n_j\lambda_j$. The primitives have constant leading coefficients, so, as in (K5), $\partial_v\bar r/\bar r$ has no constant term unless it vanishes. Hence $\sum n_j\lambda_j$ is the nonzero constant $-a\rho_v$, which is excluded.

(B) The two coordinates of $a+by$ have valuations of different parity at the ramified place, so no cancellation occurs; $v(y)=1$; and $y^{2k}=\sigma^kq_1^k$ with $q_1$ a unit there. \cite[Lemma~8.10]{PartII} and the proofs above use only these data.
\end{proof}

\begin{example}[The heuristic places of Section~\ref{sec:examples}]\label{ex:boundsdecided}
(i) $\int e^x\log x\,dx$, $L=\Q(x,t,\theta)$, $t=\log x$, $\theta=e^x$, at $x=\infty$: $\pi=1/x$, $D\pi=-\pi^2$, so $\sigma_v(\pi)=1$; $\theta$ attains $s=0$ with $\mu_\theta=1$; $t$ misses it, $Dt=\pi$, so $B=\{t\}$, $r=r^*=1$, $\beta_t=1$ and $\rho_v^{(1)}=-1$. (K5) applies: $C'=\Fbar(\bar t)$, $\partial^*=d/d\bar t$, and (ii*) asks that $\bar c'=a\bar c$ have no rational solution $\bar c\ne0$. The degree in $x$ is proved $\le1$ where the guess was $3$. The same place in $\int\log(x)e^{-x^2}\,dx$ has $s=-1$, $r=2$, and (K5) proves the bound $2$; every bound in force for that integrand is now proved (\S\ref{ex:certificate}). At $\theta_1=\infty$ in $\int e^{e^x}\,dx$ the attaining generator is $\theta_2$, with $\mu=1$, and the non-attaining one is $x$, with $\beta_x=1$.

(ii) $\int\sqrt{\log x}\,dx$, $L=\Q(x,z)$, $Dz=1/(2xz)$, at $x=\infty$: $s=1=\sigma_v(\pi)$, $r=0$, $z$ attains with $\lambda=1/(2\bar z)$, $\rho_v=-1$, so $R=-2\bar z$ has a polynomial part and (K6) gives $N_v=\{0\}$; at the special prime $(x)$, $s=-1$, $\lambda=1/(2\bar z)$, $\rho_v=1$ and $R=2\bar z$. $z$ is not a monomial, and (K1) does not apply.

(iii) The Lambert tower at $x=\infty$ and at $(x)$: $R=-1-1/\bar W$ and $R=1+1/\bar W$ have polynomial parts, and (K6) decides both. At $(W)$ itself $R=1/\bar x$ has the rational residue $1$: every $a\le0$ is a kernel degree (Remark~\ref{rem:kerexp}).

(iv) $\int\log\log x\,dx$ at $x=\infty$ and at $(x)$: $t_1=\log x$ and $t_2=\log t_1$ both attain, $\lambda_1=1$, $\lambda_2=1/\bar t_1$, $\rho_v=-1$ and $\rho_v=1$ respectively, and (K7) applies.

(v) The branch special prime $(\theta)$ of $y^2=\theta^3-\theta$, $\theta=e^x$ (\S\S\ref{ex:sinh}, \ref{ex:sinh2}): $y$ is the uniformiser, $Dy/y=\tfrac12Dq/q$ has class $\tfrac12$, $x$ attains with $\lambda=1$, and (K1) applies through (B), proving the exponent $1$ of $\theta$ in \S\ref{ex:sinh2}. For $y^2=2x\theta$ (\S\ref{ex:hebisch22}) the class is $\rho_v=1/(2\bar x)-1/2$, and (K6) applies through (B).
\end{example}

A special prime split over $\Fbar$ is decided at each factor rather than through its parent: the factors are conjugate, and so is the data of Lemma~8.10 of Part~II at their places. This decides the hypertangent special $1+t^2$ of the Fresnel tower, where (K2) applies at $t\mp\sqrt{-1}$ and not at the irreducible $1+t^2$.

\subsection{What remains heuristic, and what it means here}

\begin{proposition}[Kernel degrees of exponential elements]\label{prop:kerexp}
Let $\xi\in\fE$ with $n:=v(\xi)\ne0$ and $v\bigl(D\lambda(\xi)\bigr)>s_v$. Then $n\Z\subseteq N_v$; in particular $N_v\cap\Z_{<0}\ne\emptyset$ and hypothesis \textup{(i)} of \cite[Thm.~8.7]{PartII} fails at $v$.
\end{proposition}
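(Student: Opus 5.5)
We construct, for each multiple $an$ of $n$, an explicit nonzero residue class $\bar c$ with $\partial_v\bar c=-(an)\rho_v\bar c$. By definition this shows $an\in N_v$. Recall that $a\in N_v$ means there is $\bar c\ne0$ with $\partial_v\bar c=-a\rho_v\bar c$. This is the leading-order condition for $D(c\pi^a)$ to gain valuation beyond $a+s$. Equivalently, $\pi^a c$ has logarithmic derivative of valuation $>s$, after subtracting the term $a\,D\pi/\pi$ at order $s$. The natural candidate is built from $\xi$ itself: write $\xi=\pi^n u$ with $v(u)=0$, and take $\bar c:=\bar u^{\,k}$ for the degree $kn$, $k\in\Z$. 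This $\bar c$ is nonzero, since $u$ is a unit.

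The key computation is the following. We have $D\xi/\xi=D\lambda(\xi)$, which by hypothesis has valuation $>s_v$. On the other hand,
\[
\frac{D\xi}{\xi}=n\frac{D\pi}{\pi}+\frac{Du}{u}.
\]
Reduce at order $s$, i.e.\ multiply by $\pi^{-s}$ and take residues. Then $\overline{\pi^{-s}D\pi/\pi}=\rho_v$, in the normalisation where $\rho_v$ is the coefficient of $D\pi/\pi$ at order $s$ (the $r=0$ form of \cite[Lemma~8.10]{PartII}). Also $\overline{\pi^{-s}Du/u}=\partial_v\bar u/\bar u$, since $u$ is a unit and $\partial_v$ is the leading part of $D$ on units (\cite[Lemma~8.6]{PartII}). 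So the left side reduces to $0$, and we get $n\rho_v+\partial_v\bar u/\bar u=0$. Raising to the $k$-th power gives
\[
\frac{\partial_v\bar u^{\,k}}{\bar u^{\,k}}=-kn\,\rho_v,
\]
hence $kn\in N_v$ for every $k\in\Z$, i.e.\ $n\Z\subseteq N_v$. Since $n\ne0$, choosing $k$ of sign opposite to $n$ gives a negative element of $N_v$. That negative element is exactly what hypothesis (i) of \cite[Thm.~8.7]{PartII} forbids.

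The main obstacle is the case distinction in the shift. If $\sigma_v(\pi)>s$ (the case $r\ge1$), then $\rho_v=0$ at order $s$, and the statement reduces to $\partial_v\bar u=0$ with $N_v\ni 0$ trivially. In that case the content is rather that $\bar u$ is a $\partial_v$-constant, and we must check that the paper's definition of $N_v$ (with $\rho_v$ at order $s$) still yields $n\Z\subseteq N_v$. It does, with $\bar c=\bar u^k$ and both sides zero. So the same computation covers it, provided we read $\rho_v$ as $\overline{\pi^{-s-1}D\pi}$ throughout. We also need to confirm that $u$ can be taken with $\bar u$ in the residue field via the coefficient field $C$: it suffices to note that $\bar u$ is the leading coefficient of $\xi$ in the expansion in $\pi$. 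Finally, the hypothesis $v(D\lambda(\xi))>s_v$ is used exactly once, to kill the left side at order $s$.
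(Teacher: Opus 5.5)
Your proposal is correct and is essentially the paper's argument. The paper notes that $D(\xi^k)=k\,\xi^k D\lambda(\xi)$ has valuation $\ge kn+s_v+1$, so the leading form $\bar u^{\,k}\pi^{kn}$ of $\xi^k$ lies in the kernel of the leading part; this is your identity $\partial_v\bar u^{\,k}/\bar u^{\,k}=-kn\rho_v$, which you derive for $\xi$ and then raise to the $k$-th power. Your separate treatment of $r\ge1$ is harmless but unnecessary: with $\rho_v=\overline{\pi^{-s-1}D\pi}$ the same computation covers that case, where $\rho_v=0$.
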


\begin{proof}
$D(\xi^k)=k\,\xi^k\,D\lambda(\xi)$ has valuation $\ge kn+s_v+1$, so the leading form of $\xi^k$, of degree $kn$, lies in the kernel of the leading part.
\end{proof}

\begin{remark}\label{rem:kerexp}
The condition $v(D\lambda(\xi))>s_v$ says that $\lambda(\xi)$ is constant to first order at $v$ --- it is the silent-zero condition of Section~\ref{sec:candidates} read at the level of the leading part. At the special prime $(W)$ of the Lambert tower, $\xi=e^{W}=x/W$ has $v(\xi)=-1$ and $\lambda=W$, and the kernel degrees are its powers: the class $\bar x^{-a}W^a$ is the leading form of $\xi^{-a}$. So the places that Proposition~\ref{prop:K567} leaves undecided are, in the towers of this paper, exactly the places at which the silent $\Ei$ kernels live, and deciding the exponent there is a Risch equation in $(\kappa_v,\partial_v)$. In keeping with the series we leave it to the guess; the silent kernel column is what the guess protects.
\end{remark}

The other places still left to the guess in Section~\ref{sec:examples} are: the chain at $t_1=\infty$ in $\int\log\log x\,dx$ ($t_2$ depends on $t_1$; \cite[Ex.~8.13]{PartII}); and $x=\infty$ in \S\ref{ex:hebisch22}, a place of a radicand in two generators, which is none of the four kinds of place.

\subsection{Effect}

With (K5)--(K8) and (B) installed in Algorithm~6 for the special stage, the eighteen examples of Section~\ref{sec:examples} that reach the extended system have every bound proved in $14$ cases instead of $8$; their ans\"atze shrink from $532$ unknowns to $292$ and the rungs of the retry ladder from $23$ to $19$. On $141$ random exact integrands $f=Dg$ in nine towers --- logarithm and exponential, nested exponentials, Lambert, $\sqrt{\log x}$, nested logarithms, $\log x$ with $e^{-x^2}$, and the curves $y^2=\theta^3-\theta$, $y^2=2x\theta$, $y^2=\theta^2+\theta+1$ over exponentials --- Part~II with the extended criteria integrates every one, as it must when every proved bound is correct, with $4813$ unknowns against $7991$. The script \texttt{bounds\_special.py} reproduces these numbers.

\section{Kernel candidates}\label{sec:candidates}

\subsection{Exponential sources}\label{sec:sources}

An \emph{exponential source} is a pair $(\xi,\lambda)$ of elements of $L$ with $D\xi/\xi=D\lambda$. The tower provides them: a hyperexponential generator $\theta=e^{\eta}$ gives $(\theta,\eta)$; a logarithmic generator $t=\log a$ gives $(a,t)$; a Lambert generator $W=W(a)$ gives $(a/W,W)$, since $We^W=a$; a hypertangent generator $t=\tan a$ gives $\bigl((1+\sqrt{-1}\,t)/(1-\sqrt{-1}\,t),\ 2\sqrt{-1}\,a\bigr)$. When a generator was absorbed by flattening \cite[Lemma~3.2]{PartII} its expression in the new generators is substituted: for $\sqrt{\log x}$ the tower is $\Q(x,z)$ with $z^2=\log x$, and the logarithm gives the source $(x,z^2)$. Every source is verified by the identity $D\xi/\xi=D\lambda$ in the tower, and normalised by $\xi=e^{\lambda}$, which is exact for the four kinds above. Products of sources, $\xi=\prod\xi_i^{j_i}$ with $\lambda=\sum j_i\lambda_i$, are the \emph{exponential monomials}.

\subsection{$\Ei$ kernels determined by residues}\label{sec:eires}

Let $P$ be a normal prime of $L$ over a prime linear in its main generator $g$, $\ell=g-\rho$, and let $f$ have a pole of order at least $\delta_P$ at $P$. The canonical residue $\tau=\hat\tau_P(f)$ is computed by \cite[Alg.~2]{PartII} (at a critical pole it is $(f\ell/D\ell)|_{g=\rho}$). For every exponential monomial $\xi$ whose logarithm $\lambda$ takes a constant value $c$ at $g=\rho$, let
\[
K_\xi:=\xi\,\frac{Dv}{v},\qquad v=\lambda-c,
\]
whose residue at $P$ is $\tau_\xi=v_P(v)\,\bar\xi$ by Proposition~\ref{prop:kernelnormal}(d). One solves the linear system
\[
\tau\;=\;r_0+\sum_\xi r_\xi\,\tau_\xi
\]
for constants $r_0,r_\xi$ by equating coefficients in $\kappa(P)$. If it has no solution, Corollary~\ref{cor:rescert} applies. Otherwise the kernels $r_\xi K_\xi$ are subtracted from $f$: the remainder has the constant residue $r_0$ at $P$, which Part~II realises by logarithms. By Lemma~\ref{lem:analytic} and the normalisation $\xi=e^{\lambda}$, each subtracted kernel integrates to $e^{c}\,\Ei(\lambda-c)$. The argument $v$ is determined by $\xi$ and $P$; nothing needs to be found, and in contrast with the logarithmic part there is no torsion question. A prime of $F[g]$ with constant coefficients and degree $\ge2$ is split over $\Fbar$, and each root is treated as above (\S\ref{ex:complexei}).

\subsection{$\Ei$ kernels with silent zeros}

An $\Ei$ argument all of whose zeros lie at special primes or at places at infinity leaves no residue at a normal prime. For each exponential monomial $\xi$ and each special prime or place at infinity at which $\lambda(\xi)$ takes a constant value $c$, the kernel $K_\xi$ with $v=\lambda-c$ is added as a column with an unknown coefficient: $\Ei(e^x)$ at the special prime $(e^x)$ (\S\ref{ex:eiexp}), $\Ei(W)$ at the special prime $(W)$ of the Lambert tower (\S\ref{ex:lambert}), $\Ei(1/x)$ at $x=\infty$, $\Ei(x)$ at the special prime $(x)$ of $\log x$ (\S\ref{ex:explog}). The special primes are those of \cite[Thm.~6.1]{PartII}.

\subsection{$\Gamma(s,\cdot)$ kernels, $s>0$: the Kummer search}\label{sec:kummer}

These kernels leave no residue (Proposition~\ref{prop:kernelnormal}(e)), and their candidates come from the exponential data and Corollary~\ref{cor:parameter}. Let $\theta$ be a hyperexponential generator and $J$ an exponent of $\theta$ in $f$ --- a half-integer when $q=c\,\theta^a q_0$ with $a$ odd, $y$ then carrying the weight $a/2$ --- or, for the other sources, $J=\pm1$. Let $\Lambda=J\lambda$. When $\Lambda$ is a rational function of a single generator $g$, the argument has the form $v=\Lambda+c$ with a constant shift $c$ taken from a finite set:
\begin{enumerate}
\item the roots $c$ of $\operatorname{Res}_g\bigl(N+cD_\Lambda,\ \partial_g(N+cD_\Lambda)\bigr)$, $\Lambda=N/D_\Lambda$, for which $\Lambda+c$ has a multiple zero;
\item the values $c=-\Lambda(\rho)$ at the roots $\rho$ of $q$ in $g$, for which $\Lambda+c$ vanishes at a branch prime.
\end{enumerate}
For each $c$, let $N_c$ be the gcd of the multiplicities of the zeros of $v$, a zero at a branch prime counted $m$ times. The admissible parameters are $s=a/k$ with $k\mid N_c$ (Corollary~\ref{cor:parameter}). For each $s$ the \emph{Kummer test} decides whether some $\omega\in L$ has $\omega^k=\kappa\cdot W$ with $W=(-v)^{k(s-1)}\theta^{kJ}$ and $\kappa$ constant: either $W$ is a constant times a $k$-th power in $K_0$, and $\omega$ is its root, or ($m=k=2$) $W/q$ is a constant times a square, and $\omega=y\sqrt{W/q}$. Every candidate is then verified by \eqref{eq:kernel} as an identity in $L$ and enters as a column $K_s(v,\omega)=\omega\,Dv$.

\begin{lemma}[Kummer test]\label{lem:kummer}
If $\omega\in L$ passes the test, $(v,\omega)$ is an $s$-kernel datum and $\int K_s(v,\omega)=\kappa^{-1/k}e^{-c}\,\Gamma(s,-v)$ for a choice of the $k$-th root.
\end{lemma}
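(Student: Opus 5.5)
The plan is to take logarithmic derivatives of the defining identity $\omega^k=\kappa W$ and reduce to the kernel condition \eqref{eq:kernel}; Lemma~\ref{lem:analytic} then gives the integral. I assume $\lambda$ is normalised by $\xi=e^{\lambda}$, so that $D\theta/\theta=D\eta$, and write $\theta^J=e^{J\eta}$. For non-$\theta$ sources the same holds with $\xi$ in place of $\theta$, and $\Lambda=J\lambda$ with $D\xi^J/\xi^J=D\Lambda$. Since $v=\Lambda+c$ with $c$ constant, $Dv=D\Lambda$.

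\emph{Step 1: the kernel identity.} From $\omega^k=\kappa(-v)^{k(s-1)}\theta^{kJ}$ with $\kappa\in\Fbar^*$ constant, take logarithmic derivatives:
\[
k\frac{D\omega}{\omega}=k(s-1)\frac{Dv}{v}+kJ\frac{D\theta}{\theta}=k(s-1)\frac{Dv}{v}+kD\Lambda .
\]
Divide by $k$ and use $D\Lambda=Dv$ to obtain \eqref{eq:kernel}. Since $v\notin\Const$ (it has a zero), $(v,\omega)$ is an $s$-kernel datum.

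In the radical branch $\omega=y\sqrt{W/q}$, we have $\omega^2=q\cdot W/q=W$ up to the constant absorbed into $\kappa$. So the same computation applies, with $D\omega/\omega=\tfrac12 Dq/q+\tfrac12 D(W/q)/(W/q)=\tfrac12 DW/W$. The formal power $(-v)^{k(s-1)}$ is integral, since $k(s-1)=a-k\in\Z$. So $W\in K_0$ or $L$, and the identity holds in $L$ itself, which is also what the verification step checks.

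\emph{Step 2: the constant.} By Lemma~\ref{lem:analytic}, $\omega=C(-v)^{s-1}e^{v}$ with $C$ constant, and $\int K_s=C\,\Gamma(s,-v)$. To identify $C$, raise to the $k$th power: $C^k(-v)^{k(s-1)}e^{kv}=\kappa(-v)^{k(s-1)}e^{kJ\eta}$. Since $e^{kv}=e^{kJ\eta}e^{kc}$, this gives $C^k=\kappa e^{-kc}$, so $C=\kappa^{1/k}e^{-c}$ for some choice of $k$th root.

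\emph{Step 3: reconciling the constant with the statement.} The statement writes $\kappa^{-1/k}$, so the convention for $\kappa$ has to be checked. If the test is read as $W=\kappa\,\omega^k$, i.e.\ "$W$ is a constant times a $k$th power", then $\omega^k=\kappa^{-1}W$ and $C=\kappa^{-1/k}e^{-c}$, as stated. I will adopt that reading.

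The main obstacle is not analytic but bookkeeping: fixing the branch and the normalisation conventions so that the constant $\kappa^{-1/k}e^{-c}$ comes out exactly. This covers the sign in $(-v)^{s-1}$ and the branch ambiguity of Lemma~\ref{lem:analytic}, which holds only up to additive, branch-dependent constants. The rest is immediate.
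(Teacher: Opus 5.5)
Your proposal is correct and follows the paper's proof. Logarithmic differentiation of $\omega^k=\kappa W$ together with $D\Lambda=Dv$ gives \eqref{eq:kernel}, and the normalisation $\theta^J=e^{v-c}$ identifies the constant of Lemma~\ref{lem:analytic}. Your Step~3 is worth keeping: the paper's own proof writes $\omega^k=\kappa W$ and so literally yields $\kappa^{1/k}e^{-c}$. The exponent $-1/k$ in the statement matches only the convention $W=\kappa\,\omega^k$ used in Algorithm~S3, and that is the reading you adopt.
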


\begin{proof}
$\omega^k=\kappa(-v)^{k(s-1)}\theta^{kJ}$ gives $k\,D\omega/\omega=k(s-1)Dv/v+kJ\,D\theta/\theta=k(s-1)Dv/v+k\,D\Lambda$, and $D\Lambda=Dv$: this is \eqref{eq:kernel}. With $\theta^J=e^{\Lambda}=e^{v-c}$, $\omega=\kappa^{1/k}(-v)^{s-1}e^{v}e^{-c}$, and Lemma~\ref{lem:analytic} applies.
\end{proof}

Every step of the search is algebraic: no differential equation is solved, and a candidate is admitted only after \eqref{eq:kernel} is verified. The search is where the radical matters: by Corollary~\ref{cor:parameter}, ramification is what lets $\Gamma(\tfrac12,\cdot)$ appear with an argument that has only simple zeros in $K_0$ (\S\S\ref{ex:conic}, \ref{ex:hebisch22}).

\subsection{Hypertangent towers}

For a hypertangent generator $t$, the source $\xi_t=(1+\sqrt{-1}\,t)/(1-\sqrt{-1}\,t)$ is supported on the special primes $(t\mp\sqrt{-1})$, and the rules above produce $\Si$, $\Ci$ and the Fresnel integrals as complex $\Ei$ and $\erf$ terms, recombined into real form at output. The residue decomposition has a transparent shape: $\hat\tau_P(f)\in\Fbar(\bar t)$ lies in the exponential span exactly when it is a Laurent polynomial in $\bar\xi_t$, i.e.\ when its poles in $\bar t$ lie at $\bar t=\pm\sqrt{-1}$ (\S\ref{ex:si}).

\section{Elliptic kernels}\label{sec:elliptic}

\subsection{The points live on the radical}

\begin{proposition}[Points]\label{prop:points}
Let $\cE$ be a constant elliptic curve. Then $\cE(\Fbar L)=\cE(\Fbar K_0(y))$, and if the tower has no radical layer, $\cE(\Fbar L)=\cE(\Fbar)$. Consequently no elliptic integral occurs in the antiderivative of an element of a transcendental tower, Liouvillian or not, monomial or not.
\end{proposition}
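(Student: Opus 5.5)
The plan is to prove the statement in two layers: first the transcendental layers below the radical, then those above it. The basic tool is the classical fact that a non-constant map from a curve of genus $0$ to a curve of genus $1$ does not exist. In algebraic form: if $k$ is a field and $k(t)$ is a purely transcendental extension, then every point of a constant elliptic curve $\cE$ with coordinates in $\overline{k}(t)$ already has coordinates in $\overline{k}$, by Lüroth and Riemann--Hurwitz. The tower is a sequence of such purely transcendental steps, since (T1) makes the generators algebraically independent, except for the single radical step $y^m=q$. Non-monomial generators such as $W$ or flattened $z$ are still transcendental over the field below, so the argument does not care whether they are monomials.

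Step 1 treats $K_0=F(t_1,\dots,t_j)$. Take a point $P=(p,q')\in\cE(\Fbar K_0)$ and argue by induction on $j$. View $p,q'$ as elements of $k(t_j)$ with $k=\Fbar(t_1,\dots,t_{j-1})$, and pass to $\overline{k}(t_j)$. If $p\notin\overline{k}$, the inclusion $\overline{k}(p,q')\subseteq\overline{k}(t_j)$ realises the function field of $\cE_{\overline{k}}$ inside a rational function field. By Lüroth this would make $\cE$ rational, contradicting genus $1$. Hence $p,q'\in\overline{k}\cap k(t_j)=k$, since $k$ is algebraically closed in $k(t_j)$. Iterating down to $\Fbar$ gives $\cE(\Fbar K_0)=\cE(\Fbar)$.

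Step 2 treats the upper generators. Let $M=\Fbar K_0(y)$ and $M_i=M(t_{j+1},\dots,t_i)$. Each $t_{i+1}$ is transcendental over $M_i$, by the algebraic independence in (T1)--(T4) of Part~II. The same Lüroth argument, with $k=M_i$, shows that $\cE(M_{i+1})=\cE(M_i)$, and therefore $\cE(\Fbar L)=\cE(M)$. Without a radical layer, $M=\Fbar K_0$, and Step 1 gives $\cE(\Fbar L)=\cE(\Fbar)$.

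For the consequence, note that a constant point $P_j$ has constant $p_j$, so $Dp_j=0$. All three elliptic terms of Theorem~\ref{thm:ellliouville} then vanish, and what remains is the elementary Liouville form. Theorem~\ref{thm:ellliouville} places $p_j,q_j$ in $L$ extended by algebraic constants, which is $\Fbar L$ up to finite extension of constants. That is harmless, because $\Fbar$ is algebraically closed.

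The main obstacle is Step 2. Upper generators have derivatives involving $y$, but only their algebraic independence matters here, not their derivation, and one must make sure (T1) indeed gives transcendence of $t_{i+1}$ over $M_i$ rather than merely over $K_0$. A second, smaller point is that $k$ must be algebraically closed in $k(t)$ when working over $\overline{k}$. I would settle this by checking that $\overline{k}\otimes_k k(t)=\overline{k}(t)$ after localisation, so that Galois invariance brings the point back to $k$.
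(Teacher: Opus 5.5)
Your proof is correct and follows the paper's argument. The paper likewise reduces everything to $\cE(M(t))=\cE(M)$ for $t$ transcendental over $M$ (a rational map from $\mathbb P^1$ to a genus-one curve is constant, which your Lüroth argument proves), iterates first over the generators of $K_0$ and then over the upper generators, whose algebraic independence over $K_0(y)$ it takes from (T3), and concludes because a constant point makes every elliptic term vanish. The obstacle you flag in Step~2 is not one: (T3) gives independence over $K_0(y)$ directly, and in any case independence over $K_0$ passes to the algebraic extension $K_0(y)$ by a transcendence-degree count.
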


\begin{proof}
If $t$ is transcendental over a field $M$, a point of $\cE$ over $M(t)$ is a rational map from $\mathbb P^1_M$ to a curve of genus $1$, hence constant: $\cE(M(t))=\cE(M)$. The upper generators are algebraically independent over $K_0(y)$ by (T3), and $K_0=F(t_1,\dots,t_j)$ is purely transcendental over $F$; iterate. A constant point makes every elliptic term of Theorem~\ref{thm:ellliouville} vanish.
\end{proof}

Every elliptic integral in an answer is therefore carried by the radical. The basic source is a \emph{pencil}: $q=Q(u)$ with $u$ a generator of $K_0$ and $Q$ with constant coefficients, so that $(u,y)$ is a point of the constant curve $\cC:Y^2=Q(U)$. The pencil variable may be $x$, an exponential, a logarithm, or any generator (\S\ref{ex:sinh}).

\subsection{Columns}

\begin{proposition}[Finitely many columns]\label{prop:columns}
Let $u$ be a pencil with curve $\cC$ of genus $g$ and let $\omega_1,\dots,\omega_{2g}$ be differentials of the second kind on $\cC$ forming a basis of $H^1_{\mathrm{dR}}(\cC)$. For every constant elliptic curve $\cE$, every morphism $\phi:\cC\to\cE$ and every differential $\omega$ of the first or second kind on $\cE$, with $P_0=(u,y)$,
\[
(\phi\circ P_0)^*\omega\ \in\ \sum_{i=1}^{2g}\Fbar\,P_0^*\omega_i\;+\;D(\Fbar L).
\]
\end{proposition}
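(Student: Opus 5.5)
The plan is to push everything back to the curve $\cC$ and use the de Rham cohomology there. First, $\phi^*\omega$ is a differential on $\cC$, and it is again of the second kind. For the first kind, $\phi^*$ of a regular differential is regular. For the second kind, at a point $Q\in\cC$ over a pole of $\omega$ the residue of $\phi^*\omega$ at $Q$ is the local trace-compatible pull-back of the residue, which is $e_Q\cdot\res_{\phi(Q)}\omega=0$. More cleanly, $\omega$ is locally exact, $\omega=dh$ plus a regular form, so $\phi^*\omega=d(h\circ\phi)$ plus a regular form, again locally exact. Since $\omega_1,\dots,\omega_{2g}$ span $H^1_{\mathrm{dR}}(\cC)$, realised as differentials of the second kind modulo exact differentials $dh$ with $h\in\Fbar(\cC)$, there are constants $a_i\in\Fbar$ and $h\in\Fbar(\cC)=\Fbar(U,Y)$ with
\[
\phi^*\omega=\sum_{i=1}^{2g}a_i\,\omega_i+dh .
\]
The constants are in $\Fbar$ because $\cC$, $\cE$, $\phi$ and $\omega$ are all defined over $\Fbar$.

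Second, pull this identity back along the point $P_0=(u,y)$. Here we use the fact that, for a constant curve and a point with coordinates in $\Fbar L$, the map $\eta\mapsto P_0^*\eta$, which sends $r(U,Y)\,dU$ to $r(u,y)\,Du$, is compatible with $d$: $P_0^*(dh)=D\bigl(h(u,y)\bigr)$. This is the chain rule. $D$ acts trivially on the constant coefficients, and $Dy=Q'(u)Du/(2y)$ from $y^2=Q(u)$ matches $dY=Q'(U)\,dU/(2Y)$ on $\cC$. Functoriality $(\phi\circ P_0)^*=P_0^*\circ\phi^*$ is the same chain rule applied to the coordinates of $\phi$. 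This gives
\[
(\phi\circ P_0)^*\omega=\sum a_i\,P_0^*\omega_i+D\bigl(h(u,y)\bigr),
\]
with $h(u,y)\in\Fbar L$. We need $h(u,y)$ to be defined, that is, $u$ must not make a denominator of $h$ vanish. Since $u$ is a generator of $K_0$ and transcendental over $\Fbar$, $(u,y)$ is a generic point of $\cC$ over $\Fbar$. Hence every element of $\Fbar(\cC)$ specialises to a well-defined element of $\Fbar L$, and the specialisation $\Fbar(\cC)\to\Fbar L$ is an injective field homomorphism.

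The main obstacle is the first step. We must check that the pull-back of a second-kind differential is again of the second kind: residues vanish at every point of $\cC$, including ramification points of $\phi$ and points above infinity. We must also check that the quotient of second-kind differentials by exact ones is exactly $H^1_{\mathrm{dR}}$ of dimension $2g$, so that the $\omega_i$ indeed span it. I would cite the standard comparison, which holds in characteristic $0$ over $\Fbar$: differentials of the second kind modulo $d\Fbar(\cC)$ form $H^1_{\mathrm{dR}}$. For the residue check I would use the local-exactness argument above, which avoids computing residues at ramified points.
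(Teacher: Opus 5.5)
Your proposal is correct and follows the paper's proof. The paper argues that $\phi^*\omega$ is of the second kind, hence congruent modulo $d\Fbar(\cC)$ to a combination of the $\omega_i$, and that $P_0^*(dh)=D(h\circ P_0)$ with $h\circ P_0\in\Fbar L$; your extra checks (the local-exactness argument for the pull-back, and the generic-point argument showing that $h(u,y)$ is defined) only fill in details the paper leaves implicit.
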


\begin{proof}
$\phi^*\omega$ is of the second kind on $\cC$, hence congruent modulo $d\Fbar(\cC)$ to a combination of the $\omega_i$; and $P_0^*(dh)=D(h\circ P_0)$ with $h\circ P_0\in\Fbar L$.
\end{proof}

For $y^2=Q(u)$ of degree $d$ the differentials $U^i\,dU/Y$, $0\le i\le d-2$, span the second-kind part together with, for even $d$, the third-kind differential at the two points at infinity; the algorithm uses the $d-1$ columns $u^iDu/y$. On a cubic these are the two columns $Du/y$ and $u\,Du/y$.

\subsection{Residues pull back}

\begin{lemma}[Residues pull back]\label{lem:respull}
Let $\cC$ be a constant curve, $P\in\cC(\Fbar L)$ a non-constant point, $\omega$ a differential on $\cC$ over $\Fbar$, $\fp$ a normal prime and $\bar P\in\cC(\kappa(\fp))$ the reduction of $P$. If $\bar P$ is not an $\Fbar$-point, $v_\fp(P^*\omega)\ge1-\delta_\fp$ and $\hat\tau_\fp(P^*\omega)=0$. If $\bar P=Q\in\cC(\Fbar)$, let $z$ be a local parameter at $Q$ over $\Fbar$, $n:=v_\fp(z\circ P)\ge1$, and $k$ the pole order of $\omega$ at $Q$. Then $\hat\tau_\fp(P^*\omega)=n\,\res_Q\omega\in\Fbar$, $v_\fp(P^*\omega)\ge-(k-1)n-\delta_\fp$ for $k\ge1$, and $v_\fp(P^*\omega)\ge1-\delta_\fp$ if $\omega$ is regular at $Q$.
\end{lemma}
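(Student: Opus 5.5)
The argument is local. We expand $P^*\omega$ around the reduction $\bar P$ and use only the valuation lemma \cite[Thm.~4.5]{PartII}, the logarithmic-derivative lemma \cite[Lemma~5.3]{PartII} and the properties of $\hat\tau_\fp$ in \cite[Prop.~7.6]{PartII}: it is $\Fbar$-linear, kills $D(\Fbar L)$, satisfies $\hat\tau(Du/u)=v(u)$, and vanishes at valuation $>-\delta_\fp$.

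\emph{Case $\bar P$ not an $\Fbar$-point.} Choose a local coordinate $h\in\Fbar(\cC)$ regular at $\bar P$, with $\omega=a\,dh$ and $a$ regular at $\bar P$. This is possible because $\bar P$ is a non-closed point over $\Fbar$ (the generic point), so only finitely many $\Fbar$-points are excluded. Then $h\circ P$ and $a\circ P$ are $\fp$-integral. Moreover $h\circ P$ does not reduce to a constant, but this is not even needed: by the valuation lemma, $v_\fp(D(h\circ P))\ge1-\delta_\fp$ for a $\fp$-integral element (its last clause for valuation $0$, the direct formula for positive valuation). Hence $v_\fp(P^*\omega)=v_\fp(a\circ P)+v_\fp(D(h\circ P))\ge1-\delta_\fp$, and the canonical residue vanishes.

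\emph{Case $\bar P=Q$ is an $\Fbar$-point.} Let $z$ be the local parameter and $w:=z\circ P$, so $n=v_\fp(w)\ge1$. First, $w\ne0$ because $P$ is non-constant. Expand $\omega=\bigl(\sum_{j\ge-k}a_jz^j\bigr)dz$ as a Laurent series at $Q$ with $a_j\in\Fbar$. Truncate this algebraically: write $\omega=\sum_{j=-k}^{-2}a_j\,z^j\,dz+a_{-1}\,dz/z+\omega_0$, where $\omega_0$ is regular at $Q$. We may take the terms $z^jdz$ for $j\le-2$ to be exact, $d(z^{j+1}/(j+1))$ in $\Fbar(\cC)$. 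Pulling back term by term:
\begin{itemize}
\item The exact terms become $D\bigl(w^{j+1}/(j+1)\bigr)$. These have residue $0$, and by the valuation lemma their valuation is $(j+1)n-\delta_\fp\ge-(k-1)n-\delta_\fp$.
\item The logarithmic term becomes $a_{-1}Dw/w$. Its valuation is $\ge-\delta_\fp$, and its residue is $a_{-1}v_\fp(w)=n\,\res_Q\omega$.
\item The regular term $\omega_0=b\,dz$ has $b$ regular at $Q$. So $b\circ P$ is $\fp$-integral and $v_\fp(Dw)=n-\delta_\fp\ge1-\delta_\fp$, giving valuation $\ge1-\delta_\fp$ and residue $0$.
\end{itemize}
Summing the three contributions gives all three assertions, including the regular case $k=0$.

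The main obstacle is making the truncation legitimate. We need global functions in $\Fbar(\cC)$ realising the polar part. With a local parameter $z$ chosen in $\Fbar(\cC)$, the difference $\omega-\sum_{j<0}a_jz^jdz$ is a global differential regular at $Q$, so this works. Composition with $P$ must also preserve regularity and valuations: a function regular at $Q$ composed with $P$ is $\fp$-integral because $P$ reduces to $Q$. This is the standard specialization of the local ring of $\cC$ at $Q$ into $\cA_\fp$, and it should be stated explicitly.
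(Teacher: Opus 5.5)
Your proof is correct and follows the paper's argument. If $\bar P$ is not an $\Fbar$-point, you write $\omega$ as a regular function times the differential of a regular function and apply the valuation lemma. At an $\Fbar$-point $Q$, you split $\omega$ into an exact part, the logarithmic part $\res_Q(\omega)\,dz/z$ and a part regular at $Q$, exactly as the paper does. The only differences are minor: the paper gets the valuation bound in one line from $\omega=z^{-k}(\text{unit})\,dz$, which even gives equality, whereas you sum the termwise bounds; and you state explicitly the specialisation of the local ring at $\bar P$ into the valuation ring at $\fp$, which the paper leaves implicit.
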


\begin{proof}
The poles of $\omega$ are $\Fbar$-points. In the first case $\omega=h\,dz'$ near $\bar P$ with $h$ regular and $z'$ a local parameter, so $P^*\omega=(h\circ P)D(z'\circ P)$ with both factors $\fp$-integral and $v_\fp(D(z'\circ P))\ge1-\delta_\fp$. In the second, choose $h\in\Fbar(\cC)$ with $\omega-dh$ having at most a simple pole at $Q$; then $\omega-dh=\res_Q(\omega)\,dz/z+h_1\,dz$ with $h_1$ regular at $Q$, and
\[
P^*\omega=D(h\circ P)+\res_Q(\omega)\,\frac{D(z\circ P)}{z\circ P}+(h_1\circ P)\,D(z\circ P).
\]
The canonical residue kills the first and third terms and gives $n\res_Q\omega$ on the second. The valuation bound is the same computation with $\omega=z^{-k}(\text{unit})\,dz$.
\end{proof}

\begin{corollary}\label{cor:ellnormal}
At a normal prime the first-kind kernels have valuation $\ge1-\delta_\fp$ and no residue --- they share the window with the $\Gamma(s,\cdot)$ kernels, $s>0$ --- the second-kind kernels have no residue and poles deeper than $\delta_\fp$ only where $\bar P$ is a pole of the differential, and the third-kind kernels have constant residues. Hence the residue certificate of Corollary~\ref{cor:rescert} remains a certificate when elliptic integrals are allowed, and the bounds of Theorem~\ref{thm:structureG}(i) and Corollary~\ref{cor:bounds} hold with the minimum taken over the elliptic kernels as well.
\end{corollary}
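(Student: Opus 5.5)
The plan is to read Corollary~\ref{cor:ellnormal} off Lemma~\ref{lem:respull} kind by kind, and then feed the resulting valuation and residue information into the earlier proofs of Corollary~\ref{cor:rescert}, Theorem~\ref{thm:structureG}(i) and Corollary~\ref{cor:bounds}. By Theorem~\ref{thm:ellliouville} and Proposition~\ref{prop:points}, each elliptic term is $P^*\omega$ for a point $P$ of a constant curve over $\Fbar L$, and $\omega$ is of the first, second or third kind. Fix a normal prime $\fp$ and the reduction $\bar P$, and split into two cases. If $\bar P$ is not an $\Fbar$-point, Lemma~\ref{lem:respull} gives valuation $\ge1-\delta_\fp$ and residue $0$ for every kind. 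If $\bar P=Q\in\cC(\Fbar)$, the lemma gives $\hat\tau_\fp(P^*\omega)=n\res_Q\omega$.

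The three kinds then behave as follows.
\begin{itemize}
\item \emph{First kind.} $\omega$ is regular everywhere, so the valuation is $\ge1-\delta_\fp$ and the residue vanishes. This is exactly the window of Proposition~\ref{prop:kernelnormal}(e).
\item \emph{Second kind.} Every residue of $\omega$ is $0$, so $\hat\tau_\fp=0$. The valuation bound $-(k-1)n-\delta_\fp$ drops below $-\delta_\fp$ only when $k\ge2$, that is, only when $Q$ is a pole of $\omega$.
\item \emph{Third kind.} $\omega$ has at most simple poles, so $k\le1$ and the valuation is $\ge-\delta_\fp$. The residue $n\res_Q\omega$ lies in $\Fbar$.
\end{itemize}

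For the certificate, I would redo the proof of Corollary~\ref{cor:rescert} with the elliptic terms added to \eqref{eq:liouG}. Apply $\hat\tau_\fp$ to both sides as in Theorem~\ref{thm:structureG}(ii). The elliptic terms contribute only constants in $\Fbar$, which are absorbed into the summand $\Fbar$, so the residue still lies in $\Fbar+\sum\Fbar\bar\xi$. Descent needs no hypothesis here, because Theorem~\ref{thm:ellliouville} already places the points in $\Fbar L$. One point needs care: $\hat\tau_\fp$ is only defined or linear at valuation $\ge-\delta_\fp$. Second-kind terms with deep poles therefore have to be treated through the decomposition used in the proof of Lemma~\ref{lem:respull}, $P^*\omega=D(h\circ P)+\cdots$. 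The $D(h\circ P)$ part is moved into $Dv_0$, and the remainder then has valuation $\ge-\delta_\fp$.

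For the bounds, I would treat the elliptic kernels exactly like the $K_l$. The proof of Theorem~\ref{thm:structureG}(i) needs only that every non-$Dv_0$ term has valuation bounded below by the minimum appearing in the statement. The proof of Corollary~\ref{cor:bounds} likewise needs only that one can subtract $\sum\gamma_lK_l$, and Theorem~\ref{thm:extra} uses $\mu$ in the same way. Taking $\mu$ to include the elliptic valuations makes these arguments apply verbatim.

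The main obstacle I expect is the second-kind case in the certificate. Moving $D(h\circ P)$ into $v_0$ shifts $v_0$ by a function that may have poles at $\fp$. That shift is harmless for the residue statement, but it must not be confused with the Hermite bound. The clean route is to state the bounds in terms of the valuations of the kernels themselves, as the corollary does, rather than in terms of $v_0$ after normalisation.
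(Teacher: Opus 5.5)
Your proposal is correct and follows the paper's route. The paper gives no separate proof: the corollary is read off Lemma~\ref{lem:respull} kind by kind and then fed into the proofs of Corollary~\ref{cor:rescert}, Theorem~\ref{thm:structureG}(i) and Corollary~\ref{cor:bounds}, with the elliptic terms subtracted alongside the $K_l$, exactly as you do. Your detour of absorbing $D(h\circ P)$ into $v_0$ is harmless but unnecessary: $\hat\tau_\fp$ is $\Fbar$-linear and vanishes on $D(\Fbar L)$ at every pole order, which is how the proof of Lemma~\ref{lem:respull} itself computes the residue.
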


\subsection{The third kind}

In Part~II a residue divisor proved non-torsion by reduction modulo primes \cite[Prop.~9.4]{PartII} certifies non-elementarity. In the elliptic class it is instead a third-kind term, and on a pencil the term is explicit.

\begin{proposition}[Third-kind kernel]\label{prop:third}
Let $u$ be a pencil with $Du=r\in F(u)^*$ and $q=Q(u)$ squarefree, let $f=f_0+f_1y$ with $f_1\in F(u)$, and let $p\in F[u]$ be an irreducible factor of the denominator of $f_1$ of multiplicity one with $p\nmid q$. Let $c\in F[u]$, $\deg c<\deg p$, be the reduction of $f_1pq/r$ modulo $p$. Then
\[
K_p:=\frac{c\,Du}{p\,y}=\frac{c\,r}{p\,q}\,y
\]
has, at every place over $p$, the same residue as $f$ minus the residue of $f_0$; so $f-K_p$ has residues at the places over $p$ that are equal on the two sheets, realised by $\log p$. With $\rho$ ranging over the roots of $p$, $\int K_p=\sum_\rho\frac{c(\rho)}{p'(\rho)}\int\frac{dU}{(U-\rho)Y}$.
\end{proposition}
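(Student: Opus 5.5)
The plan is to work locally at the places over $p$ and compare residues sheet by sheet. Since $p\nmid q$, the prime $p$ is unramified in $y^2=q$. Over $\Fbar$ it splits into linear factors $u-\rho$, and over each root $\rho$ there are two places $P_\pm$, where $y$ reduces to $\pm\sqrt{Q(\rho)}\neq0$. I will also need $r(\rho)\ne0$ and that $\rho$ is not a pole of $r$ (so $p$ is normal for the pencil variable). In that case $Du/(u-\rho)$ has valuation $-\delta_P$, and by \cite[Thm.~7.4(i)]{PartII} the residue of a simple critical pole $h\,Du/(u-\rho)$ is the reduction of $h$. If these nondegeneracy conditions fail, I would have to treat $p$ as a special prime. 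I will take them as implicit in ``$p$ irreducible factor of the denominator'' of a normal integrand.

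For the residue computation, I write $f_1 y = \bigl(f_1 p q/r\bigr)\cdot\frac{r}{pq}\,y = \bigl(f_1pq/r\bigr)\frac{Du}{p\,y}$, using $y^2=q$. Since $p$ divides the denominator of $f_1$ exactly once and $p\nmid q$, the factor $f_1pq/r$ is $p$-integral, with reduction $c$ modulo $p$. So $f_1y-K_p=(f_1pq/r-c)\,Du/(py)$. The bracket vanishes at every place over $p$, so this difference has valuation $\ge1-\delta_P$ there and zero canonical residue. Therefore $\hat\tau_{P}(f)=\hat\tau_P(f_0)+\hat\tau_P(K_p)$ at each place $P$ over $p$, which is the claim.

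For the ``equal on the two sheets'' part, I use the automorphism $y\mapsto -y$. It commutes with $D$ because $Du\in F(u)$, it swaps $P_+$ and $P_-$, and it fixes $f_0$. Hence $\hat\tau_{P_+}(f_0)=\hat\tau_{P_-}(f_0)$, and this common value is the residue of $f_0$ at $\rho$ as an element of $F(u)$. The remainder $f-K_p$ has residue $\hat\tau_P(f_0)$ at both places. For $f_0$ with a simple pole at $p$, this residue is the constant attached to $\log p$, by the Rothstein–Trager form in $F(u)$ with $Du=r$. This settles the residue part of the statement.

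The integral formula is a partial fraction decomposition. Since $\deg c<\deg p$ and $p$ is squarefree, $c/p=\sum_\rho c(\rho)/\bigl(p'(\rho)(u-\rho)\bigr)$. Then $K_p=\sum_\rho \frac{c(\rho)}{p'(\rho)}\frac{Du}{(u-\rho)y}$, and each summand is the pull-back by $P_0=(u,y)$ of $dU/((U-\rho)Y)$, so integration commutes with the pull-back. The main obstacle I expect is justifying the residue formula at $P_\pm$ in the pencil setting, namely that the places over $p$ are normal with $\delta_P=1$ relative to the derivation. For this I would invoke the valuation lemma \cite[Thm.~4.5]{PartII} together with unramifiedness, or, alternatively, Lemma~\ref{lem:respull} applied to $\omega=c\,dU/(pY)$ on $\cC$. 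In the latter route the residue at $(\rho,\pm\sqrt{Q(\rho)})$ is $\pm c(\rho)/(p'(\rho)\sqrt{Q(\rho)})$, and $n=1$ because $u-\rho$ is a local parameter at $\rho$ and $\rho$ is not a branch point.
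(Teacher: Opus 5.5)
Your proposal is correct and takes essentially the paper's approach. The paper computes the residues of $f_1y$ and of $K_p$ at $(\rho,\pm y_\rho)$ from Part~II's Thm.~7.4 and matches them using $c\equiv f_1pq/r$ modulo $p$ and $y_\rho^2=q(\rho)$; this is the same congruence you use to show that $f_1y-K_p=(f_1pq/r-c)\,Du/(py)$ carries no residue, and both arguments end with the partial-fraction decomposition of $c/p$. Your involution $y\mapsto-y$ justifies the equality of the $f_0$-residues on the two sheets, which the paper only asserts; the residue formula $(f_0\ell/D\ell)|_P$, which involves no $y$, gives that equality even when the involution does not extend to the whole tower.
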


\begin{proof}
At a place $(\rho,\pm y_\rho)$ over $p$ the residue of $a\,y$ with $a$ having a simple pole is $\pm(a\,p/Dp)|_\rho\,y_\rho=\pm y_\rho(a\,p/(p'r))|_\rho$ by \cite[Thm.~7.4]{PartII}. For $a=f_1$ and $a=cr/(pq)$ these agree because $c\equiv f_1pq/r$ modulo $p$ and $y_\rho^2=q(\rho)$. The residues of $f_0$ at the two places over $\rho$ are equal. The last formula is the partial-fraction decomposition of $c/p$.
\end{proof}

\subsection{Presentation}\label{sec:presentation}

\begin{lemma}[Legendre forms]\label{lem:legendre}
Let $y^2=(U-e_1)(U-e_2)(U-e_3)$, $\lambda^2=e_1-e_3$, $m=(e_2-e_3)/(e_1-e_3)$, and $\sin\varphi=\lambda/\sqrt{U-e_3}$. Then, up to additive constants,
\[
\int\frac{dU}{y}=-\frac2\lambda F(\varphi\,|\,m),\qquad
\int\frac{U\,dU}{y}=-\frac2\lambda(e_3+\lambda^2)F(\varphi\,|\,m)+2\lambda E(\varphi\,|\,m)+\frac{2y}{U-e_3},
\]
\[
\int\frac{dU}{(U-\rho)\,y}=\frac{2}{\lambda(\rho-e_3)}\Bigl(F(\varphi\,|\,m)-\Pi(n;\varphi\,|\,m)\Bigr),\qquad n=\frac{\rho-e_3}{\lambda^2}.
\]
For $y^2=c(1-U^2)(1-kU^2)$, $\int dU/y=F(\arcsin U\,|\,k)/\sqrt c$ and $\int U^2\,dU/y=\bigl(F-E\bigr)(\arcsin U\,|\,k)/(k\sqrt c)$.
\end{lemma}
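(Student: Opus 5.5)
The plan is to verify each identity by differentiation, and to reduce everything to the standard Legendre integrands by one substitution. For the cubic, set $w:=\sin\varphi=\lambda/\sqrt{U-e_3}$, so $U=e_3+\lambda^2/w^2$ and $dU=-2\lambda^2w^{-3}\,dw$. Then
\[
U-e_1=\lambda^2(1-w^2)/w^2,\qquad U-e_2=\lambda^2(1-mw^2)/w^2,\qquad U-e_3=\lambda^2/w^2,
\]
using $e_1-e_3=\lambda^2$ and $e_2-e_3=m\lambda^2$. Hence $y^2=\lambda^6(1-w^2)(1-mw^2)/w^6$. Choosing the sign of $y$ compatibly gives $y=\lambda^3w^{-3}\sqrt{(1-w^2)(1-mw^2)}$. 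From this,
\[
\frac{dU}{y}=-\frac{2}{\lambda}\,\frac{dw}{\sqrt{(1-w^2)(1-mw^2)}}=-\frac2\lambda\,\frac{d\varphi}{\sqrt{1-m\sin^2\varphi}},
\]
since $dw=\cos\varphi\,d\varphi$ and $\sqrt{1-w^2}=\cos\varphi$ on the principal branch. This gives the first formula. The sign conventions (branch of $y$, of $\lambda$, of the square root) are absorbed in the choice "up to additive constants" together with the convention that $y$ denotes the branch making the substitution hold. I would state that choice explicitly at the start.

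For $\int U\,dU/y$, write $U=e_3+\lambda^2/w^2$. The $e_3$ part is $e_3$ times the first integral. The remaining integrand is $-2\lambda\,w^{-2}\,dw/\sqrt{(1-w^2)(1-mw^2)}$. The key step is the classical reduction of $\int w^{-2}\,dw/\sqrt{(1-w^2)(1-mw^2)}$. I would differentiate
\[
\frac{\sqrt{(1-w^2)(1-mw^2)}}{w}
\]
and check that its derivative equals $-w^{-2}/\sqrt{\cdot}+\bigl(m-(1-mw^2)\bigr)/\sqrt{\cdot}$, up to the precise combination. That expresses the integral as $-\sqrt{\cdot}/w+\int\bigl(1-(1-mw^2)\bigr)\ldots$, i.e.\ in terms of $F$ and $E$, using $\int\sqrt{1-mw^2}/\sqrt{1-w^2}\,dw=E$. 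Rewriting $\sqrt{(1-w^2)(1-mw^2)}/w$ back in $U$ gives $y\,w^2/\lambda^3$, which becomes $y/(\lambda(U-e_3))$ after the constant factor. This should yield the term $2y/(U-e_3)$ and the coefficient $-\tfrac2\lambda(e_3+\lambda^2)$ of $F$. This bookkeeping is the step I expect to be most error-prone, so I would close it by differentiating the claimed right-hand side directly in $U$, using $d\varphi/dU$ computed from $\sin\varphi$.

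For the third kind, $U-\rho=(\rho-e_3)^{-1}\cdot\ldots$. Precisely, $U-\rho=\lambda^2w^{-2}\bigl(1-nw^2\bigr)$ with $n=(\rho-e_3)/\lambda^2$. Therefore
\[
\frac{dU}{(U-\rho)y}=-\frac{2}{\lambda^3}\,\frac{w^2\,dw}{(1-nw^2)\sqrt{\cdot}}.
\]
Then use $w^2/(1-nw^2)=n^{-1}\bigl(1/(1-nw^2)-1\bigr)$, which gives $\frac{2}{\lambda^3 n}(F-\Pi)$, equal to the stated form. For the Legendre quartic $y^2=c(1-U^2)(1-kU^2)$, substitute $U=\sin\psi$: $dU/y=d\psi/(\sqrt c\sqrt{1-k\sin^2\psi})$ immediately. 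For $U^2$, use $kU^2=1-(1-kU^2)$, which gives $(F-E)/(k\sqrt c)$.

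Finally, I would record the domain: with principal $\arcsin$ and $|\operatorname{Re}\varphi|\le\pi/2$, $\cos\varphi$ is the principal root. The identities then hold locally, and everywhere up to branch constants, consistent with the verification-by-differentiation policy.
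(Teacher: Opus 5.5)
Your proposal is correct and is the paper's proof written in the variable $w=\sin\varphi$. It uses the same substitution $U=e_3+\lambda^2/w^2$, the same reduction of $\int w^{-2}\,dw/\sqrt{(1-w^2)(1-mw^2)}$ by differentiating $\sqrt{(1-w^2)(1-mw^2)}/w=\cot\varphi\sqrt{1-m\sin^2\varphi}$, and the same partial-fraction identities for the third kind and the Legendre quartic. One slip to fix: that derivative is $\bigl(-w^{-2}+1-(1-mw^2)\bigr)/\sqrt{(1-w^2)(1-mw^2)}$, with $1$ rather than $m$ in front of $-(1-mw^2)$, which is the form your next line actually uses.
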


\begin{proof}
With $U=e_3+\lambda^2/\sin^2\varphi$ one has $dU/y=-(2/\lambda)\,d\varphi/\Delta$, $\Delta=\sqrt{1-m\sin^2\varphi}$, which gives the first formula. For the second, $\int d\varphi/(\sin^2\varphi\,\Delta)=F-E-\cot\varphi\,\Delta$ (differentiate), and $\cot\varphi\,\Delta=y/(\lambda(U-e_3))$. For the third, $1/(U-\rho)=\sin^2\varphi/(\lambda^2(1-n\sin^2\varphi))$ and $\sin^2\varphi/(1-n\sin^2\varphi)=n^{-1}\bigl((1-n\sin^2\varphi)^{-1}-1\bigr)$. The Legendre case is the definition of $F$ and $E$ with $U=\sin\varphi$.
\end{proof}

The Weierstrass forms ($\int dX/Y=-w$, $\int X\,dX/Y=\zeta(w)$ with $X=\wp(w)$, $Y=-\wp'(w)$) and Jacobi's form of the third kind, $u\,Z(a)+\tfrac12\log\bigl(\Theta(u-a)/\Theta(u+a)\bigr)$ with $\Theta(u)=\vartheta_4(\pi u/(2K))$ and $Z(a)=E(\operatorname{am}a)-\tfrac{E(m)}{K(m)}F(\operatorname{am}a)$ \cite[Ch.~22]{WhittakerWatson}, \cite[Chs.~19, 22]{DLMF}, are the other presentations; the latter is where $\vartheta$ and the complete integrals $K(m)$, $E(m)$ appear, as constants. When the residue component is torsion of order $N$, $a$ is an $N$-torsion point, and the $N$-th power of the theta quotient is, up to the exponential of a first-kind multiple, the Miller function with which Part~II realises torsion logands \cite[Alg.~3(d)]{PartII}: logand and third kind are the two sides of the torsion dichotomy.

For $\Gamma(s,\cdot)$ the presentation writes $\Gamma(\tfrac12,z)=\sqrt\pi\erfc(\sqrt z)$, $\erfc(\sqrt{-1}\,z)$ through $\erfi(z)$, $\Ei(\log a)=\li(a)$, and a pair $a\,\Ei(\sqrt{-1}z)+b\,\Ei(-\sqrt{-1}z)$ as $(a+b)\Ci(z)+\sqrt{-1}(a-b)\Si(z)$, each up to an additive constant.

\section{The algorithm}\label{sec:algorithm}

The algorithm is written as ten boxes in the style of Part~II: nine here, and \textsc{PartialIntegrate} in \S\ref{sec:partial}. Its main loop, Algorithm~S7, is Part~II's Algorithm~4 with kernel columns. Algorithms~S1--S4 generate the kernels, S5 extends Part~II's Algorithm~6, S6 solves and minimises, S8 certifies necessity and S9 presents the answer. Part~II's Algorithms~1--6 are used unchanged and cited by number. Throughout, elements of $L$ are held in the Trager coordinates $\sum_ia_iw_i$ of \cite[Prop.~3.5]{PartII}, and ``the scalar part'' of an element means its $w_0$-coordinate when all other coordinates vanish.

\algheader{Algorithm S1}{$\textsc{Sources}(L)$\quad(* exponential sources; Section~\ref{sec:sources} *)}
\begin{algorithmic}[1]
\Statex (* Given the tower and, when available, its surface form (the pairs (generator, expression) recorded by the tower builder), return a list of exponential sources $(\xi,\lambda)$, $\xi,\lambda\in L$, with $D\xi/\xi=D\lambda$ and the normalisation $\xi=e^{\lambda}$. *)
\State $\mathcal E\gets\emptyset$;\quad $\phi\gets$ the substitution sending each flattened generator $t$ to its expression $(u^{m'}-r)/c$ in the current generators \Comment{Lemma~3.2 of Part~II}
\For{each recorded pair $(t,\,E)$ of the surface form}
\State \textbf{case} $E=\exp(a)$: $(\xi,\lambda)\gets(t,\ \phi(a))$
\State \textbf{case} $E=\log(a)$: $(\xi,\lambda)\gets(\phi(a),\ \phi(t))$
\State \textbf{case} $E=W(a)$ (Lambert): $(\xi,\lambda)\gets(\phi(a)/t,\ t)$ \Comment{$We^W=a$}
\State \textbf{case} $E=\tan(a)$: $(\xi,\lambda)\gets\bigl((1+\sqrt{-1}\,t)/(1-\sqrt{-1}\,t),\ 2\sqrt{-1}\,\phi(a)\bigr)$;\quad \textbf{case} $E=\tanh(a)$: $\bigl((1+t)/(1-t),\ 2\phi(a)\bigr)$
\State \textbf{if} $D\xi/\xi-D\lambda=0$ exactly in $L$: append $(\xi,\lambda)$ to $\mathcal E$
\EndFor
\If{there is no surface form}
\For{each generator $t_k$, $k\ge2$, with $Dt_k=(w_k,0,\dots,0)$}
\State \textbf{if} $Dt_k/t_k$ is free of $t_k$: $\eta\gets$ Part~II's Algorithm~4 on $Dt_k/t_k$ over the tower below $t_k$; \textbf{if} $\eta$ is rational (no logarithm): append $(t_k,\eta)$
\State \textbf{else if} $Dt_k$ is free of $t_k$: $I\gets$ Part~II's Algorithm~4 on $Dt_k$ over the tower below; \textbf{if} $I=\sum n_i\log a_i$ with $n_i\in\Z$: append $(\prod a_i^{n_i},t_k)$
\EndFor
\State keep only the pairs verified by $D\xi/\xi=D\lambda$
\EndIf
\State \textbf{return} $\mathcal E$
\end{algorithmic}
\algfooter

\algheader{Algorithm S2}{$\textsc{ResidueKernels}(f,\mathcal E)$\quad(* $\Ei$ kernels determined by residues; Section~\ref{sec:eires} *)}
\begin{algorithmic}[1]
\Statex (* Return $(f',\mathcal D)$ with $f'=f-\sum_{(r,K)\in\mathcal D}rK$ and $\mathcal D$ the determined $s=0$ kernels, or \textbf{NotInClass}$(P,\tau)$. *)
\State $\mathcal D\gets\emptyset$;\quad $d\gets\operatorname{lcm}_i\den(a_i)$ for $f=\sum_ia_iw_i$
\For{each irreducible factor $p$ of $d$ over $F$ that involves a generator}
\State $(e_P,\eta_P,\delta_P,\mathrm{special}_P)\gets\textsc{Classify}(P)$ (Part~II, Alg.~1); \textbf{skip} $p$ if it is special or a branch prime, or if $v_P(f)>-\delta_P$
\State $g\gets$ the last generator in $p$;\quad $\mathcal R\gets$ the roots of $p$ in $g$: $\{\rho\}$ if $p$ is linear in $g$, the roots over $\Fbar$ if $p\in F[g]$ with constant coefficients, else \textbf{skip} $p$
\For{each $\rho\in\mathcal R$, $\ell\gets g-\rho$}
\State $\tau\gets\bigl(f\ell/D\ell\bigr)\big|_{g=\rho}$ if $v_P(f)=-\delta_P$; else (only when $q$ is absent and $p=\ell$) $\tau\gets\textsc{CanonicalResidue}(P,f)$ (Part~II, Alg.~2) \Comment{Thm.~7.4, Prop.~7.6 of Part~II}
\State \textbf{if} $\tau=(\tau_0,\tau_1,\dots)$ is constant at the places over $P$ ($\tau_0\in\Fbar$ and, for $m=2$, $\tau_1\in\Fbar$ and $q(\rho)\in\Fbar$): \textbf{skip}
\State $e_{\max}\gets$ the maximum of $2$ and the degrees, in every generator, of the numerators and denominators of $\tau$
\State $\mathcal M\gets$ the monomials $\xi=\xi_a^{j}$ and $\xi_a^{j}\xi_b^{k}$ over sources $(\xi_a,\lambda_a),(\xi_b,\lambda_b)\in\mathcal E$ with $1\le|j|,|k|\le e_{\max}$, and $\lambda(\xi)=j\lambda_a$, respectively $j\lambda_a+k\lambda_b$
\State \textbf{for} $\xi\in\mathcal M$ with $\lambda(\xi)$ scalar and $c:=\lambda(\xi)|_{g=\rho}$ finite and in $\Fbar$: $v\gets\lambda(\xi)-c$;\ $K_\xi\gets\xi\,Dv/v$;\ $\tau_\xi\gets(K_\xi\ell/D\ell)|_{g=\rho}$; keep $\xi$ if $\tau_\xi\ne0$
\State solve $\tau=r_0+\sum_\xi r_\xi\tau_\xi$ for $r_0,r_\xi\in\Fbar$, by equating in each coordinate the coefficients of the monomials in the remaining generators of the numerator of $\tau-r_0-\sum r_\xi\tau_\xi$; free unknowns $0$
\State \textbf{if} there is no solution: \textbf{return NotInClass}$(P,\tau)$ \Comment{Cor.~\ref{cor:rescert}, relative to $\mathcal E$}
\State \textbf{for} $r_\xi\ne0$: $f\gets f-r_\xi K_\xi$;\ append $(r_\xi,K_\xi)$ to $\mathcal D$, with antiderivative $r_\xi e^{c}\Ei(v)$ \Comment{Lemma~\ref{lem:analytic}}
\EndFor
\EndFor
\State \textbf{return} $(f,\mathcal D)$
\end{algorithmic}
\algfooter

\algheader{Algorithm S3}{$\textsc{KernelCandidates}(f,\mathcal E)$\quad(* the Kummer search and silent zeros; Sections~\ref{sec:kummer}, 6.3 *)}
\begin{algorithmic}[1]
\Statex (* Return a list $\cK$ of verified kernels $K_s(v,\omega)=\omega\,Dv$, each with $(s,v,\omega,\kappa,c)$ such that $\int K_s=\kappa^{-1/k}e^{-c}\Gamma(s,-v)$, $k$ the denominator of $s$ (for $s=0$: $e^{-c}\Ei(v)$). *)
\State $\Sigma\gets$ the pairs $(g,\rho)$ of the special primes of the tower (the special irreducible factors of the generators and of the numerators and denominators of the $Dt_k$, Part~II Alg.~1) that are linear in a generator $g$, with root $\rho$
\For{each source $(\xi_a,\lambda_a)\in\mathcal E$} \Comment{the exponents to try}
\If{$\xi_a=\theta$ is a hyperexponential generator}
\State $w\gets\deg_\theta q/2$ if $m=2$ and $q$ is a monomial in $\theta$ of odd degree, else $0$ \Comment{the weight of $y$}
\State $J_a\gets\{\,\mu-\delta+iw\ne0\,\}$ over the coordinates $i$ of $f$ whose denominator is $\theta^{\delta}\times(\text{free of }\theta)$ and the monomials $\theta^{\mu}$ of their numerators; $\{\pm1\}$ if some denominator is not of that form
\Else\ $J_a\gets\{1,-1\}$
\EndIf
\EndFor
\State append to $\mathcal E$, each with $J=\{1\}$: the products $\prod\theta_j^{\mu_j}$ over the numerator monomials of $f$ with at least two hyperexponential factors; and the products $\prod\theta_j^{\mu_j}\cdot a^{j'}$ with $a$ the argument of a logarithmic generator, $1\le|j'|\le\deg_a(f)+2$ \Comment{\S\ref{ex:cherry}}
\For{each source $(\xi,\lambda)$ and $J\in J_\xi$ with $\Lambda:=J\lambda$ a scalar rational function of a single generator $g$, $\Lambda=N/D_\Lambda$}
\State $\mathcal C\gets$ the roots $c$ of $\operatorname{Res}_g\bigl(N+cD_\Lambda,\ \partial_g(N+cD_\Lambda)\bigr)\in\Fbar[c]$ \Comment{a multiple zero}
\State $\mathcal C\gets\mathcal C\cup\{-\Lambda(\rho): q(\rho)=0,\ \rho\in\Fbar\}$ \Comment{a zero at a branch prime}
\State $\mathcal C_0\gets\{-\Lambda|_{g'=\rho'}:(g',\rho')\in\Sigma,\ \text{constant}\}\cup\{-\lim_{g\to\infty}\Lambda\ \text{if finite}\}$ \Comment{silent zeros}
\State \textbf{for} $c\in\mathcal C_0$, $J\in\Z$: $v\gets\Lambda+c$; append $K_0=\xi^J\,Dv/v$ with $\omega=\xi^J/v$, $\kappa=1$
\For{$c\in\mathcal C$}
\State $v\gets\Lambda+c$;\quad $n_\rho\gets$ the multiplicity of each root $\rho$ of the numerator of $v$, times $m$ if $q(\rho)=0$;\quad $N_c\gets\gcd_\rho n_\rho$
\For{each $k\ge2$ dividing $N_c$ (or $k=2$ if $N_c=0$ and $m=2$), and each $a$, $1\le a<k$, $\gcd(a,k)=1$} \Comment{Cor.~\ref{cor:parameter}}
\State $s\gets a/k$; \textbf{skip} unless $kJ\in\Z$;\quad $W\gets(-v)^{k(s-1)}\,\xi^{kJ}$ \Comment{a scalar rational function}
\State factor $W=\kappa\prod p_i^{e_i}$; \textbf{if} $k\mid e_i$ for all $i$: $\omega\gets\prod p_i^{e_i/k}$
\State \textbf{if} $m=k=2$: factor $W/q=\kappa\prod p_i^{e_i}$; \textbf{if} $2\mid e_i$ for all $i$: $\omega\gets y\prod p_i^{e_i/2}$
\State \textbf{for} each $\omega$ found: \textbf{if} $D\omega/\omega=Dv+(s-1)Dv/v$ exactly: append $K_s(v,\omega)=\omega\,Dv$ \Comment{Lemma~\ref{lem:kummer}}
\EndFor
\EndFor
\EndFor
\State \textbf{return} $\cK$ (duplicates removed)
\end{algorithmic}
\algfooter

\algheader{Algorithm S4}{$\textsc{EllipticKernels}(f,\mathrm{cert})$\quad(* Section~\ref{sec:elliptic} *)}
\begin{algorithmic}[1]
\Statex (* Return $(f',\mathcal D_3,\cK_E,z)$: $f$ minus the third-kind kernels $\mathcal D_3$, the pencil columns $\cK_E$, and the first-kind generator $z$ if there is one. *)
\If{$m=2$ and $q\in F[g]$ for a generator $g$, with constant coefficients, $d:=\deg_gq\ge3$, and $Dg=(r,0)$ with $r\in F(g)^*$} \Comment{a pencil; Prop.~\ref{prop:points}}
\State $\cK_E\gets\bigl\{K^{(i)}=(0,\ g^ir/q)\ :\ 0\le i\le d-2\bigr\}$ \Comment{$g^i\,Dg/y$; Prop.~\ref{prop:columns}}
\State $z\gets$ a generator with $Dz=(0,\kappa r/q)$, $\kappa\in F^*$, if one exists \Comment{$\int Dg/y=z/\kappa$}
\If{$\mathrm{cert}$ certifies a residue divisor non-torsion (Part~II, Prop.~9.4) and $f_1\in F(g)$ for $f=(f_0,f_1)$}
\For{each irreducible $p\mid\den(f_1)$ of multiplicity $1$ with $p\nmid q$}
\State $h\gets f_1pq/r=h_n/h_d$;\quad $c\gets h_n\cdot h_d^{-1}\bmod p$ in $F[g]$, $\deg c<\deg p$ \Comment{Prop.~\ref{prop:third}}
\State $K_p\gets(0,\ cr/(pq))$;\quad $f\gets f-K_p$;\quad append $(1,K_p)$ to $\mathcal D_3$
\EndFor
\EndIf
\EndIf
\State \textbf{return} $(f,\mathcal D_3,\cK_E,z)$
\end{algorithmic}
\algfooter

\algheader{Algorithm S5}{$\textsc{PlaceBoundSpecial}(v,\hat g,e)$\quad(* Part~II's Algorithm~6 with Section~\ref{sec:bounds} *)}
\begin{algorithmic}[1]
\Statex (* As Part~II's Algorithm~6, for the generic combination $\hat g=g+\sum_{l}(2l+3)K_l$ of the residual and the kernel columns (Corollary~\ref{cor:bounds}) and the retry count $e$; returns $(\varepsilon_v,\mathrm{proved})$. *)
\State \textbf{if} $v$ lies over a factor $p'$ split off an irreducible special $p$ over $\Fbar$: compute Steps~2--4 at $p'$ itself \Comment{conjugate factors, same decision}
\State \textbf{if} $m=2$ and $v$ is the ramified place over a special $\sigma\mid q$, $q=\sigma q_1$: $\pi\gets y$;\ $v(a+by)\gets\min\bigl(2v_\sigma(a),2v_\sigma(b)+1\bigr)$;\ the class of $e/\pi^{2k}$ is that of $e/(\sigma q_1)^k$, of $b\,y/\pi^{2k+1}$ that of $b/(\sigma q_1)^k$ \Comment{(B)}
\State $\sigma_k,\ s,\ r,\ A,\ \lambda_k,\ \rho$ as in Steps~1--2 of Part~II's Algorithm~6
\State $\mathrm{proved}\gets$ \textbf{true} if one of Part~II's (K1)--(K4) holds, or one of: \Comment{Prop.~\ref{prop:K567}}
\Statex \hskip1.5em (K5) $r\ge1$; every $k\in A$ is hyperexponential with $\lambda_k=\mu_k\bar c_k$, $\mu_k\in\Fbar^*$, the $\mu_k$ $\Q$-independent; every $k\notin A$ has $\sigma_k\ge s+r$ and coefficient $\beta_k\in\Fbar$ of $\pi^{s+r}$ in $Dc_k$; $y\notin C$ or $v(Dy)>s+r$; $\rho\in\Fbar^*$
\Statex \hskip1.5em (K6) $r=0$; $A=\{c\}$, $c$ not hyperexponential; $\lambda\in\Fbar(\bar c)^*$, $\rho\in\Fbar(\bar c)$; $R:=\rho/\lambda$ has $\deg\operatorname{num}R\ge\deg\den R$, or a multiple pole, or a residue $\notin\Q$
\Statex \hskip1.5em (K8) $r=0$; $A$ consists of primitives with $\lambda_k\in\Fbar^*$ and hyperexponentials $\theta_j$ with $\lambda_j/\bar\theta_j$ a polynomial in the primitives; $\rho\in\Fbar^*$; the linear system ``the non-constant coefficients of $\sum n_j\lambda_j/\bar\theta_j$ vanish'' forces its constant term to $0$
\Statex \hskip1.5em (K7) $r=0$; $A=\{c_1,c_2\}$ primitives; $\lambda_1\in\Fbar^*$; $\lambda_2\in\Fbar(\bar c_1)$ with a nonzero residue; $\rho\in\Fbar^*$
\State $\varepsilon_v\gets\max\bigl(0,\,s-v(\hat g)\bigr)+r$;\quad \textbf{if not} proved: $\varepsilon_v\gets\varepsilon_v+e$
\State \textbf{return} $(\varepsilon_v,\mathrm{proved})$
\end{algorithmic}
\algfooter

\algheader{Algorithm S6}{$\textsc{SolveWithKernels}(g,sD_v,\mathcal L,\cK,\mathrm{bounds})$\quad(* the system and its minimisation *)}
\begin{algorithmic}[1]
\Statex (* Given the residual $g$ and the data of Steps~7--15 of Part~II's Algorithm~4 ($s$, $D_v$, the realised logands $\mathcal L$, the units $\cU$, the specials $\cS$) and kernel columns $\cK$, return (rational part, logarithmic part, $\{\kappa_K\}$) or \textbf{nosol}. *)
\State $b\gets\sum_{i=0}^{m-1}\sum_{\alpha\le\mathrm{bounds}}u_{\alpha,i}\,t^\alpha w_i$ with unknown $u_{\alpha,i}$; unknowns $\gamma_u$ ($u\in\cU$), $\alpha_\sigma$ ($\sigma\in\cS$), $\kappa_K$ ($K\in\cK$)
\State equate coefficients in each $w_i$-coordinate of
\[
g\;=\;D\Bigl(\frac{b}{s\,D_v}\Bigr)+\sum_{u\in\cU}\gamma_u\frac{Du}{u}+\sum_{\sigma\in\cS}\alpha_\sigma\frac{D\sigma}{\sigma}+\sum_{K\in\cK}\kappa_K\,K
\]
\State solve by Part~II's Algorithm~5; if the columns leave the number field, by exact row reduction of the expression system, free unknowns $0$; \textbf{if} no solution: \textbf{return nosol}
\State $\mathcal A\gets\{K:\kappa_K\ne0\}$ \Comment{(E4), greedy minimisation}
\Repeat
\State \textbf{for} $K\in\mathcal A$: \textbf{if} Steps~1--3 with the columns $\mathcal A\setminus\{K\}$ and the same bounds and special exponents have a solution: adopt it, $\mathcal A\gets\{K':\kappa_{K'}\ne0\}$, \textbf{break}
\Until{no column was removed}
\State \textbf{return} $\bigl(b/(sD_v),\ \sum\gamma_u\log u+\sum\alpha_\sigma\log\sigma,\ \{\kappa_K\}_{K\in\mathcal A}\bigr)$, the logarithms rewritten into real form as in Part~II
\end{algorithmic}
\algfooter

\algheader{Algorithm S7}{$\textsc{ParallelIntegrateSpecial}(f,D,\mathrm{strict})$\quad(* the main loop *)}
\begin{algorithmic}[1]
\Statex (* Given a tower (T1)--(T4) and $f\in L$, return an antiderivative in the class of elementary functions, $\Gamma(s,\cdot)$ at rational $s$ and elliptic integrals of a pencil; or \textbf{NotInClass} with a certificate; or ``failed''. *)
\State $r\gets$ Part~II's Algorithm~4 on $f$; \textbf{if} $r$ is an antiderivative: \textbf{return} $r$ \Comment{(E1)}
\State \textbf{if} $r$ is not a certificate: $r\gets$ Part~II's Algorithm~4 on $f$ with Algorithm~S5 in place of Algorithm~6; \textbf{if} $r$ is an antiderivative: \textbf{return} $r$ \Comment{(E2)}
\State $\mathcal E\gets\textsc{Sources}(L)$
\State $(f_1,\mathcal D)\gets\textsc{ResidueKernels}(f,\mathcal E)$; \textbf{on NotInClass return} it \Comment{(E3)}
\State $(f_1,\mathcal D_3,\cK_E,z)\gets\textsc{EllipticKernels}(f_1,r)$;\quad $\mathcal D\gets\mathcal D\cup\mathcal D_3$ \Comment{(E3): third kind only with a non-torsion certificate}
\State $\cK\gets\textsc{KernelCandidates}(f_1,\mathcal E)\cup\cK_E$
\State \textbf{if} $f_1=0$: \textbf{go to} Step~14
\State Steps~7--14 of Part~II's Algorithm~4 on $f_1$ (canonical form, $d_\ast d_n$, $D_v$, $s$, residues, \textsc{Realise}) giving the residual $g$ and the logands $\mathcal L$; \textbf{if} they return \textbf{NotElementary}$(P,\tau)$ with $\tau\notin\Fbar$: \textbf{return NotInClass}$(P,\tau)$; any other status: \textbf{return} ``failed''
\For{split $\in\{$over $F$, over $\Fbar\}$, retry $e=0,1,2$}
\State $\hat g\gets g+\sum_l(2l+3)K_l$;\quad for every place $v$ of Step~15 of Part~II's Algorithm~4: $(\varepsilon_v,\mathrm{proved}_v)\gets\textsc{PlaceBoundSpecial}(v,\hat g,e)$; the exponents $e_\sigma$ and the bounds $b_i$ from the $\varepsilon_v$ as in that step
\State $\mathrm{sol}\gets\textsc{SolveWithKernels}(g,sD_v,\mathcal L,\cK,\mathrm{bounds})$;\quad \textbf{if} $\mathrm{sol}$ is a solution: \textbf{go to} Step~14
\State \textbf{if} every $\mathrm{proved}_v$ is true: \textbf{break} the retry loop \Comment{raising a proved bound changes nothing}
\EndFor
\State \textbf{if} no solution: \textbf{return} ``failed'', with whether every bound in force was proved
\State $\mathcal T\gets\mathcal D\cup\{(\kappa_K,K):\kappa_K\ne0\}$;\quad $S\gets\sum_{(c,K)\in\mathcal T}cK$ \Comment{(E4)}
\State \textbf{if} Part~II's Algorithm~4 (with S5) integrates $S$ elementarily: move that antiderivative into the rational part, $\mathcal T\gets\emptyset$; \textbf{else} do the same for each single term $cK$ that it integrates
\State verify $D\omega/\omega=Dv+(s-1)Dv/v$ for every $\Gamma$ kernel in $\mathcal T$
\State $(\mathrm{cert},\mathrm{indep})\gets\textsc{Necessity}(f,\mathcal T,r)$ \Comment{Algorithm~S8}
\State \textbf{if} strict \textbf{and} $\mathrm{cert}=\emptyset$: \textbf{return} ``failed'' (special answer found, integrand not certified non-elementary), with the answer \Comment{(E5), Thm.~\ref{thm:necessity}}
\State \textbf{return} $\textsc{Present}\bigl(b/(sD_v)+\sum_k r_k\sum_{(\gamma,u)\in\mathcal L_k}\gamma\log u+(\text{log part of S6})+\sum_{(c,K)\in\mathcal T}c\!\int\!K,\ 0\bigr)$, with $\mathrm{cert}$ and $\mathrm{indep}$
\end{algorithmic}
\algfooter

In Step~9 the retry ladder is Part~II's: a raised guess at an undecided place re-enters at Step~10, and the specials are split over $\Fbar$ on the second pass. Steps~14--15 are the elementary replacement of (E4), and the greedy removal is inside Algorithm~S10.

\algheader{Algorithm S8}{$\textsc{Necessity}(f,\mathcal T,r)$\quad(* Theorems~\ref{thm:necessity} and \ref{thm:independence} *)}
\begin{algorithmic}[1]
\Statex (* Return $(\mathrm{cert},\mathrm{indep})$: a certificate that $f$ has no elementary integral (or $\emptyset$), and whether the kernels of $\mathcal T$ are certified independent modulo elementary functions. *)
\State $S\gets\sum_{(c,K)\in\mathcal T}cK$
\For{$h\in(f,S)$, tower $\in(L,\ \textsc{Pruned}(L,h))$, criteria $\in$ (Part~II's, with S5)} \Comment{Lemmas~\ref{lem:transfer}, \ref{lem:subtower}}
\State \textbf{if} Part~II's Algorithm~4 on $h$ returns a certificate: $\mathrm{cert}\gets$ it; \textbf{break}
\EndFor
\State \textbf{if} $\mathrm{cert}=\emptyset$: \textbf{return} $(\emptyset,\text{false})$;\quad \textbf{if} $|\mathcal T|=1$: \textbf{return} $(\mathrm{cert},\text{true})$
\State \textbf{for} each place $(g,\rho)$ where an $\Ei$ kernel of $\mathcal T$ has a normal zero: \textbf{if} $1$ and the nonzero residues $\tau_K$ of the $\Ei$ kernels there are $\Fbar$-linearly dependent: \textbf{return} $(\mathrm{cert},\text{false})$ \Comment{(a)}
\State \textbf{if} $\mathcal T$ has more than one third-kind kernel, or one without a non-torsion certificate in $r$: \textbf{return} $(\mathrm{cert},\text{false})$ \Comment{(b)}
\For{each residue-free kernel $K_j$ of $\mathcal T$} \Comment{(c)}
\State Steps~7--14 of Part~II's Algorithm~4 on $K_j$, the specials split over $\Fbar$ when one has degree $\ge2$; \textbf{fail} if a special lies over the curve variable
\State bounds by S5 on $K_j+\sum_{i\ne j}(2i+3)K_i$; \textbf{fail} unless all proved and the units complete (or the second kind at infinity applies)
\State \textbf{fail} unless \textsc{SolveWithKernels} with right-hand side $K_j$ and columns $\{K_i\}_{i\ne j}$ returns \textbf{nosol}, and every residue-free kernel is verified residue-free (\textsc{CanonicalResidue}, all sheets)
\State \textbf{on fail}: repeat Steps~8--10 over $\textsc{Pruned}(L,S)$ (rescaled); \textbf{if} that fails too: \textbf{return} $(\mathrm{cert},\text{false})$
\EndFor
\State \textbf{return} $(\mathrm{cert},\text{true})$
\Statex (* $\textsc{Pruned}(L,h)$: keep the generators that $h$, $q$ and the derivatives of kept generators involve; for each removed $t$ require $Dt\in$ the kept field and either $\int Dt=v+c\log u$ (Part~II's Algorithm~4 over the kept tower, at most one logarithm, an arctangent counting as one) or $\int Dt/t=v$ rational; if one generator $g$ is kept over the curve with $Dg=r(g)\ne1$, replace $D$ by $D/r$ and $h$ by $h/r$ (Part~II Lemma~3.4). *)
\end{algorithmic}
\algfooter

\algheader{Algorithm S9}{$\textsc{Present}(I,r)$\quad(* Section~\ref{sec:presentation} *)}
\begin{algorithmic}[1]
\Statex (* Rewrite an answer $I$ into named functions and fix the branch constants; $r\in L$ is the remainder of a partial answer ($r=0$ for a complete one). Return $I$, or $I+\int r$ with the integral unevaluated. *)
\For{each special term $cK$ of $I$}
\State \textbf{case} $K=K_0$ (Ei): $A\gets e^{-c_K}\kappa\,\Ei(v)$; \textbf{if} $v=\log a$: $\Ei(\log a)\to\li(a)$
\State \textbf{case} $K=K_s$, $s=a/k>0$: $C\gets\kappa^{1/k}e^{-c_K}$; \textbf{if} $s=\tfrac12$ and $-v=h^2$ with $h$ rational (all exponents of $-v$ even): $A\gets C\sqrt\pi\erfc(h)$; \textbf{else} $A\gets C\,\Gamma(s,-v)$
\State \textbf{case} $K=K^{(i)}$: $A\gets\int g^i\,dg/y$ by Lemma~\ref{lem:legendre} on a cubic or a Legendre quartic ($i\in\{0,1\}$, resp.\ $\{0,2\}$), else an unevaluated integral; \textbf{if} $z$ exists, replace $F(\varphi\,|\,m)$ by the value that makes $\int dg/y=z/\kappa$
\State \textbf{case} $K=K_p$: $A\gets\sum_{p(\rho)=0}\frac{c(\rho)}{p'(\rho)}\int\frac{dg}{(g-\rho)y}$ by Lemma~\ref{lem:legendre}
\State on the surface: $\mu\gets$ the first of $1,-1,\sqrt{-1},-\sqrt{-1}$ and the $k$-th roots of unity with $\frac{d}{dx}(\mu cA)=cK$ at three sample points; $cA\gets\mu cA$ \Comment{Prop.~\ref{prop:soundness}}
\EndFor
\State rewrite $\erfc(\sqrt{-1}z)\to-\sqrt{-1}\erfi(z)$, and $a\Ei(\sqrt{-1}z)+b\Ei(-\sqrt{-1}z)\to(a+b)\Ci(z)+\sqrt{-1}(a-b)\Si(z)$, dropping additive constants
\State $r_s\gets$ the remainder on the surface: $\sum_ir_i\,y^i/E_i$ with $y=q^{1/m}$ and every generator replaced by its expression (the back substitution of the tower builder), simplified
\State verify $\frac{d}{dx}I+r_s=f$ at the sample points
\State \textbf{if} $r_s=0$: \textbf{return} $I$;\quad \textbf{else return} $I+\int r_s\,dx$, the integral unevaluated (\texttt{Integral} in SymPy, \texttt{Inactive[Integrate]} in Mathematica, \texttt{Int} in Maple)
\end{algorithmic}
\algfooter

\begin{proposition}[Soundness]\label{prop:soundness}
Every antiderivative returned by Algorithm~S7 is correct: the rational and logarithmic parts solve the linear system exactly, each kernel satisfies \eqref{eq:kernel} as an identity in $L$, and the antiderivative of each kernel is the special function of Lemma~\ref{lem:analytic}, \ref{lem:kummer} or \ref{lem:legendre} up to a constant multiplier and an additive constant. The multiplier is determined exactly up to a root of unity of order dividing the denominator of $s$ (a sign for the elliptic terms); the surface interface fixes it by comparing the derivative of each special term with its kernel at sample points, and verifies the whole answer by differentiation.
\end{proposition}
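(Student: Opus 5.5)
The proof follows the data flow of Algorithm~S7 and checks each component of a returned answer against the step that produced it. The answer has the shape
\[
I=\frac{b}{sD_v}+\sum\gamma\log u+\sum_{(c,K)\in\mathcal T}c\!\int\! K,
\]
so by linearity of $D$ it suffices to show three things: $D$ of the elementary part equals $f$ minus the special integrand $\sum cK$; each special function term has derivative exactly $cK$, up to the stated multiplier; and the multiplier is then fixed.

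For the first point, the answer is returned only from Step~1 or Step~2 (Part~II's answers, already verified there) or after Step~11 finds a solution of the system in \textsc{SolveWithKernels}. That system is an exact identity in the $w_i$-coordinates of $L$, and Algorithm~5 of Part~II or exact row reduction solves it exactly. The kernels subtracted beforehand, in $\mathcal D$ from Algorithms~S2 and S4, are recorded with their coefficients. Hence $f=D(\text{rational}+\text{log part})+\sum_{\mathcal T}cK$ holds as an identity in $L$. The elementary replacement in Steps~14--15 only moves an exact antiderivative between the two parts, so the identity is preserved.

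For the second point, I would argue case by case.
\begin{itemize}
\item For the $\Gamma$ and $\Ei$ kernels, Step~16 of S7 (and the last line of S3) verifies \eqref{eq:kernel} exactly. Lemma~\ref{lem:analytic} then gives $\omega=C(-v)^{s-1}e^v$ with $DC=0$, and Lemma~\ref{lem:kummer} identifies the constant as $\kappa^{-1/k}e^{-c}$.
\item The third-kind kernels $K_p$ are integrated by the partial-fraction identity of Proposition~\ref{prop:third}.
\item The first- and second-kind columns are integrated by Lemma~\ref{lem:legendre}.
\end{itemize}
In each case the formula holds up to an additive branch constant.

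The multiplier is the delicate point. The only ambiguity in Lemma~\ref{lem:kummer} is the choice of $\kappa^{1/k}$ and the branch of $(-v)^{s-1}$ relative to $\omega$. Both ratios are constants whose $k$-th power is determined, since $\omega^k$ is fixed in $L$. So the ratio between the chosen presentation and the true antiderivative is a $k$-th root of unity. For the elliptic terms, $\lambda$ and $y$ are defined only up to sign, which gives $\pm1$. For the $\erfc$/$\erfi$ rewritings, the identities listed in \S\ref{sec:functions} contribute factors $\pm\sqrt{-1}$.

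Step~6 of S9 tries exactly this finite list and compares $\frac{d}{dx}(\mu cA)$ with $cK$ at three sample points. The main obstacle is justifying that agreement at finitely many sample points pins down $\mu$. I would handle it by noting that $\frac{d}{dx}(\mu cA)=\mu\zeta\, cK$ identically for some root of unity $\zeta$. The values of $cK$ at sample points avoiding its zeros are nonzero, so one point already determines $\mu\zeta=1$; the extra points guard against an unlucky zero.

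Finally, Step~10 of S9 verifies the whole answer by differentiation at the sample points. The exact identity in $L$ established in the first step is what makes this a check rather than the proof.
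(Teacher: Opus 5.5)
Your proposal is correct and follows the justification the paper builds into the statement itself; the paper gives no separate proof. The ingredients are the same: the exactly solved system together with the residue kernels subtracted beforehand, the kernel identity checked in Step~16 of Algorithm~S7, Lemmas~\ref{lem:analytic}, \ref{lem:kummer} and \ref{lem:legendre}, the observation that $\omega^k$ is fixed in $L$ so only a $k$-th root of unity is undetermined, and the sample-point comparison and final differentiation check of Algorithm~S9. One refinement: because $\Gamma(s,\cdot)$ is taken on its principal branch, your $\zeta$ is only locally constant (the paper's own $-\tfrac13\Gamma(\tfrac13,x^3)$ has derivative $e^{-2\pi\sqrt{-1}/3}e^{-x^3}$ for real $x<0$), so ``identically'' should be read on the component of the domain containing the sample points, which is also the sense in which the statement holds.
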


\begin{remark}[Return taxonomy]
Algorithm~S7 has three exits. An antiderivative, verified. \textbf{NotInClass}$(P,\tau)$, a certificate under (H1), relative to the sources of Algorithm~S1 and absolute under (H2) (Corollary~\ref{cor:rescert}); by Corollary~\ref{cor:ellnormal} it excludes elliptic integrals as well. ``Failed'', when the extended system has no solution, with the record of whether every bound in force was proved (Section~\ref{sec:bounds}); this is not a certificate even when every bound is proved, because the completeness of the candidate lists of Section~\ref{sec:candidates} is not proved (Section~\ref{sec:conclusion}). The certificates of Part~II --- non-constant residue, non-torsion divisor, holomorphic remainder --- are superseded: the first by the refined residue certificate, the second by the third-kind kernel, the third by the columns of Proposition~\ref{prop:columns} and the $\Gamma$ kernels.
\end{remark}

\subsection{No unnecessary special functions}\label{sec:noneed}

A special-function integrator must not answer an elementary integral with special functions, nor add a special term where an elementary one would do. Algorithm~S7 is built so that it does neither, within what can be decided.

\begin{proposition}[Elementary first]\label{prop:elementaryfirst}
\begin{enumerate}
\item[(E1)] If Part~II's algorithm returns an antiderivative of $f$, Algorithm~S7 returns the same antiderivative.
\item[(E2)] If Part~II returns neither an antiderivative nor a certificate, no kernel is introduced before Part~II has been rerun with the criteria of Section~\ref{sec:bounds}.
\item[(E3)] An $\Ei$ kernel is subtracted only at a residue that is not constant, and a third-kind kernel only for a residue divisor that Part~II certifies non-torsion; in both cases $f$ is not elementary \cite[Prop.~9.2]{PartII}, and every antiderivative in the class needs such a term (Corollary~\ref{cor:rescert}, \cite[Prop.~9.4]{PartII}).
\item[(E4)] In a returned answer, no special term can be removed with the system staying solvable at the bounds in force, and neither the special part nor any single special term has an antiderivative that Part~II's algorithm finds.
\end{enumerate}
Section~\ref{sec:necessity} turns this into a theorem: in strict mode a special function is returned only when the integrand has no elementary integral (Theorem~\ref{thm:necessity}), and then, under a checkable independence test, every special term is forced (Theorem~\ref{thm:independence}).
\end{proposition}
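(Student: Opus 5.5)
The plan is to read each clause off the control flow of Algorithm~S7, supplemented where needed by the structure results of Sections~\ref{sec:structure} and \ref{sec:elliptic}. Nothing new has to be proved about differential fields. The work is to check that every exit and every subtraction of a kernel occurs only under the stated condition.

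For (E1), Step~1 of Algorithm~S7 runs Part~II's Algorithm~4 on $f$ unchanged and returns its output verbatim when that output is an antiderivative. The only point to verify is that nothing earlier in S7 modifies $f$ or the tower. The structure-theorem correction of the builder acts before Part~II is invoked, so it is part of ``Part~II's algorithm'' as run here. For (E2), I would note that Step~2 is reached exactly when Step~1 returned neither an antiderivative nor a certificate, and that Steps~3--6, where sources and kernels first appear, come after Step~2. This is a direct reading of the step order.

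For (E3), I would inspect Algorithm~S2. A kernel is subtracted only after the skip at its line~7, that is, only when the residue $\tau$ at $P$ is not in $\Fbar$ on the places over $P$. A non-constant residue at a normal prime certifies non-elementarity by \cite[Prop.~9.2]{PartII}. That every antiderivative in $\GQ$ needs an $s=0$ kernel vanishing at $P$ is the second half of Corollary~\ref{cor:rescert}. One needs $\tau$ to have a nonzero component along some $\bar\xi$ independent of $1$, and this holds because $\tau\notin\Fbar$ and the system $\tau=r_0+\sum r_\xi\tau_\xi$ was solved. For the third kind, Algorithm~S4 subtracts $K_p$ only under the guard ``$\mathrm{cert}$ certifies a residue divisor non-torsion''. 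Then \cite[Prop.~9.4]{PartII} gives non-elementarity, and by Corollary~\ref{cor:ellnormal} together with Lemma~\ref{lem:respull} the constant residues cannot be produced by first- or second-kind kernels, so some third-kind term is required.

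For (E4), I would use the greedy loop of Algorithm~S6. It terminates with a column set $\mathcal A$ such that removing any one column makes the system unsolvable at the same bounds and special exponents, which is the first assertion. The second assertion comes from Step~15 of S7. There Part~II is applied to $S$ and then to each single term $cK$, and any term it integrates is moved into the rational part, so the returned special terms are exactly those Part~II fails on.

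The main obstacle is the interaction between these two post-processing steps. Moving an integrable term into the rational part after the greedy minimisation must not re-enable the removal of another term, and the kernels $\mathcal D$ fixed by residues in S2 are not columns of S6 at all. I would handle this by showing that the residue-determined kernels are forced by Corollary~\ref{cor:rescert}, so they are non-removable for a reason independent of S6. I would also interpret ``removable'' in (E4) relative to the final column set. If necessary, I would note that Step~15 only deletes terms, so minimality with respect to single removals is preserved.
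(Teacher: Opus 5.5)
Your proposal is correct and is essentially the paper's proof. As in the paper, (E1)--(E3) are read off Steps~1--5 of Algorithm~S7, using the constant-residue skip of Algorithm~S2, the non-torsion guard of Algorithm~S4 and the cited certificates, and (E4) comes from the greedy minimisation of Algorithm~S6 and the elementary replacement in Steps~14--15. Your further remark that deleting terms after minimisation cannot re-enable a removal, because a column set contained in an unsolvable one is itself unsolvable, makes explicit a point the paper leaves tacit.
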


\begin{proof}
(E1)--(E3) are Steps~1--5 of Algorithm~S7 with Algorithms~S2 and S4; the necessity statements in (E3) are the cited certificates, applied to the residue of $f$ at the prime concerned, which no logarithm or elementary rational part can produce. (E4) is the minimisation in Algorithm~S6 and Steps~14--15 of Algorithm~S7: the greedy removal re-solves the system with the same bounds and special exponents, and the elementary replacement calls Part~II's algorithm on the special part and on each term.
\end{proof}

The minimality in (E4) is relative to the ansatz: a combination of special terms could in principle be elementary with a rational part outside the bounds in force, and the statement becomes absolute with the completeness of the candidate lists and the independence of kernels modulo elementary functions (Section~\ref{sec:conclusion}). The presentation obeys the same rule: when the tower contains the first-kind integral of a pencil as a generator --- the variable $x$ above $s=\operatorname{sn}(x\,|\,k)$ in \S\ref{ex:sn2} --- the Legendre form writes $\int dU/y$ as that generator, not as $F$.

The implementation was tested against these statements. On a corpus of forty elementary integrals (logarithms, exponentials and trigonometric functions, inverse functions, the Lambert function, radicals, conics and quartics), \texttt{integrate\_surface\_special} returns exactly Part~II's antiderivative in every case. On $382$ random exact integrands $f=Dg$, plus logarithmic derivatives, in nine towers --- the six transcendental towers of Section~\ref{sec:bounds}, and the curves $y^2=\theta^3-\theta$ over $e^x$, $y^2=(1-s^2)(1-s^2/3)$ with the first-kind integral in the tower, and $y^2=x^3-x$ --- Part~II answers all but eight, and Algorithm~S7 introduces a special term in none. In the eight, the integrand has a pole beyond $\delta_P$ at the branch place over $x$ of $y^2=x^3-x$, where Part~II's implementation does not reduce the residue, and Algorithm~S7 fails with it rather than guess. On integrands with an elementary piece the answer is minimal:
\begin{verbatim}
>>> integrate_surface_special(exp(x**2)*(1 + 2*x), x)
(exp(x**2) + sqrt(pi)*erfi(x)/2, True)
>>> integrate_surface_special(exp(x**2) + diff(exp(x**2)/x, x), x)
(sqrt(pi)*erfi(x)/2 + exp(x**2)/x, True)
>>> integrate_surface_special(log(x) + 1/log(x), x)
(x*log(x) - x + li(x), True)
\end{verbatim}
The script \texttt{elementary\_check.py} reruns these tests.

\subsection{Necessity: the proof}\label{sec:necessity}

Proposition~\ref{prop:elementaryfirst} makes the answer minimal within the ansatz. We now prove the stronger statement that the project needs: a special function is returned only when the integrand has no elementary integral, and then every special term is necessary. Throughout, an \emph{answer} is
\[
I=v+\sum_ic_i\log u_i+\sum_l\gamma_l\!\int\!K_l,\qquad f=Dv+\sum_ic_i\frac{Du_i}{u_i}+S,\quad S:=\sum_l\gamma_lK_l,
\]
with $v,u_i\in\Fbar L$, and it is \emph{special} if some $\gamma_l\ne0$; $S$ is its \emph{special part}. ``Elementary integral over $L$'' is meant as in Part~II: an antiderivative in an elementary extension of $\Fbar L$ with constant field $\Fbar$.

\begin{lemma}[Transfer]\label{lem:transfer}
$f$ has an elementary integral over $L$ if and only if its special part $S$ does.
\end{lemma}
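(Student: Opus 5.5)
The plan is to use that $f-S=Dv+\sum_ic_iDu_i/u_i$ is, by construction of an answer, the derivative of an elementary function over $\Fbar L$. Indeed, $I_0:=v+\sum_ic_i\log u_i$ lies in the elementary extension $E_0:=\Fbar L(\log u_1,\dots,\log u_r)$. By the standard fact that adjoining logarithms whose derivatives lie in the base may be arranged without new constants, $E_0$ can be taken with constant field $\Fbar$. The lemma then reduces to the observation that the set of elements of $\Fbar L$ possessing an elementary integral over $L$ is an $\Fbar$-vector space containing $f-S$.

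For the direction ``$S$ elementary $\Rightarrow f$ elementary'', suppose $S=DJ$ with $J$ in an elementary extension $E_1\supseteq\Fbar L$ having constant field $\Fbar$. Then $f=D(I_0+J)$. The remaining task is to place $I_0$ and $J$ in a common elementary extension with constant field $\Fbar$. Rather than build a compositum abstractly, I would pass through Liouville's theorem. Liouville's theorem applied to $S$ over $\Fbar L$ gives $S=Dw+\sum_jd_jDz_j/z_j$, with $w,z_j$ in a finite algebraic extension $L'$ of $\Fbar L$ and $d_j\in\Fbar$. Adding the given representation of $f-S$ yields a Liouville form for $f$ over $L'$. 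The converse of Liouville's theorem then produces the elementary integral $v+w+\sum c_i\log u_i+\sum d_j\log z_j$ in $L'(\log u_i,\log z_j)$. The constants of this field are algebraic over $\Fbar$, hence equal to $\Fbar$, since $\Fbar$ is algebraically closed and the logarithms can be chosen to add no constants (Rosenlicht).

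The converse direction is symmetric, using $S=f-(f-S)$. If $f$ has a Liouville form $f=Dw'+\sum e_kDz'_k/z'_k$ over some algebraic $L''$, subtracting the representation of $f-S$ gives one for $S$ over $L''$, and the same converse argument applies. Hence both directions hold with the notion of ``elementary integral over $L$'' used in Part~II, where constants are extended to $\Fbar$.

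I expect the only real obstacle to be bookkeeping rather than substance. The constant field of the elementary extension must stay $\Fbar$, and the data $v,u_i$ of the answer lie in $\Fbar L$ rather than $L$. Both points are handled by working over $\Fbar L$ from the start, which is exactly the setting of Part~II, eq.~(3), and by invoking Liouville's theorem together with its converse instead of manipulating towers of extensions directly. No hypothesis such as (H1) is needed, because the representation of $f-S$ is given explicitly and already lies in $\Fbar L$.
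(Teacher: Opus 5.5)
Your proof is correct, but it gets the key closure property by a different route from the paper.

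\emph{The paper's route.} The paper argues directly with towers. It observes that $f-S$ has the elementary integral $v+\sum_ic_i\log u_i$, and that elementary integrability is closed under sums and differences: one simply adjoins the second elementary extension over the one that carries the first.

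\emph{Your route.} You set the compositum aside and get closure under addition from Liouville's theorem and its converse. The Liouville form $Dw+\sum_jd_jDz_j/z_j$ of $S$ (respectively of $f$) and the explicit form $Dv+\sum_ic_iDu_i/u_i$ of $f-S$ add (respectively subtract) to a Liouville form. The converse then rebuilds an elementary integral in an algebraic extension of $\Fbar L$ with logarithms adjoined.

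\emph{Comparison.} The paper's argument is a one-liner that needs no structure theorem. It tacitly relies on the standard fact that re-adjoining the second extension's generators over the first need not create new constants. One uses an existing solution when a logarithm or exponential already has one in the current field, and notes that algebraic adjunctions over an algebraically closed constant field add none. Your route makes this constant-field bookkeeping explicit and also shows that the combined antiderivative can be taken in Liouville form. The price is invoking the full Liouville theorem for a statement that does not really need it. A small simplification is available: since $\Fbar$ is algebraically closed, the strong (Rosenlicht) form of Liouville's theorem already gives $w,z_j\in\Fbar L$, so the auxiliary field $L'$ can be dropped.
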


\begin{proof}
$f-S=D\bigl(v+\sum_ic_i\log u_i\bigr)$ has the elementary integral $v+\sum_ic_i\log u_i$ over $L$, and the sum or difference of two elements with elementary integrals has one: adjoin both integrals, the second over the elementary extension that carries the first.
\end{proof}

\begin{lemma}[Elementary sub-towers]\label{lem:subtower}
Let $L_0\subseteq L$ be a differential subfield such that $L=L_0(t_1,\dots,t_r)$ with each $t_k$ transcendental over $L_0$ and either $Dt_k=Dw_k+c_kDu_k/u_k$ with $w_k,u_k\in\Fbar L_0$, $c_k\in\Fbar$, or $Dt_k/t_k=Dw_k$ with $w_k\in\Fbar L_0$. Then an element $h\in L_0$ with an elementary integral over $L$ has one over $L_0$. The same holds for the rescaling of \cite[Lemma~3.4]{PartII}: $h$ has an elementary integral for $D$ if and only if $h/r$ has one for $D/r$.
\end{lemma}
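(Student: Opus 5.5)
The plan is to induct on $r$, adjoining one generator at a time, so everything reduces to a single step: $L_1=L_0(t)$ with $t$ transcendental over $L_0$, and $t$ either a primitive $Dt=Dw+c\,Du/u$ or a hyperexponential $Dt/t=Dw$ with $w,u\in\Fbar L_0$. Let $h\in L_0$ have an elementary integral over $L_1$. Liouville's theorem over $\Fbar L_1$ gives
\[
h=Dv_0+\sum_i c_i\frac{Du_i}{u_i},\qquad v_0,u_i\in\Fbar L_1,
\]
with the $c_i$ linearly independent over $\Q$ (\cite[Lemma~5.2]{PartII}). We may take the $u_i$ irreducible polynomials in $t$ or elements of $\Fbar L_0$. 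The goal is to push this representation down to $L_0$, at the cost of one extra elementary term built from $w$ or $u$.

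Primitive case. Since $h$ is free of $t$, every normal irreducible $p\in\Fbar L_0[t]$ must be absent. It cannot divide the denominator of $v_0$: the valuation lemma gives $v_p(Dv_0)=v_p(v_0)-1$, while the logarithmic terms have at most simple poles at $p$. It cannot occur among the $u_i$ either, since its residue would be an integer combination of the $c_i$, nonzero by $\Q$-independence. In a primitive monomial every irreducible is normal, so $v_0\in\Fbar L_0[t]$ and $u_i\in\Fbar L_0$. Comparing degrees in $t$, with $Dt\in\Fbar L_0$, the coefficient of $t^k$ for $k\ge2$ cannot cancel, so $v_0=at+b$ with $Da=0$, that is $a\in\Fbar$. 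Then
\[
h=aDw+ac\frac{Du}{u}+Db+\sum_i c_i\frac{Du_i}{u_i},
\]
and $a w+b+ac\log u+\sum_i c_i\log u_i$ is an elementary integral of $h$ over $L_0$.

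Hyperexponential case. Here the only special prime is $(t)$. The same valuation argument removes the normal primes, so $v_0$ is a Laurent polynomial in $t$ and each $u_i$ is in $\Fbar L_0$ up to a factor $t^{n_i}$. The coefficient of $t^k$ in $Dv_0$ is $D\beta_k+k\,Dw\,\beta_k$, and this must vanish for $k\ne0$ since $h$ is free of $t$. So only $\beta_0$ survives in $Dv_0$, and $Dt^{n_i}/t^{n_i}=n_iDw$. Hence
\[
h=D\Bigl(\beta_0+\sum_i c_in_iw\Bigr)+\sum_i c_i\frac{Du_i'}{u_i'},
\]
with $u_i'\in\Fbar L_0$, which is elementary over $L_0$. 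This is exactly the standard descent of \cite[Ch.~5]{Bronstein05}, so I would cite it rather than redo the calculation.

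Rescaling. Put $\tilde D=D/r$. Then $\tilde D$-constants and $D$-constants agree. Dividing the Liouville form $h=Dv+\sum_i c_iDu_i/u_i$ by $r$ gives $h/r=\tilde Dv+\sum_i c_i\tilde Du_i/u_i$. The converse follows by multiplying by $r$. Since the existence of an elementary integral is equivalent to such a Liouville form, both directions hold.

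The main obstacle is the step that rules out normal primes: it needs transcendence of $t$ together with $Dt\in L_0$, so that $t$ is a genuine monomial with the special/normal classification, and this is where the hypothesis is used. A second point to check is that passing to $\Fbar$ does not break transcendence; that is covered by $\Const=F$ and an algebraic extension of constants.
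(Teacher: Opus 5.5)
Your argument is correct, but it takes a genuinely different and heavier route than the paper.

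\textbf{The paper's route.} The paper never descends a Liouville form. It observes that $\Fbar L$ is itself an elementary extension of $\Fbar L_0$:
\begin{itemize}
\item if $D(t_k-w_k)=c_kDu_k/u_k$, then $t_k=w_k+c_k\log u_k+\text{const}$;
\item if $Dt_k/t_k=Dw_k$, then $t_k$ is a constant multiple of $e^{w_k}$.
\end{itemize}
Hence any elementary extension of $\Fbar L$ that contains an antiderivative of $h$ is already an elementary extension of $\Fbar L_0$. The only remaining point is to keep the constants algebraic, for which it cites \cite[Thm.~5.5.3]{Bronstein05}. This transitivity argument uses neither Liouville's theorem nor the normal/special classification, and not even the transcendence of the $t_k$.

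\textbf{Your route.} You apply Liouville's theorem over $\Fbar L_0(t)$, rule out normal primes with the valuation lemma and the $\Q$-independence of the $c_i$, and then compare coefficients in $t$. What this buys is an explicit descended Liouville form:
\begin{itemize}
\item in the primitive case, $v_0=at+b$ becomes $aw+b$, with the extra logand $u$ at coefficient $ac$;
\item in the hyperexponential case, the factors $t^{n_i}$ of the logands become $c_in_iw$ in the rational part.
\end{itemize}
That is what you would need to transport an actual answer from $L$ down to $L_0$.

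\textbf{The price.} Each inductive step requires $t_k$ to be a monomial without new constants over $\Fbar L_0(t_1,\dots,t_{k-1})$. In other words, the $t_k$ must be algebraically independent over $L_0$. That is the situation in which the lemma is used, since the generators of a tower are independent by (T1). It is, however, stronger than the literal hypothesis ``transcendental over $L_0$''. For example, $t_1=e^x$, $t_2=e^{x/2}$ are each transcendental over $\Q(x)$, yet $t_2$ is algebraic over $L_0(t_1)$. There your step-by-step descent hits an algebraic layer and would need an additional trace argument, which the paper's proof avoids.

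Your treatment of the rescaling via Liouville forms is fine. The paper simply cites \cite[Lemma~3.4]{PartII}.
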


\begin{proof}
In the first case $t_k-w_k$ has derivative $c_kDu_k/u_k$, so $t_k=w_k+c_k\log u_k+\text{const}$, and $\Fbar L$ lies in the elementary extension $E_0=\Fbar L_0(\log u_1,\dots,e^{w_k},\dots)$ of $\Fbar L_0$; in the second $t_k$ is a constant multiple of $e^{w_k}$. If $h$ has an antiderivative in an elementary extension $E$ of $\Fbar L$, then $E$ is obtained from $\Fbar L$ by successive logarithms, exponentials and algebraic elements, and the same adjunctions over $E_0$ give an elementary extension of $\Fbar L_0$ containing the antiderivative. By \cite[Thm.~5.5.3]{Bronstein05} the constants can be kept algebraic. The rescaling is \cite[Lemma~3.4]{PartII}: $D$ and $D/r$ have the same antiderivatives up to the change of variable.
\end{proof}

The \emph{pruned tower} of $f$ is the smallest sub-tower containing the generators that $f$, $q$ and the derivatives of the kept generators involve. It is used when the removed generators satisfy Lemma~\ref{lem:subtower}, and rescaled when it has a single generator over the curve. The implementation checks the hypothesis by running Part~II on the derivative of each removed generator over the pruned tower and accepting an antiderivative with at most one logarithm, respectively a rational one. For $\int dx/\sqrt{e^x-e^{-x}}$ (\S\ref{ex:sinh}) the generator $x$ is removed because $Dx=D\theta/\theta$, and the pruned tower is the curve $y^2=\theta^3-\theta$ with $d/d\theta$ --- the setting of Part~I, where the holomorphic-remainder certificate applies.

\begin{theorem}[Special functions only when necessary]\label{thm:necessity}
\begin{enumerate}
\item[(i)] If Part~II's algorithm returns an antiderivative of $f$, Algorithm~S7 returns the same antiderivative.
\item[(ii)] In \emph{strict mode}, Algorithm~S7 returns a special answer only together with a certificate, from Part~II's algorithm, that $f$ or the special part $S$ has no elementary integral over $L$ or over the pruned tower. Consequently a special answer is returned only if $f$ has no elementary integral over $L$.
\item[(iii)] Without strict mode, if $f$ has an elementary integral over $L$ then a special answer is returned only if Part~II's algorithm, with the published and with the extended criteria, returns neither an antiderivative nor a certificate for $S$.
\end{enumerate}
\end{theorem}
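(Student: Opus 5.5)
The proof reads the three claims off the control flow of Algorithm~S7 and combines them with Lemmas~\ref{lem:transfer} and \ref{lem:subtower}. For (i), Step~1 of Algorithm~S7 runs Part~II's Algorithm~4 on $f$ unchanged and returns its output verbatim whenever that output is an antiderivative. Nothing earlier in the algorithm can alter it, since Algorithms~S1--S6 are only invoked afterwards. This is (E1) of Proposition~\ref{prop:elementaryfirst}, and it needs no further argument.

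For (ii), trace the exits of Algorithm~S7 that return a special answer. The only such exit is the final \textsc{Present} step. In strict mode it is preceded by the check in Step~18: the algorithm returns ``failed'' unless \textsc{Necessity} (Algorithm~S8) produced a nonempty $\mathrm{cert}$. By construction of Algorithm~S8, $\mathrm{cert}$ is a certificate returned by Part~II's Algorithm~4, with the published or the extended criteria. It is issued for some $h\in\{f,S\}$ over either $L$ or $\textsc{Pruned}(L,h)$, possibly rescaled. The hard step is to turn this certificate into the statement ``$f$ has no elementary integral over $L$''. I would argue in four stages.
\begin{enumerate}
\item Take as given that a Part~II certificate for $h$ over a tower $L'$ proves that $h$ has no elementary integral over $L'$. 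This is Part~II's soundness of certificates. For the extended criteria it also needs that S5 only marks bounds as proved via Theorem~\ref{thm:extra} and Proposition~\ref{prop:K567}, and those bounds are genuine.
\item If $L'$ is the pruned tower, the acceptance test in \textsc{Pruned} verifies exactly the hypotheses of Lemma~\ref{lem:subtower} for each removed generator. That lemma then shows that an elementary integral of $h$ over $L$ would give one over $L'$, including under the rescaling. Hence $h$ has none over $L$.
\item If $h=S$, Lemma~\ref{lem:transfer} converts the conclusion for $S$ into the same conclusion for $f$.
\item The returned answer has the form required there. Its rational and logarithmic parts solve the system exactly, by Proposition~\ref{prop:soundness}, so $f-S=D(\text{elementary})$. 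The elementary replacement in Steps~14--15 only moves terms from $S$ into the rational part, and it does so before $S$ is recomputed in S8.
\end{enumerate}

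For (iii), take the contrapositive with strict mode off. Suppose a special answer is returned and Part~II, with either set of criteria, returns an antiderivative for $S$. Step~15 calls exactly these algorithms on $S$, and on success moves that antiderivative into the rational part and sets $\mathcal T=\emptyset$, so the answer would not be special. Suppose instead Part~II returns a certificate for $S$. Then (ii)'s chain shows that $S$, and by Lemma~\ref{lem:transfer} also $f$, has no elementary integral, contrary to the hypothesis. So when $f$ is elementary, a special answer forces Part~II to return neither an antiderivative nor a certificate for $S$.

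I expect the main obstacle to be the third stage of (ii) together with the soundness of certificates under the extended criteria. One has to check that every certificate Part~II can issue, including the holomorphic-remainder certificate used after pruning, depends only on the proved bounds. The pruned-tower hypothesis must also be verified exactly, not heuristically, before Lemma~\ref{lem:subtower} is applied. I would first confirm that S5 flags a bound as proved only under (K1)--(K8) or (B), and that \textsc{Pruned} rejects any removed generator whose integral has more than one logarithm.
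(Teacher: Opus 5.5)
Your proposal is correct and follows the paper's proof. For (i) it uses (E1). For (ii) it chains the soundness of Part~II's certificates (with Theorem~\ref{thm:extra} and Proposition~\ref{prop:K567} for the extended criteria), then Lemma~\ref{lem:subtower} for the pruned and rescaled tower, then Lemma~\ref{lem:transfer} from $S$ to $f$. For (iii) it combines the elementary replacement of Step~15 with the fact that no sound certificate for $S$ can exist when $f$ is elementary. Your stage~4, checking that $f-S$ is the derivative of the elementary part so that Lemma~\ref{lem:transfer} applies, is left implicit in the paper and is a harmless addition.
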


\begin{proof}
(i) is step (E1). (ii) The certificates of Part~II are sound: a certificate returned for $h$ proves that $h$ has no elementary integral over the tower on which it was computed \cite[Prop.~9.2, 9.4]{PartII}, and over the tower with the criteria of Section~\ref{sec:bounds} by Theorem~\ref{thm:extra} and Proposition~\ref{prop:K567}. A certificate over the pruned tower is one over $L$ by Lemma~\ref{lem:subtower}, and a certificate for $S$ is one for $f$ by Lemma~\ref{lem:transfer}. (iii) If $f$ has an elementary integral, then so does $S$ (Lemma~\ref{lem:transfer}), so no sound certificate for $S$ exists; and step (E4) replaces $S$ by the antiderivative that Part~II's algorithm returns for it, if it returns one.
\end{proof}

Theorem~\ref{thm:necessity} decides \emph{whether} special functions are needed. The next theorem decides \emph{which}: it proves that the special terms of an answer are forced.

\begin{theorem}[Independence of the kernels]\label{thm:independence}
Let $K_1,\dots,K_k$ be the kernels of a special answer, split into the $\Ei$ kernels found by residues (Section~\ref{sec:eires}, with the places where they were found), the third-kind kernels, and the rest, which are residue-free at every normal prime. Suppose:
\begin{enumerate}
\item[(a)] at every place $P$ where an $\Ei$ kernel was found, $1$ and the nonzero canonical residues $\hat\tau_P(K_l)$ of the $\Ei$ kernels are linearly independent over $\Fbar$ in $\kappa(P)$;
\item[(b)] there is at most one third-kind kernel, and its residue divisor is certified non-torsion \cite[Prop.~9.4]{PartII};
\item[(c)] for every residue-free kernel $K_j$, the linear system with right-hand side $K_j$ and the other residue-free kernels as columns has no solution, at bounds that are all proved for the family $K_j-\sum_{i\ne j}c_iK_i$ (Corollary~\ref{cor:bounds}), with the candidate logands complete in the sense of \cite[Prop.~9.2(b)]{PartII}, and with every residue-free kernel verified residue-free at the normal primes. Here the tower may be replaced by the pruned tower of the kernels.
\end{enumerate}
Then no nontrivial $\Fbar$-linear combination of $K_1,\dots,K_k$ has an elementary integral over $L$. Consequently the coefficients $\gamma_l$ are the same in every antiderivative of $f$ of the form $(\text{elementary})+\sum_l\gamma'_l\int K_l$, and no special term of the answer can be omitted or traded for another combination of these kernels.
\end{theorem}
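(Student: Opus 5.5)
We argue by contradiction and peel off the three kinds of kernel in turn, using at each stage that a combination with an elementary integral has the shape of Liouville's theorem. Suppose $h:=\sum_l\gamma'_lK_l$ has an elementary integral over $L$ with some $\gamma'_l\neq0$. By Liouville's theorem in the form of \cite[eq.~(3)]{PartII}, $h=Dw+\sum_ic_i\,Du_i/u_i$ with $w,u_i\in\Fbar L$. The consequence about uniqueness of coefficients then follows at once. Two antiderivatives of $f$ of the stated form differ by an elementary function, so $\sum_l(\gamma_l-\gamma'_l)K_l$ has an elementary integral, and independence forces $\gamma_l=\gamma'_l$.

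\emph{Step 1: the $\Ei$ kernels.} Take a place $P$ where an $\Ei$ kernel was found, and apply the canonical residue $\hat\tau_P$. It is $\Fbar$-linear and kills $D(\Fbar L)$, and it sends $Du_i/u_i$ to $v_P(u_i)\in\Z$ \cite[Prop.~7.6]{PartII}, so the residue of the right side lies in $\Fbar$. On the left, the residue-free kernels contribute $0$ by hypothesis (c). The third-kind kernel contributes a constant by Corollary~\ref{cor:ellnormal}. An $\Ei$ kernel not vanishing at $P$ is $P$-regular enough to have residue $0$ by Proposition~\ref{prop:kernelnormal}(d). Hence $\sum\gamma'_l\hat\tau_P(K_l)\in\Fbar$, where the sum runs over the $\Ei$ kernels with nonzero residue at $P$. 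Hypothesis (a) then makes all these $\gamma'_l$ vanish. Every $\Ei$ kernel was found at some such place with nonzero residue, so all $\Ei$ coefficients vanish.

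\emph{Step 2: the third kind.} Now $h$ is the third-kind kernel $K_p$, with coefficient $\gamma'$, plus residue-free terms. Comparing residues at the places over $p$ as in Proposition~\ref{prop:third}, the residue divisor of $h$ is $\gamma'$ times the divisor of $K_p$, up to a divisor whose residues are equal on the two sheets. These last residues are realised by $\log p$ and so do not affect torsion. An elementary integral would make the residue divisor of $h$ principal up to torsion, through the logands $u_i$. This contradicts the non-torsion certificate of \cite[Prop.~9.4]{PartII} unless $\gamma'=0$. The hypothesis in (b) that there is only one such kernel is what lets us avoid cancellation between two non-torsion divisors.

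\emph{Step 3: the residue-free kernels.} Now $h=\sum_{i\in R}\gamma'_iK_i$ with $R$ the residue-free kernels. Choose $j\in R$ with $\gamma'_j\ne0$ and normalise, so that $K_j-\sum_{i\ne j}c_iK_i=Dw+\sum c_i'Du_i/u_i$. By Corollary~\ref{cor:bounds}, applied with the family $K_j-\sum_{i\ne j}c_iK_i$, the rational part $w$ satisfies the proved bounds in force. Because the logand candidates are complete \cite[Prop.~9.2(b)]{PartII}, the logarithmic part lies in the span of the candidate units. So $(w,c_i',c_i)$ is a solution of the linear system of (c), which was verified to have none. This is a contradiction. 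If the pruned tower was used, Lemma~\ref{lem:subtower} transfers the elementary integral down, and the same argument runs there.

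The main obstacle is Step 3. The bounds must be valid uniformly in the unknown coefficients $c_i$, and this is why (c) demands proofs for the whole family, via the generic combination in Algorithm~S5. I would check carefully that Corollary~\ref{cor:bounds} and Theorem~\ref{thm:extra} use only the valuation of the integrand, and that this valuation is at least the minimum over the kernels for every choice of $c_i$, so that one proved bound covers the entire family.
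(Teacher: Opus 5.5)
Your proposal is correct and follows the paper's proof essentially step for step. Residues at the places of (a) remove the $\Ei$ coefficients, and the non-torsion certificate of (b) removes the third-kind coefficient. For the residue-free kernels, the Liouville form, the complete logand list and the proved bounds produce a solution of the system of (c), and the consequence about the $\gamma_l$ is drawn exactly as the paper does.

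Two points should be tightened as in the paper. In Step~2, ``principal up to torsion'' for an $\Fbar$-valued residue divisor must be read componentwise: take $\Q$-components in a basis of $\Fbar$ containing $\gamma'$ and invoke \cite[Thm.~5.4(ii), Cor.~7.5]{PartII}. Your ``up to a divisor equal on the two sheets'' is vacuous there, since after Step~1 only $K_p$ has residues over $p$. In Step~3, the denominator of $w$ comes from Theorem~\ref{thm:structureG}(i), and Corollary~\ref{cor:bounds} supplies only the degrees.
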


\begin{proof}
Let $h=\sum_lc_lK_l$ have an elementary integral. Then every canonical residue of $h$ at a normal prime is a constant: this is \eqref{eq:resdecomp} with no kernels, the residue part of \cite[Thm.~5.4]{PartII}.

At a place $P$ of (a), the residue-free kernels contribute nothing, by Proposition~\ref{prop:kernelnormal}(e), Lemma~\ref{lem:respull} and the verification in (c). A third-kind kernel contributes a constant (Lemma~\ref{lem:respull}). So $\sum c_l\hat\tau_P(K_l)$, summed over the $\Ei$ kernels, lies in $\Fbar$, and (a) gives $c_l=0$ for every $\Ei$ kernel with a nonzero residue at $P$. Each $\Ei$ kernel has one at the place where it was found, so all $\Ei$ coefficients vanish.

At the places over the prime of the third-kind kernel $K_p$, the residues of $h$ are now $c_p$ times those of $K_p$, and they must be realised by the logarithms of an elementary integral. Choose a $\Q$-basis of $\Fbar$ containing $c_p$ if $c_p\ne0$. Then \cite[Thm.~5.4(ii), Cor.~7.5]{PartII} make each $\Q$-component of the residue divisor realisable in the sense of \cite[Cor.~7.5]{PartII} --- some multiple is the normal part of the divisor of an element of $\Fbar L$ ---, and the component along $c_p$ is the divisor of $K_p$, which the certificate of (b) proves not realisable, contradicting (b). So $c_p=0$.

What remains is $h=\sum c_jK_j$ over the residue-free kernels. If some $c_j\ne0$, divide by it: $K_j-\sum_{i\ne j}c'_iK_i$ has an elementary integral and no residue at a normal prime. By \cite[Thm.~5.4]{PartII} its logands are $S'$-units, which lie in the complete candidate list. By Theorem~\ref{thm:structureG}(i) and Corollary~\ref{cor:bounds} its rational part lies in the ansatz at the proved bounds. So the system of (c) has a solution, a contradiction. Over the pruned tower the same argument applies by Lemma~\ref{lem:subtower}.

For the consequence, two antiderivatives of the stated form differ by $\sum_l(\gamma_l-\gamma'_l)\int K_l$ plus an elementary function. Hence $\sum_l(\gamma_l-\gamma'_l)K_l$ has an elementary integral, and $\gamma'=\gamma$.
\end{proof}

A special answer with a single term and a certificate from Theorem~\ref{thm:necessity}(ii) satisfies the conclusion of Theorem~\ref{thm:independence} trivially.

What neither theorem claims is minimality over \emph{all} special-function representations. An antiderivative over a different set of kernels is excluded only by the completeness of the candidate lists, open problem~(c) of Section~\ref{sec:conclusion}.

\emph{Coverage.} Strict mode is the default of the implementation. Of the twenty-four special answers of Section~\ref{sec:examples}, twenty-two carry a certificate under Theorem~\ref{thm:necessity}(ii), and all twenty-two also satisfy Theorem~\ref{thm:independence}. The certificates are: for $f$ itself in nineteen cases; for the special part over the pruned tower in $\int\log\log x\,dx$; and for $f$ over the pruned and rescaled tower in the two examples on $y^2=\theta^3-\theta$ (\S\S\ref{ex:sinh}, \ref{ex:sinh2}). The remaining two are the Lambert integral and Hebisch's Example~22. In each of them a bound is left to the guess (Section~\ref{sec:bounds}; for the Lambert integral the place $(W)$ of Remark~\ref{rem:kerexp}). Part~II therefore cannot certify, and in strict mode Algorithm~S7 reports the special answer as found but not certified:
\begin{verbatim}
>>> integrate_surface_special(1/(LambertW(x)*(1 + LambertW(x))), x)
('failed', 'special answer found, but the integrand is not certified
 non-elementary', Ei(LambertW(x)))
>>> r = integrate_surface_special(exp(x**2), x, details=True)
>>> r[2].necessity
'necessary: the integrand has no elementary integral; the special terms
 are independent modulo elementary functions'
\end{verbatim}
The sessions of Section~\ref{sec:examples} for the two uncertified examples pass \texttt{strict=False}.

\subsection{Partial answers}\label{sec:partial}

When neither an elementary nor a special antiderivative is found, a computer algebra system is expected to return what it can: an expression $I$ in the class and an unevaluated integral $\int r$ with $f=DI+r$. For this to be useful rather than arbitrary --- $I=0$, $r=f$ is always such a pair --- the remainder must be reduced in a sense that the theory controls. The parallel method provides such a reduction, because the structure theorems say where every part of an antiderivative lives.

\begin{definition}[Partial answer]\label{def:partial}
A \emph{partial answer} for $f$ is a pair $(I,r)$ with $f=DI+r$, where $I$ is elementary, or elementary plus special terms whose special part is certified as in Theorem~\ref{thm:necessity}(ii). Its remainder $r$ is \emph{reduced} if:
\begin{enumerate}
\item[(R1)] at every normal prime, $r$ has at most a critical pole, except for simple poles that carry residues outside $\Fbar+\sum\Fbar\bar\xi$;
\item[(R2)] $r$ is the reduction of what is left of $f$ modulo the ansatz of Algorithm~S7, in the sense of Theorem~\ref{thm:partial}(ii).
\end{enumerate}
\end{definition}

\algheader{Algorithm S10}{$\textsc{PartialIntegrate}(f,D,\mathrm{special})$\quad(* Theorem~\ref{thm:partial} *)}
\begin{algorithmic}[1]
\Statex (* Return a complete antiderivative, or a partial answer $(I,r)$, $f=DI+r$, with a reduced remainder and the certificate (if any) that also holds for $\int r$. With special $=$ false, $I$ is elementary. *)
\State $A_0\gets$ Algorithm~S7 (strict) on $f$ if special, else Part~II's Algorithm~4; \textbf{if} it is an antiderivative: \textbf{return} it;\quad $\mathrm{cert}\gets A_0$ if it is a certificate \Comment{(E1)--(E5)}
\State $r_{\mathrm{res}}\gets0$;\quad $\mathcal D\gets\emptyset$;\quad $\mathcal E\gets\textsc{Sources}(L)$ if special, else $\emptyset$
\For{each normal prime $P$, non-special and non-branch, with $v_P(f)\le-\delta_P$, at each place $g=\rho$, $\ell=g-\rho$, as in Steps~2--5 of Algorithm~S2} \Comment{(P1)}
\State $\tau\gets$ the canonical residue; \textbf{skip} if it is constant at the places over $P$
\State \textbf{if} the decomposition $\tau=r_0+\sum r_\xi\tau_\xi$ of Algorithm~S2 has a solution: $f\gets f-\sum r_\xi K_\xi$, append the $(r_\xi,K_\xi)$ to $\mathcal D$; \textbf{continue}
\State \textbf{else} decompose each additive term of $\tau$ separately; \textbf{if} some term decomposes: subtract the $\Ei$ kernels of those terms as above, $\rho\gets$ the sum of the other terms, $\mathrm{piece}\gets\tilde\rho\,D\ell/\ell$ with $\tilde\rho$ its lift, $f\gets f-\mathrm{piece}$, $r_{\mathrm{res}}\gets r_{\mathrm{res}}+\mathrm{piece}$; \textbf{continue} \Comment{$\sin(x)/x+\arctan(x)/x$}
\State $r_0\gets$ the terms of $\tau_0$ free of the generators \Comment{the constant part stays for a logarithm}
\State $\mathrm{piece}\gets(f\ell)|_{g=\rho}/\ell-r_0\,(D\ell)|_{g=\rho}/\ell$ if $v_P(f)=-\delta_P$, \textbf{else} $(\tau-r_0)\,D\ell/\ell$ \Comment{Prop.~\ref{prop:respiece}}
\State $f\gets f-\mathrm{piece}$;\quad $r_{\mathrm{res}}\gets r_{\mathrm{res}}+\mathrm{piece}$
\EndFor
\State $\mathcal A\gets$ Steps~7--14 of Part~II's Algorithm~4 on $f$; \textbf{if} special, $\mathcal D\ne\emptyset$ and $\mathcal A$ reports a non-constant residue: redo Steps~3--9 with $\mathcal E=\emptyset$
\State \textbf{if} $\mathcal A$ certifies a residue divisor non-torsion: $(f,\mathcal D_3,\cdot,\cdot)\gets\textsc{EllipticKernels}(f,\mathcal A)$; \textbf{if} special: $\mathcal D\gets\mathcal D\cup\mathcal D_3$ \textbf{else} $r_{\mathrm{res}}\gets r_{\mathrm{res}}+\sum_{\mathcal D_3}K_p$; re-analyse \Comment{(P2)}
\For{at most four times, while $\mathcal A$ reports a non-constant residue at a prime $p$} \Comment{(P2$'$)}
\State $M\gets$ the additive terms $T$ of $f$ (Step~16) with $p\mid\den(T)$, $v_P(T)\le-\delta_P$, and a non-constant residue at some linear place over $p$ (all such terms when $p$ has no linear place)
\State $f\gets f-\sum M$;\quad $r_{\mathrm{res}}\gets r_{\mathrm{res}}+\sum M$;\quad re-analyse
\EndFor
\State \textbf{if} $\mathcal A$ is still a status: \textbf{return} $\textsc{Present}(0,f_0)$, $f_0$ the original integrand --- the trivial partial answer $\int f_0\,dx$
\State additive terms of $h\in L$: in each coordinate, the partial fractions of $h_i$ in the last generator of $\den(h_i)$, every numerator split into its monomial terms
\State $\cK\gets\textsc{KernelCandidates}(f,\mathcal E)\cup$ the pencil columns of \textsc{EllipticKernels} if special, else $\emptyset$
\For{split $\in\{$over $F$, over $\Fbar\}$, order $\in\{$terms first, monomials first$\}$} \Comment{(P3)--(P4)}
\State bounds by \textsc{PlaceBoundSpecial} (retry $0$) on $\hat g=g+\sum_l(2l+3)K_l$, with $g$ the residual of $\mathcal A$
\State $R_{\mathrm{terms}}\gets$ the additive terms of $g$ with $v_P\ge-\delta_P$ at every normal prime;\quad $R_{\mathrm{mon}}\gets$ the monomials $t^\alpha w_i/D_{R,i}$ with $\alpha_k\le\deg_{t_k}\operatorname{num}(g_i)+\deg_{t_k}\den(g_i)$, in increasing $|\alpha|$, where $D_{R,i}$ is $\den(g_i)$ with the exponent $e$ of each normal prime replaced by $\min(e,\delta_P)$;\quad $R\gets R_{\mathrm{terms}}$ then $R_{\mathrm{mon}}$, or $R_{\mathrm{mon}}$ then $R_{\mathrm{terms}}$, by the order
\State unknowns in the order: $u_{\alpha,i}$, $\gamma_u$, $\alpha_\sigma$, $\kappa_K$, then $\rho_j$ for $R_j\in R$; equate coefficients in
\[
g\;=\;D\Bigl(\frac{b}{s\,D_v}\Bigr)+\sum_u\gamma_u\frac{Du}{u}+\sum_\sigma\alpha_\sigma\frac{D\sigma}{\sigma}+\sum_K\kappa_KK+\sum_j\rho_jR_j
\]
\State row-reduce the augmented matrix in that column order; \textbf{if} the right-hand side is a pivot column: \textbf{continue}; free unknowns $0$ \Comment{Thm.~\ref{thm:partial}(ii)}
\State \textbf{if} some $\kappa_K\ne0$ and \textsc{Necessity} finds no certificate for $\sum\kappa_KK$, nor for the integrated part $f-\sum_j\rho_jR_j$: redo Steps~21--22 with $\cK=\emptyset$ \Comment{strict}
\State $I\gets b/(sD_v)+\text{logarithms}+\sum\kappa_K\!\int\!K$;\quad $r\gets\sum_j\rho_jR_j$; keep the candidate whose $r$ has the fewest additive terms, then the lowest total degree, then the fewest operations \Comment{$x^2e^{x^2}\mapsto-\tfrac12\int e^{x^2}$}
\EndFor
\State $I\gets I+\sum_{\mathcal D}c\!\int\!K$;\quad $r\gets r_{\mathrm{res}}+r$ \Comment{$f=DI+r$ exactly}
\State \textbf{return} $\textsc{Present}(I,r)$, i.e.\ $I+\int r\,dx$ with the integral unevaluated, together with $\mathrm{cert}$, which by Thm.~\ref{thm:partial}(iv) also proves that $\int r\,dx$ has no antiderivative in the class
\end{algorithmic}
\algfooter

\begin{proposition}[Residue pieces]\label{prop:respiece}
Let $P$ be a normal prime over $\ell=g-\rho$. If $f$ has a critical pole at $P$, the polar part $c_P/\ell$ with $c_P=(f\ell)|_{g=\rho}$ has the same residue as $f$ at $P$, and its other poles are those of $c_P$. If the pole is deeper, $\tilde\tau\,D\ell/\ell$, with $\tilde\tau$ a lift of $\hat\tau_P(f)$ free of $g$, has the same canonical residue as $f$ at $P$. In both cases $f$ minus the piece has canonical residue $0$ at $P$.
\end{proposition}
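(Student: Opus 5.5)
The plan is to compute canonical residues through the explicit residue formula of Part~II and to use two of its properties. The canonical residue $\hat\tau_P$ is $\Fbar$-linear and vanishes on elements of valuation $>-\delta_P$ \cite[Prop.~7.6]{PartII}. At a pole of order exactly $\delta_P$ (a critical pole) it equals $\tau_P$, and by \cite[Thm.~7.4(i)]{PartII} with $h=\ell$, $e=1$ this is $\tau_P(f)=(f\ell/D\ell)|_{g=\rho}$. Since $P$ is normal and lies over $\ell=g-\rho$, we have $v_P(\ell)=1$. In the normal case $\delta_P=1$ for a prime linear in the main generator, or more generally $v_P(D\ell)=1-\delta_P$. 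Hence $D\ell/\ell$ has valuation exactly $-\delta_P$ and residue $\hat\tau_P(D\ell/\ell)=v_P(\ell)=1$.

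For the critical case I treat $f\ell$ as $P$-integral of valuation $1-\delta_P$ in the sense that makes $c_P=(f\ell)|_{g=\rho}$ meaningful. The case actually used in Algorithm~S10 is $\delta_P=1$, where $f\ell$ is $P$-integral. Then $f-c_P/\ell=(f\ell-c_P)/\ell$. Since $f\ell-c_P$ vanishes at $g=\rho$, it is divisible by $\ell$ in the local ring, so the difference is $P$-integral and its canonical residue is $0$. Conversely, $\hat\tau_P(c_P/\ell)=\hat\tau_P(f)$ by linearity. The only other poles of $c_P/\ell$ are those of $c_P$: the element $c_P$ is free of $g$, so its denominator involves only the lower generators (and the coordinates $w_i$). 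For $\delta_P>1$ the same computation is run with $\ell$ replaced by the appropriate normalisation. In fact I would reduce to the second statement, because $c_P\,D\ell|_{g=\rho}/\ell$ is then the piece and $D\ell$ reduces to a unit multiple of $\pi^{1-\delta_P}$.

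For the deeper pole, let $\tilde\tau$ be a lift of $\hat\tau_P(f)\in\kappa(P)$ that is free of $g$, so it is $P$-integral with reduction $\hat\tau_P(f)$. I expect to show $\hat\tau_P(\tilde\tau\,D\ell/\ell)=\overline{\tilde\tau}\cdot\hat\tau_P(D\ell/\ell)=\hat\tau_P(f)$. The key step is a multiplicativity property: for $a$ $P$-integral and $u$ of valuation $\ge-\delta_P$, $\hat\tau_P(au)=\bar a\,\hat\tau_P(u)$. Write $a=a_0+\pi a_1$ with $a_0$ in the coefficient field. The term $\pi a_1u$ has valuation $>-\delta_P$ and so carries no residue. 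The term $a_0u$ is handled by $\tau_P$ at valuation $-\delta_P$, using the formula $(u\ell/D\ell)|_{g=\rho}$, which is visibly multiplicative in $g$-free factors. Subtracting the piece then leaves canonical residue $0$ by linearity.

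The main obstacle is the multiplicativity step when $\tilde\tau$ is not literally in the coefficient field of \cite[Prop.~7.6]{PartII}. If $\tilde\tau$ involves generators other than $g$ with nonzero derivative, one must check that $\tilde\tau\,D\ell/\ell$ still has valuation $\ge-\delta_P$ and that no cross term reaches the residue. I would handle this with the Laurent expansion in $\pi=\ell$ and the valuation lemma, noting that $\tilde\tau$ is free of $g$ and hence a $P$-unit or $P$-integral constant in $g$.
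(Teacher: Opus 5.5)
Your proposal is correct and is essentially the paper's argument. For the deeper pole the paper likewise uses $\hat\tau_P(D\ell/\ell)=v_P(\ell)=1$ and pulls out the $P$-integral, $g$-free factor $\tilde\tau$ as its reduction. At valuation $\ge-\delta_P$ this is immediate from the evaluation formula $(u\ell/D\ell)|_{g=\rho}$ of Thm.~7.4(i) of Part~II, and no derivative of $\tilde\tau$ ever enters, so the ``main obstacle'' you anticipate does not arise. For the critical pole, your observation that $f-c_P/\ell$ is $P$-integral is a slightly stronger variant of the paper's direct computation $\tau_P(c_P/\ell)=(c_P/D\ell)|_P=(f\ell/D\ell)|_P$.

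One correction: a normal prime linear in its main generator can have $\delta_P\ge2$, for instance the paper's $(u)$ with $Du=1/(2u)$ and $(z)$ with $Dz=1/(2xz)$. There $(f\ell)|_{g=\rho}$ is not defined at a critical pole, so the critical-case formula tacitly needs $\delta_P=1$; this holds for yours and for the paper's, which divides by $(D\ell)|_P$. Critical poles with $\delta_P\ge2$ are handled by your second construction, with $\tilde\tau$ a lift of $(f\ell/D\ell)|_{g=\rho}$.
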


\begin{proof}
By \cite[Thm.~7.4(i)]{PartII} with $h=\ell$, $e=1$: $\tau_P(c_P/\ell)=(c_P/D\ell)|_P=(f\ell/D\ell)|_P=\tau_P(f)$. For the deeper pole, $\hat\tau_P(\tilde\tau\,D\ell/\ell)=\bar{\tilde\tau}\,v_P(\ell)=\hat\tau_P(f)$, since $\tilde\tau$ is a $P$-unit or $P$-integral and $\hat\tau_P$ is linear with $\hat\tau_P(Du/u)=v_P(u)$ \cite[Prop.~7.6]{PartII}.
\end{proof}

\begin{theorem}[Partial answers]\label{thm:partial}
Let $(I,r)$ be returned by Algorithm~S10. Then:
\begin{enumerate}
\item[(i)] $f=DI+r$ exactly; every special term of $I$ belongs to a special part certified non-elementary, and in the elementary mode $I$ is elementary.
\item[(ii)] Let $W$ be the space spanned by the ansatz and kernel columns, $R_1,R_2,\dots$ the ordered remainder columns, and $P_R$ the remainder columns that are pivots of the reduction. Then $r-r_{\mathrm{res}}$, where $r_{\mathrm{res}}$ is the sum of the pieces of (P1), (P2), is the \emph{unique} element of $(f-r_{\mathrm{res}}+W)\cap\operatorname{span}(P_R)$. A column $R_k$ is in $P_R$ if and only if it does not lie in $W+\operatorname{span}(R_1,\dots,R_{k-1})$; so an additive term of $f$ enters $r$ only if it is not absorbed by the ansatz together with the terms before it.
\item[(iii)] $r$ satisfies (R1).
\item[(iv)] If $f$ is certified to have no antiderivative in the class --- by Part~II in the elementary mode, by Corollary~\ref{cor:rescert} or Theorem~\ref{thm:necessity} in the special mode --- then $r\ne0$, and $\int r$ has no antiderivative in the class either.
\end{enumerate}
\end{theorem}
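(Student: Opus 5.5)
The plan is to read off each clause of Theorem~\ref{thm:partial} from the construction of Algorithm~S10, tracking the identity $f=DI+r$ through every step, and using earlier results only for (iii) and (iv).

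For (i), every modification of $f$ in Steps~3--16 has one of two forms. Either a known kernel combination $\sum r_\xi K_\xi$ (or $K_p$) is subtracted and its integral is recorded in $\mathcal D$, or a piece is subtracted and added to $r_{\mathrm{res}}$. So the invariant
\[
f_0=f+\sum_{\mathcal D}cK+r_{\mathrm{res}}
\]
holds throughout. The final linear system is an exact identity in $L$, namely $g=DI'+\sum_j\rho_jR_j$, where $g$ is the residual of Part~II's Steps~7--14 and those steps are exact. Adding the pieces gives $f_0=DI+r$, with $I$ and $r$ as in the last steps.

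The certification clause follows from Step~23: special columns survive only when \textsc{Necessity} certifies the special part. The residue kernels in $\mathcal D$ are covered by (E3) and Corollary~\ref{cor:rescert}. In the elementary mode $\mathcal E=\emptyset$ and $\cK=\emptyset$, so no kernel is ever introduced.

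For (ii), I would argue by linear algebra on the row-reduced augmented matrix with the stated column order. A column $R_k$ is a pivot exactly when it is not in the span of the earlier columns, and the earlier columns are $W$ together with $R_1,\dots,R_{k-1}$. Setting the free unknowns to $0$ expresses $f-r_{\mathrm{res}}$ as an element of $W$ plus a combination of pivot remainder columns. The pivot remainder columns are independent modulo $W$, so $(f-r_{\mathrm{res}}+W)\cap\operatorname{span}(P_R)$ has at most one element, which gives uniqueness. The one subtlety is that the kept candidate is chosen among several orders and splits. I would note that the claim holds for each candidate with its own $W$ and $P_R$, and the selection step does not affect it.

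For (iii), I would check the two constituents of $r$ separately. The monomial columns $R_{\mathrm{mon}}$ have denominators with exponent at most $\delta_P$ at each normal prime, and the columns $R_{\mathrm{terms}}$ have $v_P\ge-\delta_P$ by construction. The pieces of (P1) have critical poles with constant-free residue by Proposition~\ref{prop:respiece}. The pieces of (P2$'$) are terms whose residue Part~II reports as non-constant, hence outside the exponential span once (P1) has removed the decomposable parts. This last inclusion is the main obstacle. Deeper poles of a term of (P2$'$) may survive, and I would handle them by invoking the (R1) exception as a statement about such residues. I would try to show that after (P1) any remaining deep polar term at $P$ has residue outside $\Fbar+\sum\Fbar\bar\xi$, using Theorem~\ref{thm:structureG}(ii).

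For (iv), if $r=0$ then $f=DI$ with $I$ in the class, which contradicts the certificate; hence $r\ne0$. Next suppose $\int r$ has an antiderivative $J$ in the class. Then $f=D(I+J)$, and $I+J$ is in the class, because both terms are and the class is closed under sums. This is the same argument as Lemma~\ref{lem:transfer}, applied to rational gamma extensions, to elementary extensions, or to Hebisch's class as appropriate. That again contradicts the certificate on $f$, so $\int r$ has no antiderivative in the class.
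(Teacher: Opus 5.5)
Your proofs of (i), (ii) and (iv) follow the paper's argument. For (i), every subtraction and the solved system are exact, and your invariant just makes this explicit. For (ii), the pivot remainder columns are independent modulo $W$ for the stated column order. For (iv), $I+J$ would integrate $f$.

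The gap is in (iii), and it has two parts. First, you overlook the elementary branch of Step~12. There the third-kind kernels $K_p=\frac{c\,r}{p\,q}\,y$ are not put into $I$ but added to $r_{\mathrm{res}}$, so they are constituents of $r$ that must be checked. The paper disposes of them by noting that they have simple poles ($p$ has multiplicity one and $p\nmid q$). Second, your plan for the (P2$'$) pieces does not work, for three reasons:
\begin{itemize}
\item A non-constant residue reported by Part~II only says $\hat\tau_P\notin\Fbar$, not $\hat\tau_P\notin\Fbar+\sum\Fbar\bar\xi$. (P1) skips every prime that is neither linear in a generator nor of constant coefficients. At such a prime a residue lying in the exponential span is never decomposed, and (P2$'$) moves the term carrying it into $r$.
\item Theorem~\ref{thm:structureG}(ii) cannot show that a residue escapes the span. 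It constrains residues of integrands that \emph{have} an antiderivative in the class, so it can only be used contrapositively.
\item The exception in (R1) is stated for simple poles. No property of the residue can therefore license a pole of order $>\delta_P$.
\end{itemize}

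The paper's proof of (iii) is a pure pole-order check and never argues that residues lie outside the exponential span. The monomial columns have their normal exponents capped at $\delta_P$. An additive term of $f$ is admitted as a column only if it has no deeper pole. The (P1) pieces have the simple poles of Proposition~\ref{prop:respiece}, and the third-kind pieces have simple poles. For everything except (P2$'$), this check is all you need, and you can drop the residue argument.

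That check is silent on the (P2$'$) terms, whose selection condition $v_P(T)\le-\delta_P$ does not exclude deeper poles. You have correctly located the one constituent of $r$ that neither your argument nor the paper's covers. A complete proof of (iii) needs either an argument that these terms have at most critical poles, or an explicit exclusion of them from (R1).
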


\begin{proof}
(i) The system is solved exactly, the pieces of (P1), (P2) are subtracted from $f$ and added to $r$, and every kernel identity is exact. Kernel columns survive (P4) only with a certificate for their special part.

(ii) Row reduction of the augmented matrix with the columns in the stated order makes a column a pivot exactly when it is not in the span of the columns before it. With the free unknowns zero, $f-r_{\mathrm{res}}=w+\sum_{R_k\in P_R}\rho_kR_k$ with $w\in W$. If also $f-r_{\mathrm{res}}=w'+r'$ with $w'\in W$ and $r'\in\operatorname{span}(P_R)$, then $r-r'\in W\cap\operatorname{span}(P_R)=0$, because the pivot columns are independent modulo $W$.

(iii) Every remainder column has at most critical poles at the normal primes: the monomial columns by construction, and an additive term of $f$ is admitted as a column only if it has no deeper pole at a normal prime. The pieces of (P1) have the stated simple poles by Proposition~\ref{prop:respiece}, and the third-kind pieces of (P2) have simple poles.

(iv) If $\int r$ had an antiderivative $J$ in the class, $I+J$ would be one for $f$.
\end{proof}

Part~(ii) is a reduction in the sense of reduction-based integration \cite{DuRaab25}: the remainder is a normal form of $f$ modulo the image of the ansatz. It is canonical relative to the ansatz and the order of the remainder columns, and it becomes canonical modulo all elementary integrands when the bounds are proved and the candidate lists complete.

The implementation, \texttt{integrate\_surface\_partial(expr, x, special=True)}, returns $I+\int r$ as a SymPy expression with an unevaluated \texttt{Integral}, verified by differentiation:
\begin{verbatim}
>>> integrate_surface_partial(exp(x)/x**2, x, special=False)
(Integral(exp(x)/x, x) - exp(x)/x, True)
>>> integrate_surface_partial(1/log(x)**2, x, special=False)
(-x/log(x) + Integral(1/log(x), x), True)
>>> integrate_surface_partial(log(x)/(x + 1) + 1/x, x)
(log(x) + Integral(log(x)/(x + 1), x), True)
>>> integrate_surface_partial(exp(x**2) + log(x)*exp(-x**2), x)
(sqrt(pi)*erfi(x)/2 + Integral(exp(-x**2)*log(x), x), True)
>>> integrate_surface_partial(x*exp(x)/(x + 1)**2 + log(x)*exp(-x**2), x)
(Integral(exp(-x**2)*log(x), x) + exp(x)/(x + 1), True)
>>> integrate_surface_partial(1/(x*log(x + sqrt(x**2 + 1)))
...                           + 1/sqrt(x**2 + 1), x)
(log(x + sqrt(x**2 + 1)) + Integral(1/(x*log(x + sqrt(x**2 + 1))), x), True)
\end{verbatim}
The first two are Hermite reduction and a residue piece in the elementary mode. Part~II certifies both integrands non-elementary, so by Theorem~\ref{thm:partial}(iv) $\int e^x/x\,dx$ and $\int dx/\log x$ are certified non-elementary as well. The third and sixth remainders are certified outside the class by the refined residue certificate: a dilogarithm and the moving prime of Part~II. The fourth is a special answer plus an unevaluated integral. On $60$ integrands formed as the derivative of a random elementary function plus one of six non-integrable terms, in both modes, every partial answer is verified, and the remainder is exactly the non-integrable term --- or, in the special mode, the whole integral is computed. The script \texttt{partial\_check.py} reruns this test. The supplementary corpus \texttt{stress\_special.py} adds $90$ partial-answer cases with their expected reduced remainders, in both modes: Hermite reductions such as $\int x^2e^{x^2}\,dx=\tfrac12xe^{x^2}-\tfrac12\int e^{x^2}\,dx$ in the elementary mode; special answers with a remainder outside the class such as $\int(\sin(x)/x+\arctan(x)/x)\,dx=\Si(x)+\int\arctan(x)/x\,dx$; radicals; and thirty random integrands. All $90$ return exactly the expected remainder.

\begin{remark}[Constants]
The linear system stays over the number field of the tower \cite[Rem.~9.6]{PartII}, enlarged only by the algebraic shifts $c$ and the roots $e_i$, $\rho$; the transcendental constants $e^{c}$, $\sqrt\pi$ and the branch constants occur only in the output.
\end{remark}

\section{Examples}\label{sec:examples}

Each example gives the integral, its computation by Algorithm~S7, and the SymPy session that computes it. The sessions use \texttt{parallel\_mixed\_special.py}, the implementation of Algorithms S1--S3 on top of the implementation of Part~II, with the once-only preamble
\begin{verbatim}
>>> from sympy import *
>>> from parallel_mixed import Tower
>>> from parallel_mixed_special import (integrate_surface_special,
...                                     parallel_integrate_special)
>>> x = symbols('x', positive=True); theta = symbols('theta', positive=True)
>>> Y = Dummy('Y')
\end{verbatim}
\texttt{integrate\_surface\_special(expr, x)} builds the tower from the expression, as Part~II's surface interface does, and returns the antiderivative with the flag \texttt{True} when it is verified by differentiation; the keyword \texttt{tower=(T, pair, back)} supplies a tower explicitly, and \texttt{strict=False} returns a special answer also when the integrand is not certified non-elementary (Section~\ref{sec:necessity}). The script \texttt{examples\_special.py} reruns the section; its thirty-one examples take twelve seconds in total, and it prints the necessity record of each special answer.

\subsection{$\int e^{x^2}\,dx$}\label{ex:ex2}

$L=\Q(x,\theta)$, $D\theta=2x\theta$, $f=\theta$, with the source $(\theta,x^2)$ and the exponent $J=1$. Part~II returns \textbf{NotElementary} (holomorphic remainder). There are no residues. The Kummer search: $\Lambda=x^2$, $\operatorname{Res}_x(x^2+c,2x)=c$, so $c=0$, a double zero at the normal prime $(x)$, $N_0=2$, $s=\tfrac12$; $W=(-x^2)^{-1}\theta^2$ is $-1$ times the square of $\omega=\theta/x$, and $D\omega/\omega=2x-1/x=Dv-\tfrac12Dv/v$. The kernel is $\omega Dv=2\theta$, the system $\theta=D(b\theta)+2\gamma\theta$ gives $b=0$, $\gamma=\tfrac12$, and $\int K=\sqrt{-1}\,\Gamma(\tfrac12,-x^2)$:
\begin{verbatim}
>>> integrate_surface_special(exp(x**2), x)
(sqrt(pi)*erfi(x)/2, True)
\end{verbatim}

\subsection{$\int\sqrt{\log x}\,dx$: the window at $(z)$}\label{ex:sqrtlog}

After flattening, $L=\Q(x,z)$, $Dz=1/(2xz)$, the tower of \cite[Rem.~4.11]{PartII} with $(z)$ normal and $\delta_{(z)}=2$; $f=z$; the logarithm gives the source $(x,z^2)$. Kummer search: $\Lambda=z^2$, $c=0$, a double zero at $(z)$, $s=\tfrac12$, $\omega=x/z$ up to the constant $\sqrt{-1}$, with $D\omega/\omega=1/x-1/(2xz^2)=Dv-\tfrac12Dv/v$. The kernel $K=1/z$ has $v_{(z)}(K)=-1=s\,n-\delta$: it lies in the window. The system $z=D(b)+\gamma/z$ has the solution $b=xz$, $\gamma=-\tfrac12$ (indeed $D(xz)=z+1/(2z)$), and there is no integration by parts:
\begin{verbatim}
>>> integrate_surface_special(sqrt(log(x)), x)
(x*sqrt(log(x)) - sqrt(pi)*erfi(sqrt(log(x)))/2, True)
\end{verbatim}
Part~II returns ``failed'' on this integrand, and certifies $\int e^{x^2}\,dx$ non-elementary by its holomorphic-remainder certificate: in both cases the system without the kernel column has no solution, and the column supplies the answer.

\subsection{$\int dx/\log x$, $\int\log\log x\,dx$ and $\int dx/\log^2x$}\label{ex:li}

$L=\Q(x,t)$, $Dt=1/x$. For $f=1/t$ the moving prime $(t)$ is normal with $\delta=1$ and $\hat\tau_{(t)}=x$. The source $(x,t)$ has $\lambda\equiv0$ at $t=0$, so $x\in\fE_{(t)}$, $\tau=1\cdot\bar x$, and the kernel $x\,Dt/t=1/t$ is $f$ itself. For $f=1/t^2$ the pole is deep; Part~II's local Hermite reduction gives the canonical residue $x$, the same kernel is subtracted, and Part~II integrates the rest. For $\log\log x$ the $\li$ term is a column of the system:
\begin{verbatim}
>>> integrate_surface_special(1/log(x), x)
(li(x), True)
>>> integrate_surface_special(1/log(x)**2, x)
(-x/log(x) + li(x), True)
>>> integrate_surface_special(log(log(x)), x)
(x*log(log(x)) - li(x), True)
\end{verbatim}

\subsection{The moving prime of Part~II: $\li$ or a certificate}\label{ex:moving}

In the tower of \cite[\S10.3]{PartII}, $y^2=x^2+1$, $t=\log(x+y)$, $Dt=1/y$, the moving prime $(t)$ has $\kappa=\Q(x,y)$ and the source $(g,t)$ with $g=x+y$, $\bar g^{-1}=y-x$. For $f=(x+y)/(y\,t)$ the residue is $\bar g$, and $\int f\,dx=\li(x+\sqrt{x^2+1})$. For Part~II's integrand $f=1/(x\,t)$ the residue is $y/x$. With $u=\bar g$, $x=(u-u^{-1})/2$ and $y=(u+u^{-1})/2$, so $y/x=(u^2+1)/(u^2-1)$ is not a Laurent polynomial in $u$, and the residue lies outside $\Fbar+\sum_j\Fbar\bar g^{\,j}$. Corollary~\ref{cor:rescert} upgrades Part~II's certificate from ``not elementary'' to ``not in $\GQ$, nor with elliptic integrals'':
\begin{verbatim}
>>> integrate_surface_special((x + sqrt(x**2 + 1))/(sqrt(x**2 + 1)
...                            *log(x + sqrt(x**2 + 1))), x)
(li(x + sqrt(x**2 + 1)), True)
>>> integrate_surface_special(1/(x*log(x + sqrt(x**2 + 1))), x)
('not in class', log(x + sqrt(x**2 + 1)), (0, 1/x),
 'canonical residue Fbar + span of the reductions of the exponential monomials
 constant at the prime (Corollary 4.3; relative to the sources
 ['y + x'])')
\end{verbatim}
The same moving prime, two integrands, and the residue decides between them. The residue is printed in the tower coordinates, $(0,1/x)\leftrightarrow y/x$.

\subsection{$\int e^x/(x^2+1)\,dx$: complex shifts}\label{ex:complexei}

The prime $x^2+1$ splits over $\Q(\sqrt{-1})$; at $x=\pm\sqrt{-1}$, $\hat\tau=\pm\bar\theta/(2\sqrt{-1})$ with $\bar\theta$ transcendental in $\kappa=\Q(\sqrt{-1})(\theta)$, and $c=\pm\sqrt{-1}$. The two kernels $\theta/(x\mp\sqrt{-1})$ are subtracted, the constants $e^{\pm\sqrt{-1}}$ appearing only at output:
\begin{verbatim}
>>> integrate_surface_special(exp(x)/(x**2 + 1), x)
(-I*exp(I)*Ei(x - I)/2 + I*exp(-I)*Ei(x + I)/2, True)
\end{verbatim}

\subsection{$\int e^{\sqrt x}/x\,dx$ and $\int e^x/\sqrt x\,dx$: flattened roots}\label{ex:expsqrtx}

The flattened tower of \cite[\S10.5]{PartII}: $L=\Q(u,\theta)$, $Du=1/(2u)$, $\theta=e^u$, $(u)$ normal with $\delta=2$. For $f=\theta/u^2$, $v_{(u)}(f)=-2=-\delta$ and $\hat\tau_{(u)}=2\bar\theta$: a residue at a $\delta=2$ prime, and the kernel $\theta\,Du/u$ with coefficient $2$. For $e^x/\sqrt x$ the flattened tower has $\theta=e^{u^2}$ and the Kummer search finds $\erfi$:
\begin{verbatim}
>>> integrate_surface_special(exp(sqrt(x))/x, x)
(2*Ei(sqrt(x)), True)
>>> integrate_surface_special(exp(x)/sqrt(x), x)
(sqrt(pi)*erfi(sqrt(x)), True)
\end{verbatim}

\subsection{$\int x\,e^{-x^2}/\sqrt{x^2+1}\,dx$: $\erf$ of the radical}\label{ex:conic}

$K_0=\Q(x,\theta)$, $\theta=e^{-x^2}$, $y^2=x^2+1$, the curve of the flagship examples of Part~II. $f=x\theta/y$ lies in the coordinate $y$ at weight $1$, $\Lambda=-x^2$. The shift $c=0$ gives the $\erf(x)$ kernel $-2\theta$, in the rational coordinate, which the system leaves with coefficient zero; the branch value $c=-1$ gives $v=-(x^2+1)=-y^2$, which vanishes at the branch prime $(y)$ to order $n=2=e_{(y)}$. So $s=\tfrac12$ is admissible by Corollary~\ref{cor:parameter} although $v$ has only simple zeros in $\Q[x]$, and $\omega=\theta/y$ passes the Kummer test ($W/q=\theta^2/(x^2+1)^2$). The kernel $-2x\theta/y$ has valuation $-1$ at the branch prime, where $\delta=2$: the window again. The session supplies the tower explicitly:
\begin{verbatim}
>>> q = x**2 + 1; T = Tower([x, theta], [(1, 0), (-2*x*theta, 0)], q)
>>> integrate_surface_special(x*exp(-x**2)/sqrt(q), x,
...     tower=(T, (0, theta*x/q), [(theta, exp(-x**2)), (Y, sqrt(q))]))
(-E*sqrt(pi)*erfc(sqrt(x**2 + 1))/2, True)
\end{verbatim}

\subsection{Hebisch's Example 22: the kernel is in $L$, its factors are not}\label{ex:hebisch22}

$K_0=\Q(x,\theta)$, $\theta=e^{-x}$, $y^2=q=2x\theta$, so that $y/x=e^{-x/2}/\sqrt{x/2}$; the radicand contains the exponential, and $(\theta)$ is a branch prime that is also special. $f=y/x$ is the integrand of \cite[Ex.~22]{Hebisch18}. The weight of $y$ is $\tfrac12$, so $J=\tfrac12$ and $\Lambda=-x/2$; with $c=0$, $v=-x/2$ vanishes at the branch prime over $(x)$ with $n=e=2$, so $s=\tfrac12$, and
\[
\omega=\bigl(\tfrac x2\bigr)^{-1/2}e^{-x/2}=\frac yx\in L,
\]
although neither $\sqrt x$ nor $e^{-x/2}$ lies in $L$. The trace of $f$ to $K_0$ is zero, so a reduction that passes to traces sees nothing; the parallel method works in the Trager coordinates and never takes a trace.
\begin{verbatim}
>>> q = 2*x*theta; T = Tower([x, theta], [(1, 0), (-theta, 0)], q)
>>> integrate_surface_special(sqrt(2*x*exp(-x))/x, x, strict=False,
...     tower=(T, (0, 1/x), [(theta, exp(-x)), (Y, sqrt(q))]))
(-2*sqrt(pi)*erfc(sqrt(2)*sqrt(x)/2), True)
\end{verbatim}

\subsection{$\Gamma(\tfrac13,\cdot)$}\label{ex:gamma13}

For $x^{-2/3}e^{-x}$ the root $y^3=x$ is flattened; the Kummer search finds a triple zero, $k\mid3$, and both $s=\tfrac13$ and $s=\tfrac23$ enter as columns, the system choosing the first. For $e^{-x^3}$ the triple zero is in $\Q[x]$:
\begin{verbatim}
>>> integrate_surface_special(x**Rational(-2, 3)*exp(-x), x)
(-uppergamma(1/3, x), True)
>>> integrate_surface_special(exp(-x**3), x)
(-uppergamma(1/3, x**3)/3, True)
\end{verbatim}

\subsection{$\Si$, $\Ci$ and a Fresnel integral}\label{ex:si}

For $\sin(x)/x$ the surface interface builds $L=\Q(x,t)$, $t=\tan(x/2)$, $f=2t/(x(1+t^2))$. At the normal prime $(x)$, $\hat\tau=2\bar t/(1+\bar t^2)$, with poles at $\bar t=\pm\sqrt{-1}$ only; with $\xi_t=(1+\sqrt{-1}t)/(1-\sqrt{-1}t)$, $\lambda(\xi_t)=\sqrt{-1}\,x$, one has $\hat\tau=(\bar\xi_t-\bar\xi_t^{-1})/(2\sqrt{-1})$, and the two $\Ei$ kernels $\xi_t^{\pm1}/x$ are subtracted. For $\sin(x^2)$ the Kummer search finds the kernels of $\Gamma(\tfrac12,\mp\sqrt{-1}x^2)$; the presentation returns them as complementary error functions at complex argument, which is where a Fresnel rewriting would enter.
\begin{verbatim}
>>> integrate_surface_special(sin(x)/x, x)
(Si(x), True)
>>> integrate_surface_special(cos(x)/x, x)
(Ci(x), True)
>>> integrate_surface_special(sin(x**2), x)
(-sqrt(I)*sqrt(pi)*erfc(sqrt(I)*x)/4
 - sqrt(pi)*sqrt(-I)*erfc(x*sqrt(-I))/4, True)
\end{verbatim}

\subsection{Silent zeros: $\int e^{e^x}dx$, $\int e^{1/x}dx$, $\int e^x\log x\,dx$}\label{ex:eiexp}\label{ex:explog}

$\int e^{e^x}\,dx$: $L=\Q(x,\theta_1,\theta_2)$, $\theta_1=e^x$, $\theta_2=e^{\theta_1}$; $f=\theta_2$ has no residues, and at the special prime $(\theta_1)$, $\lambda(\theta_2)=\theta_1\equiv0$, which gives the kernel $\theta_2\,D\theta_1/\theta_1=f$. For $e^{1/x}$ the argument $1/x$ vanishes at $x=\infty$; for $e^x\log x$ the argument $x$ vanishes at the special prime $(x)$ of $\log x$, and the $\Ei$ column enters together with the rational part $e^x\log x$:
\begin{verbatim}
>>> integrate_surface_special(exp(exp(x)), x)
(Ei(exp(x)), True)
>>> integrate_surface_special(exp(1/x), x)
(x*exp(1/x) - Ei(1/x), True)
>>> integrate_surface_special(exp(x)*log(x), x)
(exp(x)*log(x) - Ei(x), True)
\end{verbatim}

\subsection{The Lambert function}\label{ex:lambert}

$L=\Q(x,W)$, $DW=W/(x(1+W))$, the non-monomial tower of \cite[Ex.~10.3.4]{Bronstein05} in which $(W)$ is special. The source $(x/W,W)$ has $\lambda\equiv0$ at $(W)$, and the silent kernel $(x/W)\,DW/W=1/(W(1+W))$ is the integrand. No certificate is available (Section~\ref{sec:necessity}), so the session passes \texttt{strict=False}:
\begin{verbatim}
>>> integrate_surface_special(1/(LambertW(x)*(1 + LambertW(x))), x,
...                           strict=False)
(Ei(LambertW(x)), True)
\end{verbatim}

\subsection{A Hermite part with $\erfc$}\label{ex:hermite}

$\int e^{-x^2}/x^2\,dx$ needs both the Hermite denominator $x$ and the $\erf$ column; the bounds of Corollary~\ref{cor:bounds} are taken over the integrand and the kernel:
\begin{verbatim}
>>> integrate_surface_special(exp(-x**2)/x**2, x)
(sqrt(pi)*erfc(x) - exp(-x**2)/x, True)
\end{verbatim}

\subsection{Certificates, and a failure}\label{ex:certificate}

For $\log(x)/(x+1)$ and $\arctan(x)/x$ the residue at $(x+1)$, respectively $(x)$, is the reduction $\bar t$ of the logarithmic generator, transcendental in $\kappa$ and not in the exponential span; for $1/(x+e^x)$ the residue at the moving prime $x+\theta$ is $-1/(x-1)$. All three are certified outside the class. The first two have antiderivatives in dilogarithms (Section~\ref{sec:conclusion}). For $\log(x)\,e^{-x^2}$ the only candidate is the $\erf$ kernel, and in the part of weight one and degree one in $\log x$ the system demands a polynomial $b_1$ with $b_1'-2xb_1=1$, which does not exist. Every bound in force is proved --- the degree in $x$ by (K5) at $x=\infty$ (Example~\ref{ex:boundsdecided}(i)), where Part~II's criteria leave a guess --- so the inconsistency is not an artefact of the ansatz. Since special functions enter \eqref{eq:liouG} linearly with constant coefficients, $\tfrac{\sqrt\pi}2\erf(x)\log x-\tfrac{\sqrt\pi}2\int\erf(x)\,dx/x$ is not of the form \eqref{eq:liouG}, and no antiderivative in $\GQ$ is expected; the algorithm returns ``failed'' with the bounds recorded as proved, the one missing ingredient of a certificate being the completeness of the candidate lists.
\begin{verbatim}
>>> integrate_surface_special(log(x)/(x + 1), x)[:3]
('not in class', x + 1, (t1, 0))
>>> integrate_surface_special(1/(x + exp(x)), x)[:3]
('not in class', x + exp(x), (-1/(x - 1), 0))
>>> integrate_surface_special(log(x)*exp(-x**2), x)[:2]
('failed',
 'no solution with the special kernels (every bound in force proved)')
\end{verbatim}

\subsection{$\int dx/\sqrt{e^x-e^{-x}}$: a curve carried by an exponential}\label{ex:sinh}

$L=\Q(x,\theta,y)$, $\theta=e^x$, $y^2=\theta^3-\theta$, so $y/\theta=\sqrt{e^x-e^{-x}}$ and the integrand is $\theta/y$. The pencil is $u=\theta$, $\cC:Y^2=U^3-U$, $e_1=1$, $e_2=0$, $e_3=-1$, $\lambda^2=2$, $m=\tfrac12$. By Proposition~\ref{prop:points} the curve can only enter through the radical; here the radicand is a polynomial in the exponential, and the branch prime $(\theta)$ is also the special prime of the hyperexponential. The columns are $D\theta/y=\theta/y$ and $\theta\,D\theta/y=\theta^2/y$; the system gives $f=K^{(1)}$, and with $\sin\varphi=\sqrt{2/(\theta+1)}$:
\begin{verbatim}
>>> q = theta**3 - theta; T = Tower([x, theta], [(1, 0), (theta, 0)], q)
>>> tw = (T, (0, theta/q), [(theta, exp(x)), (Y, sqrt(q))])
>>> integrate_surface_special(exp(x)/sqrt(exp(3*x) - exp(x)), x, tower=tw)
(-sqrt(2)*elliptic_f(asin(sqrt(2)/sqrt(exp(x) + 1)), 1/2), True)
\end{verbatim}

\subsection{$\int\sqrt{e^x-e^{-x}}\,dx$: the second kind and a Hermite part}\label{ex:sinh2}

Same tower, $f=y/\theta$. The system $f=D(b)+\alpha K^{(1)}+\beta K^{(2)}$ has the solution $b=-2y/\theta$, $\alpha=0$, $\beta=2$ (indeed $D(y/\theta)=(\theta^2+1)/(2y)$), the denominator $\theta$ of $b$ sitting at the special branch prime, where the bound uses the valuation of $K^{(2)}$:
\begin{verbatim}
>>> tw = (T, (0, 1/theta), [(theta, exp(x)), (Y, sqrt(q))])
>>> integrate_surface_special(sqrt(exp(3*x) - exp(x))/exp(x), x, tower=tw)
(-2*sqrt(exp(3*x) - exp(x))*exp(-x)
 + 4*sqrt(2)*elliptic_e(asin(sqrt(2)/sqrt(exp(x) + 1)), 1/2)
 - 2*sqrt(2)*elliptic_f(asin(sqrt(2)/sqrt(exp(x) + 1)), 1/2)
 + 4*sqrt(exp(3*x) - exp(x))/(exp(x) + 1), True)
\end{verbatim}

\subsection{$\int dx/\bigl((x-2)\sqrt{x^3-x}\bigr)$: a non-torsion residue becomes $\Pi$}\label{ex:third}

$L=\Q(x,y)$, $y^2=x^3-x$, $f=1/((x-2)y)$, with residues $\pm1/\sqrt6$ at the places $(2,\pm\sqrt6)$. Reduction modulo primes in which $6$ is a square gives the orders $4$ modulo $5$ and $3$ modulo $23$ for the place $(2,\sqrt6)$, which are incompatible with finite order by \cite[Prop.~9.4]{PartII}; Part~II therefore certifies non-elementarity. Proposition~\ref{prop:third} gives $c=1$ and $K_{x-2}=f$, and Lemma~\ref{lem:legendre} with $n=3/2$:
\begin{verbatim}
>>> q = x**3 - x; T = Tower([x], [(1, 0)], q)
>>> integrate_surface_special(1/((x - 2)*sqrt(q)), x,
...     tower=(T, (0, 1/((x - 2)*q)), [(Y, sqrt(q))]))
(sqrt(2)*elliptic_f(asin(sqrt(2)/sqrt(x + 1)), 1/2)/3
 - sqrt(2)*elliptic_pi(3/2, asin(sqrt(2)/sqrt(x + 1)), 1/2)/3, True)
\end{verbatim}

\subsection{Elliptic objects in the integrand}\label{ex:ellK}\label{ex:sn2}

$K(m)$ and $E(m)$ are non-monomial generators, $DK=(E-(1-m)K)/(2m(1-m))$, $DE=(E-K)/(2m)$, and Part~II computes $\int K(m)\,dm$ with no special machinery. For $s=\operatorname{sn}(x\,|\,k)$, $y=\operatorname{cn}\operatorname{dn}$, $y^2=(1-s^2)(1-ks^2)$, $Ds=y$, the rescaling $D'=y^{-1}D$ \cite[Lemma~3.4]{PartII} gives a pencil in $s$ with $D's=1$ above which the variable is an upper primitive with $D'x=1/y$; the integrand becomes $s^2/y$, and the Legendre case of Lemma~\ref{lem:legendre} gives $\int\operatorname{sn}^2(x\,|\,k)\,dx=(F-E)(\operatorname{am}x\,|\,k)/k=(x-E(\operatorname{am}x\,|\,k))/k$:
\begin{verbatim}
>>> m = symbols('m', positive=True); K, E = symbols('K E')
>>> T = Tower([m, K, E], [(1, 0), ((E - (1 - m)*K)/(2*m*(1 - m)), 0),
...                       ((E - K)/(2*m), 0)])
>>> parallel_integrate_special((K, 0), T)
2*E + 2*K*m - 2*K
>>> s, xs = symbols('s x_s'); q = expand((1 - s**2)*(1 - s**2/3))
>>> T = Tower([s, xs], [(1, 0), (0, 1/q)], q)
>>> parallel_integrate_special((0, s**2/q), T).expr
3*x_s - 3*elliptic_e(asin(s), 1/3)
\end{verbatim}
The first-kind integral $F(\arcsin s\,|\,\tfrac13)$ is the generator $x_s$ itself, and the presentation writes it so (Section~\ref{sec:noneed}).

\subsection{Third-kind residue divisors over number fields}\label{ex:numberfield}

For $f=1/\bigl((x^2-5)\sqrt{x^3-x}\bigr)$ the prime $x^2-5$ is irreducible over $\Q$. The four places over it are
\[
P_{\pm}=\bigl(\sqrt5,\pm2\cdot5^{1/4}\bigr),\qquad Q_{\pm}=\bigl(-\sqrt5,\pm2\sqrt{-1}\,5^{1/4}\bigr),
\]
with residues $\pm5^{1/4}/20$ and $\pm\sqrt{-1}\,5^{1/4}/20$. Their $\Q$-span has the basis $r_1=5^{1/4}/20$, $r_2=\sqrt{-1}\,r_1$, and the two components $P_+-P_-$ and $Q_+-Q_-$ are Galois conjugate: on $y^2=x^3-x$, which has complex multiplication by $\sqrt{-1}$, in fact $Q_\pm=[\sqrt{-1}]P_\pm$. Proposition~9.4 of Part~II applies at degree-one primes of $K=\Q(5^{1/4},\sqrt{-1})$, and three primes are needed: the order of $P_+-P_-$ is $13$ modulo $101$, $13$ modulo $109$ and $17$ modulo $149$. The first two are compatible with a point of order $13$; the third excludes it. The implementation had stopped at two primes and reported ``needs torsion realisation''. It now samples up to eight, certifies the divisor non-torsion, and the third-kind kernel of Proposition~\ref{prop:third} gives $\Pi$ at the characteristics $(1\pm\sqrt5)/2$:
\begin{verbatim}
>>> q = x**3 - x; T = Tower([x], [(1, 0)], q)
>>> integrate_surface_special(1/((x**2 - 5)*sqrt(q)), x,
...     tower=(T, (0, 1/((x**2 - 5)*q)), [(Y, sqrt(q))]))
(sqrt(10)*elliptic_f(phi, 1/2)/(10 + 10*sqrt(5))
 - sqrt(10)*elliptic_f(phi, 1/2)/(10 - 10*sqrt(5))
 + sqrt(10)*elliptic_pi(1/2 - sqrt(5)/2, phi, 1/2)/(10 - 10*sqrt(5))
 - sqrt(10)*elliptic_pi(1/2 + sqrt(5)/2, phi, 1/2)/(10 + 10*sqrt(5)),
 True)                          # phi = asin(sqrt(2)/sqrt(x + 1))
\end{verbatim}
The same machinery separates the torsion cases from the others:
\begin{itemize}
\item For $x^2-2x-1$ the places lie over $1\pm\sqrt2$, the $x$-coordinates of points $P$ with $2P=(1,0)$, which are $4$-torsion. Part~II realises the logarithm over $\Q(\sqrt2)$, and the answer is that logarithm plus a first-kind term.
\item For $x^2+1$ the points over $\pm\sqrt{-1}$ are $4$-torsion as well, since $2P=(0,0)$.
\item For $x^2-3x+1$ the divisor is non-torsion (orders $5$ modulo $401$, $58$ modulo $449$), and the answer is in $\Pi$.
\end{itemize}
All four are certified non-elementary and independent under Theorems~\ref{thm:necessity} and \ref{thm:independence}.

\subsection{Cherry's classes}\label{ex:cherry}

Cherry's algorithms decide, for an integrand in a transcendental elementary tower, whether it has an antiderivative in elementary functions and error functions \cite{Cherry85}, respectively elementary functions and logarithmic integrals \cite{Cherry86}. Both are recursive: they solve Risch differential equations, with parameters. Their classes lie in $\GQ$ --- $\erf$ is $\Gamma(\tfrac12,\cdot)$, and $\li(u)=\Ei(\log u)$ with $\log u$ in the tower --- and on transcendental elementary towers the hypotheses (H1) and (H2) hold by \cite[Thm.~18]{Hebisch18} and \cite{RothsteinCaviness79}.

We tested twenty integrals of Cherry's types:
\begin{itemize}
\item $e^{-x^2}$, $x^2e^{-x^2}$, $e^{-x^2+x}$, $e^xe^{-e^{2x}}$, $(1+x^{-2})e^{x^2}$, and Gaussians of $\log x$ such as $e^{-\log^2x}$ and $x^2e^{-\log^2x}$;
\item $1/\log(x+1)$, $x^3/\log x$, $x/(1+\log x)$, $1/(\log^2x-1)$ and other sums of logarithmic integrals;
\item one elementary integral and three outside the class.
\end{itemize}
Algorithm~S7 finds every antiderivative, including $\int e^{-\log^2x}\,dx=-\tfrac{\sqrt\pi}2e^{1/4}\erfc(\log x-\tfrac12)$. That answer needed two additions made for this group: the Kummer search now combines the arguments of logarithmic generators with the exponentials, and criterion (K8) decides the places at $x=\infty$ and $(x)$ of the Gaussian over $\log x$. It certifies the necessity of the special terms in sixteen of the twenty cases.

The four Gaussians of $\log x$ remain uncertified. In $\Q(x,t,e^{-t^2})$, $t=\log x$, the place $t=\infty$ is a chain place in the sense of Part~II: the Gaussian depends on the primitive $t$. There Theorem~\ref{thm:extra} gives $r^*=1<r=2$ and $\partial^*=d/d\bar x$, so the constant leading forms defeat (ii*). This is precisely where Cherry solves a Risch differential equation.

Two cases decide the question the other way. For $\log(x)\,e^{-x^2}$ Cherry's algorithm proves that no antiderivative in his class exists. Algorithm~S7 proves every bound and finds the system inconsistent, but reports ``failed'', because the completeness of its candidate lists is not proved. For $1/(x+\log x)$ both decide: the residue certificate proves non-membership.

So the extended algorithm covers Cherry's classes as \emph{answers}, together with $\Gamma(s,\cdot)$ at every rational $s$, $\Si$, $\Ci$, Fresnel integrals, elliptic integrals, radicals anywhere in the tower and non-monomial generators, with erf and li terms in one answer. It does not yet subsume them as \emph{decision procedures}; Section~\ref{sec:conclusion} lists what that requires.

\section{Conclusions and open problems}\label{sec:conclusion}

Special functions in the answer fit the parallel method without changing its shape. A kernel is a known element of the tower; its coefficient is a constant of the one linear system, or is fixed by the residues before the system is assembled; and the valuation theory of Part~II decides where kernels can have poles. Three facts organise the extension: the $s=0$ kernels are exactly what a non-constant residue in the exponential span asks for; the $s>0$ kernels live in the sub-critical window of radical towers; and elliptic integrals live on the radical, their third kind taking over Part~II's non-torsion certificate. The residue field carries a hierarchy: constant residues are the residues of logarithms and third-kind integrals, residues in the span of reductions of exponential elements those of $\Ei$, and residues affine in the reductions of logarithmic generators --- $\log(x)/(x+1)$ and $\arctan(x)/x$ of \S\ref{ex:certificate} --- those of dilogarithms.

The open problems are the hypotheses and the completeness of the candidate lists.
\begin{enumerate}
\item[(a)] \emph{Descent through the radical} (H1), for the kernel data of Theorem~\ref{thm:liouville}; the pitfall is Hebisch's Example~22, where a trace of $f$ vanishes.
\item[(b)] \emph{The exponential group} (H2): generators of $\fE/\Fbar^*$ for towers (T1)--(T4), including the Kummer closure of the radical.
\item[(c)] \emph{Completeness of the candidates}: that the residue rule and the silent-zero rule produce every $\Ei$ term, and the Kummer search every $\Gamma(s,\cdot)$ term with $s>0$, up to kernels whose sum is elementary; and the independence of distinct kernels modulo elementary functions, which Theorem~\ref{thm:independence} certifies instance by instance and which, in general, would make the minimality of Proposition~\ref{prop:elementaryfirst}(E4) absolute. With (a)--(c) an inconsistent final system with proved bounds becomes a certificate of non-membership, as the holomorphic remainder of Part~II is for elementary functions; \S\ref{ex:certificate}, where every bound is now proved, is the test case, and (c) is the only missing ingredient there.
\item[(d)] \emph{Pencils}: that every non-constant point of a constant elliptic curve over $\Fbar K_0(y)$ factors through a pencil, and an algorithm for the multivariate case; the implementation detects pencils in one generator.
\item[(e)] \emph{The dilogarithm}: we conjecture that adjoining the logarithms of the primes at which the residue is affine in a logarithmic generator, and allowing a rational part of degree two in them, reduces the dilogarithm terms to solutions of the unit equation $u+(1-u)=1$ in the group generated by those primes and the arguments of the logarithms --- an $S$-unit problem of the kind Parts~I--II already solve. For $\log(x)/(x+1)$: $T=\log(x+1)$, $u=-x$, and $\int=\log x\log(1+x)+\Li_2(-x)$.
\item[(f)] \emph{Bounds at the zeros of exponential elements}: by Proposition~\ref{prop:kerexp} the places still left to the guess are, in our towers, those where an exponential element is constant to first order; deciding them is a Risch equation in the residue field, which the series declines to solve, and the reduction systems of \cite{DuRaab25} are the natural tool should one wish to.
\item[(g)] \emph{Deciding torsion over number fields.} The non-torsion search terminates: for a non-torsion class the reduction orders at degree-one primes are unbounded, and Chebotarev supplies the primes. The torsion side should use the order $N_0$ that the compatible reductions predict, rather than the fixed search bound of Part~II's realisation, and then test $N_0\Delta$ for principality over $K$. Together the two would make the residue analysis a decision procedure for every residue divisor on a genus-one pencil, as Risch's and Trager's theory requires.
\item[(h)] \emph{Several third-kind kernels.} Theorem~\ref{thm:independence}(b) allows one third-kind kernel. Several need a joint certificate: that no nontrivial $\Z$-combination of the component classes is torsion, which the relation lattices of the reductions can certify.
\item[(i)] \emph{Algebraic coefficients.} An integrand with algebraic coefficients, such as $1/\bigl((x-1-\sqrt2)\sqrt{x^3-x}\bigr)$, is correct in principle but exceeds $100$ seconds in Part~II's arithmetic over $\Q(\sqrt2)$.
\item[(j)] \emph{Cherry's decision procedures.} On transcendental elementary towers, where (H1) and (H2) hold, two things would make Algorithm~S7 a decision procedure for Cherry's classes and for $\GQ$. The first is (c): the completeness of the residue rule and the silent-zero rule for $\Ei$ and $\li$, and of the Kummer search, with the joint monomials of \S\ref{ex:cherry}, for $\erf$. The second is proved bounds at the chain places of the Gaussians, a leading-part analysis of higher order than Theorem~\ref{thm:extra} at $t=\infty$ in $\Q(x,t,e^{-t^2})$. The final inconsistency of the linear system would then be a certificate of non-membership, and $\int\log(x)e^{-x^2}\,dx$ would be decided rather than failed.
\item[(k)] \emph{Presentation}: Fresnel forms for the $\Gamma(\tfrac12,\sqrt{-1}\,\cdot)$ pairs, Legendre forms on quartic and genus-$\ge2$ pencils with elliptic factors, and Jacobi's theta form of the third kind.
\end{enumerate}

\section*{Data availability}
The SymPy implementation used for every session in Section~\ref{sec:examples} (\texttt{parallel\_mixed\_special.py}, built on the implementation of Part~II --- \texttt{parallel\_mixed.py}, \texttt{logrewrite.py}, \texttt{pell.py} and \texttt{build\_tower.py}, the last with the structure-theorem test of \S\ref{sec:functions} --- together with \texttt{examples\_special.py}, which reruns every example and reports the outcome claimed for it, \texttt{bounds\_special.py}, which reproduces the statistics of Section~\ref{sec:bounds}, \texttt{elementary\_check.py}, which reruns the tests of Section~\ref{sec:noneed}, \texttt{partial\_check.py}, which reruns the test of Section~\ref{sec:partial}, and \texttt{stress\_special.py}, a corpus of $251$ integrals labelled with their mathematically expected outcomes) is provided as supplementary material.

\end{document}